\newif\ifprocs
\newif\ifarxiv
\newif\ifcomments
\def\compactify{\itemsep=0pt \topsep=0pt \partopsep=0pt \parsep=0pt}
\newcommand{\ProblemName}[1]{\textsf{#1}}
\newcommand{\MF}{\ProblemName{Max-Flow}\xspace}
\newcommand{\MC}{\ProblemName{Min-Cut}\xspace}
\newcommand{\GMC}{\ProblemName{Global-Min-Cut}\xspace}
\newcommand{\EC}{\ProblemName{Edge-Connectivity}\xspace}
\newcommand{\stMC}{\ProblemName{st-Min-Cut}\xspace}
\newcommand{\LMC}{\ProblemName{Latest Min-Cut}\xspace}
\newcommand{\LMwC}{\ProblemName{Latest Min$_{2w}$-Cut}\xspace}
\newcommand{\expanders}{\textsc{Expanders-Guided Querying}\xspace}
\newcommand{\pC}{\textsc{Isolating-Cuts}\xspace}
\newcommand{\SSMF}{\ProblemName{Single-Source Max-Flow}\xspace}
\newcommand{\APMF}{\ProblemName{All-Pairs Max-Flow}\xspace}
\newcommand{\APEC}{\ProblemName{All-Pairs Edge Connectivity}\xspace}
\newcommand{\APSP}{\ProblemName{All-Pairs Shortest-Path}\xspace}
\newcommand{\SPath}{\ProblemName{Shortest-Path}\xspace}
\newcommand{\APR}{\ProblemName{All-Pairs Reachability}\xspace}
\newcommand{\STMF}{\ProblemName{ST-Max-Flow}\xspace}
\newcommand{\GMF}{\ProblemName{Global Max-Flow}\xspace}
\newcommand{\MFV}{\ProblemName{Max-Flow}\xspace}
\newcommand{\GHU}{\ProblemName{Gomory-Hu}\xspace}
\newcommand{\lucky}{isolated\xspace}
\newcommand{\GHT}{Gomory-Hu tree\xspace}
\newcommand{\TOV}{\ProblemName{$3$OV}\xspace}
\newcommand{\CAG}{\ProblemName{CAG}\xspace}
\newcommand{\LCA}{\ProblemName{LCA}\xspace}
\newcommand{\GHEPT}{\ProblemName{GH-Equivalent Partition Tree}\xspace}
\newcommand{\PT}{\ProblemName{Partition Tree}\xspace}
\newcommand{\dem}{\mathbf{d}\xspace}
\newcommand{\CAGs}{{\CAG}s\xspace}
\newcommand{\AMC}{\ProblemName{Approx-Min-Cut}\xspace}
\DeclareMathOperator{\SOL}{SOL}
\DeclareMathOperator{\dist}{dist}
\DeclareMathOperator{\wdeg}{wdeg}
\DeclareMathOperator{\TTIME}{TIME}
\DeclareMathOperator{\FirstVisit}{FirstVisit}
\DeclareMathOperator{\LastVisit}{LastVisit} 
\DeclareMathOperator{\parent}{parent} 
\DeclareMathOperator{\lbl}{label} 
\DeclareMathOperator{\private}{private} 
\DeclareMathOperator{\vol}{vol} 
\DeclareMathOperator{\size}{size}
\let\poly\relax
\DeclareMathOperator{\poly}{poly} 
\newcommand{\OutputLen}{\text{output}}
\newcommand{\MMOD}[1]{\ (\mathrm{mod}\ #1)}
\newcommand\eps{\varepsilon}
\renewcommand\epsilon{\varepsilon}
\newcommand\tO{\ensuremath{\tilde O}}
\newcommand{\Madry}{M{\k{a}}dry\xspace}
\newcommand{\Raecke}{R\"{a}cke\xspace} 
\newcommand{\calF}{\mathcal{F}}
\newcommand{\T}{\mathcal{T}}
\newcommand{\TG}{\mathcal{T}^*} 
\DeclareMathOperator{\st}{st}
\DeclareMathOperator{\fn}{fn}
\providecommand{\set}[1]{{\{#1\}}}
\providecommand{\card}[1]{\lvert#1\rvert}
\providecommand{\minn}[1]{\min\set{#1}}
\providecommand{\maxx}[1]{\max\set{#1}}
\begin{document}

\newcommand\relatedversion{}

\title{\Large Friendly Cut Sparsifiers and Faster Gomory-Hu Trees\relatedversion}

\author{Amir Abboud%
  \thanks{Work partially supported by an Alon scholarship and a research grant from the Center for New Scientists at the Weizmann Institute of Science.
    Email: \texttt{amir.abboud@weizmann.ac.il}
  }
  \\ Weizmann Institute of Science
  \\ Israel  
  \and Robert Krauthgamer%
  \thanks{Work partially supported by ONR Award N00014-18-1-2364,
    the Israel Science Foundation grant \#1086/18,
    and a Minerva Foundation grant.
    Email: \texttt{robert.krauthgamer@weizmann.ac.il}
  }
  \\ Weizmann Institute of Science
  \\ Israel
  \and Ohad Trabelsi%
  \thanks{Work partially done at Weizmann Institute of Science and partially supported by the NSF Grant CCF-1815316.
    Email: \texttt{ohadt@umich.edu}
  }
  \\University of Michigan
  \\USA
}

\date{}

\maketitle


\fancyfoot[R]{\scriptsize{Copyright \textcopyright\ 2022 by SIAM\\
Unauthorized reproduction of this article is prohibited}}





\else 

\documentclass[11pt,USenglish]{article}
\usepackage[margin=1.0in]{geometry}
\usepackage[T1]{fontenc}
\usepackage[utf8]{inputenc}
\usepackage[english]{babel}
\usepackage{authblk}
\usepackage{chngcntr}
\counterwithin{equation}{section}
\usepackage{amssymb}

\usepackage[normalem]{ulem}
\usepackage{enumerate}
\usepackage{mwe}
\usepackage{caption}
\usepackage{xcolor}
\usepackage{algorithm}
\usepackage[noend]{algpseudocode}
\renewcommand{\algorithmicrequire}{\textbf{Input:}} 
\renewcommand{\algorithmicensure}{\textbf{Output:}} 

\usepackage{amsmath}
\usepackage{graphicx}

\ifarxiv\else
\graphicspath{{Figures/}}
\fi 

\usepackage[export]{adjustbox}
\usepackage{thmtools}
\usepackage{amsthm}
\usepackage{thm-restate}
\usepackage{xspace}

\usepackage[colorlinks,linkcolor=blue,citecolor=blue,filecolor=blue,urlcolor=blue]{hyperref}

\newtheorem{infthm}{Informal Theorem}
\newtheorem*{restatement}{Restatement of }
\newtheorem{oq}{Open Question}

\newcommand{\defproblem}[2]{
\bigskip
\begin{center}
\noindent\fbox{
	\begin{minipage}{.96\linewidth}
	\textbf{#1}
	
	\smallskip
		{#2}
	\end{minipage}
}
\end{center}
\medskip
}

\newtheorem{theorem}{Theorem}[section]
\newtheorem{lemma}[theorem]{Lemma}
\newtheorem{assumption}[theorem]{Assumption}
\newtheorem{definition}[theorem]{Definition}
\newtheorem{open}[theorem]{Open Problem}
\newtheorem{openq}[theorem]{Open Question}
\newtheorem{observation}[theorem]{Observation}
\newtheorem{fact}[theorem]{Fact}
\newtheorem{corollary}[theorem]{Corollary}
\newtheorem{hypothesis}[theorem]{Hypothesis}
\newtheorem{proposition}[theorem]{Proposition}

\theoremstyle{plain}
\newtheorem{claim}[theorem]{Claim}
\newtheorem{conjecture}[theorem]{Conjecture}

\makeatletter
\newtheorem*{rep@theorem}{\rep@title}
\newcommand{\newreptheorem}[2]{%
\newenvironment{rep#1}[1]{%
 \def\rep@title{#2 \ref{##1}}%
 \begin{rep@theorem}}%
 {\end{rep@theorem}}}
\makeatother
\newreptheorem{theorem}{Theorem}

\makeatletter
\newtheorem*{rep@corollary}{\rep@title}
\newcommand{\newrepcorollary}[2]{%
\newenvironment{rep#1}[1]{%
 \def\rep@title{#2 \ref{##1}}%
 \begin{rep@corollary}}%
 {\end{rep@corollary}}}
\makeatother
\newreptheorem{corollary}{Corollary}

\newreptheorem{lemma}{Lemma}

\def\compactify{\itemsep=0pt \topsep=0pt \partopsep=0pt \parsep=0pt}

\newcommand{\colnote}[3]{\textcolor{#1}{$\ll$\textsf{#2}$\gg$}}

\ifcomments
\newcommand{\rnote}[1]{\colnote{red}{#1--Robi}{RK}}
\newcommand{\anote}[1]{\colnote{olive}{#1--Amir}{AA}}
\newcommand{\onote}[1]{\colnote{blue}{#1--Ohad}{OT}}
\else
\newcommand{\rnote}[1]{}
\newcommand{\anote}[1]{}
\newcommand{\onote}[1]{}
\fi 

\newcommand{\ProblemName}[1]{\textsf{#1}}
\newcommand{\MF}{\ProblemName{Max-Flow}\xspace}
\newcommand{\MC}{\ProblemName{Min-Cut}\xspace}
\newcommand{\GMC}{\ProblemName{Global-Min-Cut}\xspace}
\newcommand{\EC}{\ProblemName{Edge-Connectivity}\xspace}
\newcommand{\stMC}{\ProblemName{st-Min-Cut}\xspace}
\newcommand{\LMC}{\ProblemName{Latest Min-Cut}\xspace}
\newcommand{\LMwC}{\ProblemName{Latest Min$_{2w}$-Cut}\xspace}
\newcommand{\expanders}{\textsc{Expanders-Guided Querying}\xspace}
\newcommand{\pC}{\textsc{Isolating-Cuts}\xspace}

\newcommand{\SSMF}{\ProblemName{Single-Source Max-Flow}\xspace}
\newcommand{\APMF}{\ProblemName{All-Pairs Max-Flow}\xspace}
\newcommand{\APEC}{\ProblemName{All-Pairs Edge Connectivity}\xspace}
\newcommand{\APSP}{\ProblemName{All-Pairs Shortest-Path}\xspace}
\newcommand{\SPath}{\ProblemName{Shortest-Path}\xspace}
\newcommand{\APR}{\ProblemName{All-Pairs Reachability}\xspace}
\newcommand{\STMF}{\ProblemName{ST-Max-Flow}\xspace}
\newcommand{\GMF}{\ProblemName{Global Max-Flow}\xspace}
\newcommand{\MFV}{\ProblemName{Max-Flow}\xspace}

\newcommand{\GHU}{\ProblemName{Gomory-Hu}\xspace}
\newcommand{\lucky}{isolated\xspace}
\newcommand{\GHT}{Gomory-Hu tree\xspace}
\newcommand{\TOV}{\ProblemName{$3$OV}\xspace}
\newcommand{\CAG}{\ProblemName{CAG}\xspace}
\newcommand{\LCA}{\ProblemName{LCA}\xspace}
\newcommand{\GHEPT}{\ProblemName{GH-Equivalent Partition Tree}\xspace}
\newcommand{\PT}{\ProblemName{Partition Tree}\xspace}
\newcommand{\dem}{\mathbf{d}\xspace}

\newcommand{\CAGs}{{\CAG}s\xspace}
\newcommand{\AMC}{\ProblemName{Approx-Min-Cut}\xspace}

\DeclareMathOperator{\SOL}{SOL}
\DeclareMathOperator{\dist}{dist}
\DeclareMathOperator{\wdeg}{wdeg}
\DeclareMathOperator{\TTIME}{TIME}
\DeclareMathOperator{\FirstVisit}{FirstVisit}
\DeclareMathOperator{\LastVisit}{LastVisit} 
\DeclareMathOperator{\parent}{parent} 
\DeclareMathOperator{\lbl}{label} 
\DeclareMathOperator{\private}{private} 
\DeclareMathOperator{\vol}{vol} 
\DeclareMathOperator{\size}{size}

\let\poly\relax
\DeclareMathOperator{\poly}{poly} 

\newcommand{\OutputLen}{\text{output}}

\newcommand{\MMOD}[1]{\ (\mathrm{mod}\ #1)}

\newcommand\eps{\varepsilon}
\renewcommand\epsilon{\varepsilon}
\newcommand\tO{\ensuremath{\tilde O}}
\newcommand{\Madry}{M{\k{a}}dry\xspace}
\newcommand{\Raecke}{R\"{a}cke\xspace} 
\newcommand{\calF}{\mathcal{F}}
\newcommand{\T}{\mathcal{T}}
\newcommand{\TG}{\mathcal{T}^*} 
\DeclareMathOperator{\st}{st}
\DeclareMathOperator{\fn}{fn}
\providecommand{\set}[1]{{\{#1\}}}
\providecommand{\card}[1]{\lvert#1\rvert}
\providecommand{\minn}[1]{\min\set{#1}}
\providecommand{\maxx}[1]{\max\set{#1}}


\begin{document}

\title{Friendly Cut Sparsifiers and Faster Gomory-Hu Trees}
\date{}

\author[1]{Amir Abboud%
  \thanks{Work partially supported by an Alon scholarship and a research grant from the Center for New Scientists at the Weizmann Institute of Science.
  Email: \texttt{amir.abboud@weizmann.ac.il}
  }}
\author[2]{Robert Krauthgamer\thanks{Work partially supported by ONR Award N00014-18-1-2364,
    the Israel Science Foundation grant \#1086/18,
    and a Minerva Foundation grant.
    Email: \texttt{robert.krauthgamer@weizmann.ac.il}
  }}
\author[3]{Ohad Trabelsi%
  \thanks{Work partially done at Weizmann Institute of Science and partially supported by the NSF Grant CCF-1815316.
    Email: \texttt{ohadt@umich.edu}
  }}
\affil[1]{Weizmann Institute of Science, Israel}
\affil[2]{Weizmann Institute of Science, Israel}
\affil[3]{University of Michigan, USA}

\maketitle
\thispagestyle{empty}
\setcounter{page}{0}

\fi 

\begin{abstract}
We devise new cut sparsifiers that are related to 
the classical sparsification of Nagamochi and Ibaraki [Algorithmica, 1992],
which is an algorithm that, given an unweighted graph $G$ on $n$ nodes and a parameter $k$,
computes a subgraph with $O(nk)$ edges that preserves all cuts of value up to $k$.
We put forward the notion of a \emph{friendly cut sparsifier}, 
which is a minor of $G$ that preserves all friendly cuts of value up to $k$,
where a cut in $G$ is called \emph{friendly} if every node has more edges
connecting it to its own side of the cut than to the other side.
We present an algorithm that, given a simple graph $G$,
computes in almost-linear time 
a friendly cut sparsifier with $\tilde{O}(n \sqrt{k})$ edges.
Using similar techniques,
we also show how, given in addition a terminal set $T$,
one can compute in almost-linear time a \emph{terminal sparsifier},
which preserves the minimum $st$-cut between every pair of terminals,
with $\tilde{O}(n \sqrt{k} + |T| k)$ edges. 

Plugging these sparsifiers into the recent $n^{2+o(1)}$-time algorithms for constructing a Gomory-Hu tree of simple graphs,
along with a relatively simple procedure for handling the unfriendly minimum cuts,
we improve the running time for moderately dense graphs
(e.g., with $m=n^{1.75}$ edges).
In particular, assuming a linear-time Max-Flow algorithm,
the new state-of-the-art for Gomory-Hu tree is
the minimum between our $(m+n^{1.75})^{1+o(1)}$ and the known $m n^{1/2+o(1)}$.

We further investigate the limits of this approach and the possibility of better sparsification. 
Under the hypothesis that an $\tilde{O}(n)$-edge sparsifier that preserves all friendly \emph{minimum} $st$-cuts can be computed efficiently, our upper bound improves to $\tilde{O}(m+n^{1.5})$ which is the best possible without breaking the cubic barrier for constructing Gomory-Hu trees in non-simple graphs.
\end{abstract}


\section{Introduction}
\label{sec:intro}

This paper is on the rich and vibrant topic of graph sparsification, where the aim is to reduce the size of the graph, typically measured by the number of edges, while preserving the graph's properties as much as possible. 
This notion is appealing in computer science due to the gains in efficiency, across all metrics, both in theory and in practice, that come from working with smaller objects.
In particular, it is a critical step inside state-of-the-art algorithms for many fundamental problems.

One of the most influential results on sparsification, due to Nagamochi and Ibaraki~\cite{NI92}, essentially says that a set of $w$ maximally spanning forests of a graph preserves all cuts with $\leq w$ edges.
Its applications for speeding up algorithms are far-reaching and include problems like minimum cut \cite{SW97}, traveling salesman \cite{junger1995traveling}, disjoint paths \cite{KKR12}, and most importantly for this paper, cut-equivalent (aka Gomory-Hu) trees.

\begin{theorem}[Nagamochi-Ibaraki Sparsification~\cite{NI92}]
Given an unweighted graph $G$ on $n$ nodes and $m$ edges,
one can compute in $O(m)$-time
a subgraph with $O(nw)$ edges that preserves all cuts of value $\leq w$.
\end{theorem}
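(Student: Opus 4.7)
The plan is to construct the sparsifier as a union of $w$ edge-disjoint maximal spanning forests of $G$. Setting $G_1 := G$, I would iteratively compute, for $i = 1, 2, \ldots, w$, a maximal spanning forest $F_i$ of $G_i$ (a forest whose connected components coincide with those of $G_i$) and update $G_{i+1} := G_i \setminus F_i$. The output is $H := F_1 \cup F_2 \cup \cdots \cup F_w$. Since each $F_i$ is a forest on at most $n$ vertices, the edge-count bound $|E(H)| \leq w(n-1) = O(nw)$ is immediate.

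For correctness, I would show that every cut $(S, \bar{S})$ of $G$ with $c := |E_G(S, \bar{S})| \leq w$ is preserved edge-by-edge, i.e.\ $E_G(S,\bar{S}) \subseteq E(H)$. The key step is: whenever $G_i$ still contains a crossing edge $uv$ with $u \in S$ and $v \in \bar{S}$, the forest $F_i$ must also contain a crossing edge, because $u$ and $v$ lie in a common connected component of $G_i$ (and hence of $F_i$), so the unique $uv$-path in $F_i$ must use at least one edge of $E(S, \bar{S})$. Contrapositively, if some crossing edge of $G$ were missing from $H$, then it would survive in every $G_i$ for $i \leq w+1$, so each of $F_1, \ldots, F_w$ would contribute a distinct crossing edge to $H$, yielding $c \geq w + 1$ and contradicting $c \leq w$.

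The main obstacle is implementing the whole construction in $O(m)$ time, since a naive BFS/DFS per iteration costs $O(m)$ per $F_i$ and $O(mw)$ in total, which misses the target. I would replace the iterative recomputation by a single scan that produces, for every edge, a level label $\ell(e) \in \{1, 2, \ldots\}$ recording which $F_i$ receives it. This is the maximum-adjacency (``scan-first'') search of Nagamochi and Ibaraki: maintain for each unscanned vertex $v$ a count $q(v)$ of already-scanned neighbours, always scan the vertex with the largest $q(v)$, and assign each newly processed edge $uv$ the smallest forest index not yet blocked on the endpoint $v$ that is being scanned. The delicate parts are (i) proving that the resulting level classes are indeed edge-disjoint maximal spanning forests, so that the preservation argument above applies to them, and (ii) supporting the argmax priority in $O(1)$ amortized time via bucket lists keyed by an integer bounded by the current scan degree, which is what yields linear total time.
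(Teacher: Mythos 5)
The paper does not prove this theorem itself; it imports it verbatim from Nagamochi and Ibaraki~\cite{NI92} and explicitly describes the construction in the introduction as "a set of $w$ maximally spanning forests of a graph preserves all cuts with $\leq w$ edges." Your proposal correctly reconstructs exactly that argument — the edge-disjoint maximal spanning forests, the contrapositive cut-preservation argument counting one crossing edge per forest plus the surviving one, and the maximum-adjacency (scan-first) search with bucketed priorities to bring the time down from $O(mw)$ to $O(m)$ — so it takes the same approach as the cited source.
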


This sparsification tool is indispensable for the recent algorithms that break the longstanding cubic barrier for the classical \GHT problem.
Gomory and Hu \cite{GH61} discovered that every graph $G$ has a cut-equivalent (weighted) tree $T$ on the same set of nodes, such that for all $s,t \in V(G)$ the minimum $s,t$-cut in $T$ is also the minimum in $G$. 
Moreover, they showed that this tree can be computed using $n-1$ calls to a \MF algorithm, and a major open question since then has been to determine the time complexity of computing such a tree.
The quest for faster algorithms for \GHT is further motivated by the fact that it is a near-optimal data structure for answering minimum $s,t$-cut queries \cite{AKT20}.
Only last year, a subcubic \GHT algorithm for unweighted (simple) graphs
was discovered by Abboud, Krauthgamer, and Trabelsi \cite{AKT21_stoc}.
And then in three independent follow-up papers, by those same authors \cite{AKT21_focs}, by Li, Panigrahi, and Saranurak \cite{LPS21_focs}, and by Zhang \cite{Zhang21}, the bound was brought down from $n^{2.5+o(1)}$ to $n^{2+o(1)}$, which is clearly almost-optimal for dense graphs.%
\footnote{Zhang's algorithm has better running time $\tilde{O}(n^2)$, but requires a (hypothetical) $\tilde{O}(m)$-time \MF algorithm.}

At a very high level, the way that Nagamochi-Ibaraki sparsification fits into the \GHT algorithms is as follows.
These algorithms reduce the problem to $(n/w)^{1+o(1)}$ calls to a \MF algorithm on graphs where only cuts of value in the range $[w,2w]$ are of interest, for $w=2^0,2^1,\ldots,2^{\log n}$.
Instead of making these calls on the input graph, which might require $n/w \cdot \Omega(n^2) = \Omega(n^3)$ time, the algorithms operate on a Nagamochi-Ibaraki sparsifier that has only $O(nw)$ edges, resulting in a time bound of $(n/w)^{1+o(1)} \cdot (nw)^{1+o(1)} = n^{2+o(1)}$,
if \MF is solvable in almost-linear time (and it turns out that the existing algorithms, due to \cite{linearflow21}, are sufficient).

We aim to investigate the possibility of better sparsification methods for the \GHT problem and (hopefully) beyond.
While the most outstanding open question in this context is whether subcubic time is possible for \emph{weighted} graphs, the story for unweighted (simple) graphs remains unfinished:
\emph{What is the time complexity if the number of edges $m$ is taken into account?} 
The state of the art is $\min\{ n^{2+o(1)}, m n^{1/2+o(1)} \}$ \cite{LPS21_focs}, and the only lower bound other than $\Omega(m)$ is the observation in \cite{AKT21_focs} that $\Omega(n^{1.5})$ time is required, even when $m=n$,
unless the cubic barrier can be broken for multigraphs (which seems to be the main roadblock towards weighted graphs).
This leaves a gap of up to $\sqrt{n}$ and lets one hope for the conditionally-best-possible $O(m+n^{1.5})$ bound;
can we achieve it with better sparsifiers?



\paragraph{Better sparsification?}

Alas, it is easy to see that $\Omega(nw)$ is a lower bound on any sparsifier that preserves all cuts with $\leq w$ edges.
And even though we are only interested in \emph{minimum $s,t$-cuts} of value $\leq w$ rather than $\emph{all}$ cuts, the $\Omega(nw)$ barrier remains.
This would even be the case if the sparsifier is allowed to contract vertices (which makes sense if there is no cut of value $\leq w$ separating them) in addition to deleting edges.

\begin{definition}[Sparsifiers]
A \emph{sparsifier} of $G$ is a graph $H$ obtained from it by deleting edges and contracting subsets of the nodes. 
A \emph{contracted graph} is a special case of a sparsifier that is obtained only by contracting subsets of the nodes (and then removing any self-loops).%
\footnote{Recall that a \emph{minor} is obtained by deleting edges and contracting \emph{connected} subsets of nodes. This distinction is not relevant to our results, which extend immediately to minors. Indeed, the number of edges in our sparsifiers would not change if a contraction of a set $S$ would be replaced by contracting every connected subset of $S$. 
}
We say that $H$ is a \emph{$w$-cut sparsifier} of $G$ if it preserves all cuts with $\leq w$ edges in $G$, meaning that none of their edges is deleted and nodes from different sides of such a cut are not contracted together.
\end{definition}

We usually refer to the \emph{size} of a sparsifier as the number of edges in it, counting parallel edges separately.
For many applications it makes sense to combine parallel edges into one weighted edge and count it only once, but this will not be used in this paper.  

\begin{observation}[Optimality of Nagamochi-Ibaraki]
\label{obs:clique}
For every $n\geq w\geq 1$,
there exists a simple graph on $n$ nodes where every $w$-cut sparsifier must have $\Omega(nw)$ edges.
\end{observation}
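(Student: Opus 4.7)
The plan is to construct an explicit graph on $n$ nodes and then argue, essentially edge-by-edge, that no sparsifier can prune it. The candidate that comes to mind is a disjoint union of small cliques whose size is tuned to $w$: take $G$ to be the disjoint union of $\lfloor n/(w+1)\rfloor$ copies of $K_{w+1}$, padded with isolated vertices to reach exactly $n$ nodes. This is clearly a simple graph on $n$ nodes, and the total edge count of $G$ is already $\Theta(nw)$, so the task reduces to showing that a $w$-cut sparsifier of $G$ cannot save anything asymptotically.

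First I would observe that in any single copy of $K_{w+1}$, every singleton cut $\{u\}$ versus the remaining $w$ vertices has value exactly $w$, hence is a cut of value $\leq w$ that must be preserved. The definition of a sparsifier requires that none of its edges be deleted and that no two of its sides be merged by a contraction. Applied to all $w+1$ singleton cuts in a single $K_{w+1}$, this immediately forces (i) every edge of the clique to survive, since each edge $(u,v)$ lies in the cut $\{u\}$ versus the rest, and (ii) no two vertices of the same clique to be contracted together, since they lie on opposite sides of such a cut.

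Summing over the $\lfloor n/(w+1)\rfloor$ cliques yields at least $\lfloor n/(w+1)\rfloor \cdot \binom{w+1}{2} = \Omega(nw)$ edges in any $w$-cut sparsifier of $G$, which is the desired lower bound. The argument is essentially a matching upper-bound witness for the Nagamochi–Ibaraki theorem, so I would present it in that light.

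I don't expect any genuine obstacle; the only mildly delicate point is making sure the counting respects the exact definition of a sparsifier in the paper (edges are counted with multiplicity and parallel edges cannot arise here because the components are vertex-disjoint and no contractions are permitted inside a clique). A small comment on the degenerate regime where $w+1 > n$ (in which case a single $K_n$ already gives $\binom{n}{2} = \Omega(n \cdot w)$ under the constraint $w \le n$) would round off the proof.
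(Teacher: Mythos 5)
Your construction (a disjoint union of $K_{w+1}$'s, generalizing the single $K_n$ for $w=n$) and the singleton-cut argument are precisely the approach the paper sketches in its proof and the parenthetical remark about taking $n/w$ disjoint $w$-cliques. The proposal is correct and matches the paper's route.
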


\begin{proof}
For simplicity, consider $w=n$ and take a clique on $n$ nodes.
(The proof generalizes for all $w$ by taking $n/w$ disjoint $w$-cliques.) 
The minimum $s,t$-cut for any pair $s,t$ is $\{s\}$ or $\{ t \}$ and it has $n-1<w$ edges, so they must all be preserved.
Deleting any edge $\{u,v\}$ decreases the value of the minimum $u,v$-cut, and contracting it increases their minimum cut (to infinity).
\end{proof}

Fortunately, in many contexts, cliques and other structures that prevent us from beating the $O(nw)$ bound actually make the downstream problem easy.
Conceptually, our message is that better sparsification may be possible
if the goal is relaxed to preserving only the ``hard'' cuts.
In the context of \GHT (and perhaps elsewhere), the hard cuts are friendly cuts, as discussed next,
and our main point is that these can be preserved using $o(nw)$ edges.

\subsection{Friendly Cut Sparsifiers}
Given a cut, a node is called friendly if it has more neighbors to its own side than to the other side of the cut.
The definition can be generalized so that $\alpha$-friendly means that more than(or at least) an $\alpha$-fraction of the neighbors are on the same side,
for $\alpha$ that is not necessarily $1/2$.
The cut is then called $\alpha$-friendly if all nodes are $\alpha$-friendly. 
Throughout this paper, we use $\alpha=0.4$ for simplicity; our sparsification results will actually hold for any $\alpha$ but the application to \GHT algorithms will require a fixed $\alpha<0.5$.
Thus, for a cut to be unfriendly, there needs to be a node with $>60 \%$ of its edges going to the other side.
Throughout, for a node $v$ in a graph $G$ we denote by $\deg_G(v)$ the total degree of $v$ in $G$ counting multiple edges accordingly, and omitting the subscript when it is clear from the context.

\begin{definition}[Friendly and Unfriendly Cuts]
A cut $S$ in a graph $G=(V,E)$ is called \emph{unfriendly} if
there exists a node $s \in S$ such that $|E(\{s\},V\setminus S)| > 0.6 \deg(s)$ or there exists a node $t \notin S$ such that $|E(\{t\}, S)| > 0.6 \deg(t)$.
Otherwise, the cut is called \emph{friendly}.
\end{definition}

Questions involving friendly cuts are extensively studied in combinatorics (see \cite{BanLinial16,Ferber21} and references therein). They actually arise in various contexts, including social learning, set theory, and statistical physics, and under different names, such as \emph{internal partitions} \cite{BanLinial16}, \emph{satisfactory partitions} \cite{GerberKobler00,BTV06,GMT20}, and \emph{$q$-cohesive sets} \cite{Morris00}.



Unfriendly cuts can be viewed as a generalization of the \emph{trivial cuts} $(\{v\}, V \setminus \{v\})$, also known as \emph{singleton-} or \emph{degree-cuts}, that happen to be the minimum $s,t$-cuts in a clique.
Observe that a trivial cut is not $\alpha$-friendly for any $\alpha>0$.
So perhaps the lower bound of Observation~\ref{obs:clique} can be bypassed if one only wishes to preserve the friendly cuts?

\begin{definition}[Friendly Cut Sparsifiers]
A \emph{friendly $w$-cut sparsifier} of $G$ is a graph $H$
that preserves all the \emph{friendly} cuts with $\leq w$ edges in $G$.
\end{definition}

Indeed, the main result of this paper is a \emph{friendly cut sparsifier} giving a polynomial improvement over the Nagamochi-Ibaraki bound. The proof is in Section~\ref{sec:friendly}.

\begin{theorem}[Friendly Cut Sparsifiers]
\label{thm:friendly}
There is a randomized $m^{1+o(1)}$-time algorithm that, given a simple graph on $n$ nodes and $m$ edges and a parameter $w$,
produces a friendly $w$-cut sparsifier with $m'=O(n\sqrt{w} \log^{4}{n})$ edges.
There is also a deterministic $m^{1+o(1)}$-time algorithm achieving $m'=(n\sqrt{w})^{1+o(1)}$.
Furthermore, the sparsifier is a contracted graph (and a minor).
\end{theorem}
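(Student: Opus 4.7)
The plan is to combine an expander decomposition of $G$ with a per-cluster contraction procedure that exploits the friendliness slack to beat the Nagamochi--Ibaraki $O(nw)$ barrier. I would first apply a preprocessing step (Nagamochi--Ibaraki sparsification, or a contraction-based substitute so that the output remains a contracted graph) to reduce to an intermediate graph with $m = O(nw)$ edges that preserves all friendly $w$-cuts, and then compute a $\phi$-expander decomposition with $\phi = \tilde\Theta(1/\sqrt w)$, using a near-linear-time routine (randomized Saranurak--Wang, or its deterministic counterpart). The inter-cluster edges are retained and total $\tilde O(\phi\cdot m) = \tilde O(n\sqrt w)$.

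The key step is handling each expander cluster $C$. For any friendly $w$-cut $(S,\bar S)$ of $G$, let $A = S\cap C$ and $B = C\setminus A$; then $|E_C(A,B)|\leq w$, and by expansion $\min\{\vol_C(A),\vol_C(B)\}\leq w/\phi$. Crucially, friendliness further restricts which vertices can appear on the small side: every such vertex $v\in A$ retains at least $0.4\deg_G(v)$ of its edges on the $S$-side, and by a double-counting that amortizes against the already-paid inter-cluster edge budget one shows that every vertex of $C$ with intra-cluster degree above a threshold $D = \tilde O(\sqrt w)$ must always lie on the large side of every friendly $w$-cut of $G$. Contracting all such ``interior'' vertices within each cluster into a single supernode preserves every friendly $w$-cut, and the surviving intra-cluster edges---those incident to vertices of intra-cluster degree at most $D$---contribute $\tilde O(nD) = \tilde O(n\sqrt w)$ overall. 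Summing with the inter-cluster edges from the previous step yields the claimed bound.

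The main obstacle I anticipate is precisely this intra-cluster analysis: strengthening the conductance-only small-side volume bound $\tilde O(w\sqrt w)$ down to an effective $\tilde O(\sqrt w)$ bound on intra-cluster degrees seems to require the friendliness parameter $\alpha = 0.4$ (strictly less than $1/2$) together with a careful amortization that charges each small-side vertex's missing fraction of friendly edges simultaneously against the intra-cluster boundary budget $|E_C(A,B)|\le w$ and against the retained inter-cluster edges. The two running-time bounds follow by plugging in the randomized vs.\ deterministic near-linear-time expander decomposition, with the $n^{o(1)}$ slack in the latter explaining the $(n\sqrt w)^{1+o(1)}$ size in the deterministic variant; the output is a contracted graph by construction because, modulo the preprocessing, every operation is a cluster-wise contraction that leaves inter-cluster edges intact.
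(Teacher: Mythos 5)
Your high-level plan (expander decomposition with conductance $\phi\approx 1/\sqrt w$, retain inter-cluster edges, shave boundary vertices, contract the interiors) matches the skeleton of the paper's proof, and you correctly intuit that friendliness plus simplicity is what lets the argument go through. However, there is a genuine gap in the central step, and it is exactly the place where the paper has to work harder than ordinary conductance-based expander decomposition allows.

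The claim that ``every vertex of $C$ with intra-cluster degree above $D=\tilde O(\sqrt w)$ must always lie on the large side of every friendly $w$-cut'' is false for a $\phi$-conductance expander. Take $C$ to be two disjoint $K$-cliques $A_0, B_0$ with $K=w^{3/4}$ vertices each, connected by roughly $w/2$ crossing edges spread out evenly, and embed $C$ in $G$ with no external edges from $A_0$. The balanced cut has conductance $\approx (w/2)/K^2 = \Theta(1/\sqrt w)=\phi$, so $C$ can legitimately appear as one cluster of a $\phi$-expander decomposition. The cut $(A_0, V\setminus A_0)$ in $G$ has $\le w$ edges and is friendly (each $v\in A_0$ has $K-1$ neighbors on its own side and at most $K$ crossing, which is under the $0.6$ threshold for $K\ge 3$), yet every vertex of $A_0$ and of $B_0$ has intra-cluster degree $\ge K-1\gg\sqrt w$. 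Contracting all high-intra-degree vertices of $C$ into one supernode merges $A_0$ with $B_0$ and destroys this friendly cut. The underlying reason is that ordinary conductance only bounds the small side's \emph{volume} by $w/\phi = w^{3/2}$, which does not bound its \emph{node count} by anything close to $\sqrt w$, and it is the node count one needs to contradict via simplicity (a friendly, unshaved $x$ has $\ge 3\sqrt w$ distinct neighbors on its own side \emph{inside} the cluster). The paper fixes exactly this by decomposing with respect to a \emph{per-node demand} $\dem(v)=\phi^{-1}\sqrt w$ rather than degree (Theorem~\ref{thm:exp-dec-dem}), so the demand-based expansion certificate directly yields $|L|\le\sqrt w$ on the small side; in the randomized variant the same effect is achieved by attaching $\phi^{-1}\sqrt{w_j}$ self-loops to every node before running the standard (volume-based) decomposition of~\cite{SW19}, iterating over a geometric sequence of thresholds $w_j$. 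Note also that the demand-based decomposition would in fact refuse to output the cluster $C$ above as a single piece, so the counterexample never arises.

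A secondary point: Nagamochi--Ibaraki preprocessing removes edges and so the result is not a contracted graph or a minor of $G$; the theorem statement requires this. The paper avoids preprocessing entirely --- the deterministic algorithm decomposes $G$ directly, and the randomized algorithm's iterative scheme (starting from $w_0=(m/n)^2$, where the trivial bound $m\le n\sqrt{w_0}$ already holds) plays the role you wanted NI to play while keeping everything a contraction. You'd also need to shave vertices with large outgoing degree relative to their total degree (not just low intra-degree); your ``amortization against the inter-cluster budget'' is the right intuition for why this shaving is affordable, but it has to be an explicit shaving rule rather than an argument that such vertices can be safely contracted.
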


Technically, the result is very different from Nagamochi and Ibaraki's, and is based on an expander decomposition of the graph rather than on removing spanning forests.
Perhaps the most similar result of this kind is by Kawarabayashi and Thorup \cite{KThorup19}, who study deterministic algorithms for edge connectivity. 
They show that all non-trivial minimum cuts in a simple graph of minimum degree $\delta$ can be preserved by a contracted graph on $\tilde{O}(m/\delta)$ edges.
Our context is different because we do not have a lower bound on the minimum degree $\delta$ and we want to preserve cuts whose values is not necessarily close to the minimum; and for this reason we can only preserve cuts that are friendly rather than merely non-trivial. 

Before proceeding to discuss the limits of friendly cuts sparsifiers in Section~\ref{subsec:limits} let us motivate them further by presenting our main application.

%
%

\subsection{Application: Faster \GHT}

The question whether the time bound $\min\{ n^{2+o(1)}, m n^{1/2+o(1)} \}$ can be improved is particularly interesting from the perspective of \emph{fine-grained complexity} when comparing constructing a \GHT to two other basic graph problems : triangle and $4$-cycle detection.
In the regime of moderately dense graphs where $m=n^{1.5}$, it is often conjectured that the quadratic barrier cannot be broken for these problems \cite{YZ97,AbboudW14,DudekG19} (and reporting triangles in subquadratic time is $3$-SUM hard \cite{Pat10,KPP16,BPWZ14}).
An application of our friendly cut sparsifiers is a (conditional) separation: \GHT is subquadratic in this regime.

\begin{theorem}[Faster \GHT]
\label{main:alg}
There is a randomized algorithm that constructs a Gomory-Hu Tree of a simple graph on $n$ nodes and $m$ edges in time $(m+n^{1.9})^{1+o(1)}$.
Assuming an almost-linear time \MF algorithm for undirected graphs with polynomially bounded edge weights, the running time becomes $(m+n^{1.75})^{1+o(1)}$.
\end{theorem}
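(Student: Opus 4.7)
The plan is to invoke the friendly cut sparsifier of Theorem~\ref{thm:friendly} inside the recent $n^{2+o(1)}$-time \GHT algorithms for simple graphs (e.g., \cite{AKT21_focs,LPS21_focs,Zhang21}) in place of the Nagamochi--Ibaraki sparsifier that they currently use. Those algorithms reduce the task, at each of the $O(\log n)$ geometric scales $w\in\set{2^0,2^1,\ldots,2^{\log n}}$, to \MF computations that only need to find cuts of value in $[w,2w]$; so at each scale I would first build the friendly $w$-cut sparsifier in $m^{1+o(1)}$ time and route every \MF call in that scale through it. Since the sparsifier is a contracted graph, its cut values can only be larger than those in the original graph, and by Theorem~\ref{thm:friendly} it preserves every friendly cut of value $\le w$ exactly; hence whenever the sought minimum $st$-cut is friendly the \MF call on the sparsifier returns the correct value. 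Each such call now runs on $\tilde{O}(n\sqrt{w})$ edges rather than on Nagamochi--Ibaraki's $\tilde{O}(nw)$.

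The complementary ingredient is a direct subroutine for $(s,t)$-pairs whose minimum cut is \emph{unfriendly}. The key structural observation is that any unfriendly cut of value $\le 2w$ contains a witness vertex $v$ with $\deg(v)\le 2w/0.6=O(w)$, more than $0.6\,\deg(v)$ of whose incident edges cross the cut. I would therefore iterate over low-degree vertices as candidate witnesses and, for each, invoke a localized max-flow or isolating-cut computation to produce and certify its associated cut, and only then compare against the value returned by \MF on the sparsifier (keeping the smaller of the two). Combining the two phases with the almost-linear-time \MF algorithm of \cite{linearflow21}, the expected per-scale cost is roughly $\tilde{O}\bigl((n/w)\cdot n\sqrt{w}+\text{unfriendly}(w)\bigr)$; introducing a scale threshold $w^{*}$ above which the friendly sparsifier is used and below which a simpler approach (e.g., a direct \MF on a Nagamochi--Ibaraki sparsifier) handles everything, and balancing the two terms, should sum across scales to $(m+n^{1.75})^{1+o(1)}$. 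The slightly weaker $(m+n^{1.9})^{1+o(1)}$ bound follows along the same lines once one plugs in the currently available, slightly super-linear \MF bound instead of an almost-linear one.

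The step I expect to be hardest is the unfriendly-cut subroutine and its interface with the recursive \GHT framework. The same unfriendly cut can be witnessed by several low-degree vertices and can appear at several different scales, so the enumeration must avoid duplicate work; and the recursion demands that \emph{every} minimum $st$-cut of value $\le 2w$, friendly or not, be found and correctly inserted into the tree. The cleanest framing of correctness seems to be the observation that a max-flow call on the friendly sparsifier can only \emph{over}-estimate the true minimum $st$-cut value, and it does so precisely when that minimum cut is unfriendly; hence the unfriendly subroutine is needed exactly to detect and correct those overestimates. Formalizing this, bounding the amortized cost of the witness enumeration (so that the $\tilde{O}(nw)$-type charge holds even though a witness vertex may be relevant at many scales), and carefully tracking the interaction with the sparsifier's contractions across the levels of the \GHT recursion, will be the main technical hurdle.
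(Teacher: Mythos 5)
Your high-level plan is right and matches the paper's: replace Nagamochi--Ibaraki with the friendly sparsifier inside the scale-based \GHT recursion, observe that a \MF call on the contracted sparsifier can only overestimate, and supply a complementary subroutine for pairs whose minimum cut is unfriendly. However, the proposal leaves two of the proof's load-bearing components unresolved. First, the unfriendly-cut subroutine: your idea of enumerating low-degree witness vertices and running localized flows is not the route taken, and it is not clear it would give the needed single-source interface at the right cost. The paper instead proves a genuine single-source statement (Theorem~\ref{thm:nonfriendly}): if the minimum $p,v$-cut is unfriendly, then every other node on the same side (or, symmetrically, on $p$'s side) has min-cut value to $p$ at most $0.8\lambda_{p,v}$ (Lemmas~\ref{unfriendly1}--\ref{unfriendly2}); combined with Li--Panigrahi's $(1+\eps)$-approximate single-source min-cuts one can bucket nodes by estimate and apply the \pC procedure on each bucket, so that $v$ is the \emph{only} terminal from its cut-side and its minimum cut is isolated exactly. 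This gives all unfriendly cuts from a fixed source in $\tilde O(m)$ plus polylog \MF calls, which is what the recursion needs.

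Second, and more seriously, you flag but do not resolve the fact that the auxiliary graphs arising in the \GHT recursion are multigraphs, while Theorem~\ref{thm:friendly} only applies to simple graphs. Sparsifying each auxiliary graph directly would only get you to $n\sqrt{w}$ edges, not $n_i\sqrt{w}$, and the per-level cost would blow up. The paper's fix is to sparsify $G$ once per scale and then project that fixed sparsifier onto each auxiliary graph (the ``sparsified \CAG''), proving via Lemma~\ref{lem:total_edges_sparsifiers} that the total edge count across all sparsified auxiliary graphs at a given level is $O(|E(H_w)|)=\tilde O(n\sqrt{w})$. Without this amortization the time analysis does not close. Finally, your small-$w$ fallback of ``direct \MF on a Nagamochi--Ibaraki sparsifier'' gives $\tilde O((n/w)\cdot nw)=\tilde O(n^2)$ per scale, which is too slow; the paper instead builds a $w$-partial tree~\cite{BHKP07} on the friendly sparsifier, costing $\tilde O(m_w w)=\tilde O(nw^{1.5})$, and it is the balance $\min\{nw^{1.5},\,n^2/\sqrt{w}\}\le n^{1.75}$ (at $w=\sqrt{n}$) that yields the claimed exponent.
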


This theorem is achieved by taking the recent $n^{2+o(1)}$-time algorithms \cite{LPS21_focs,AKT21_focs,Zhang21} and replacing the Nagamochi-Ibaraki $w$-cut sparsifier with our new friendly $w$-cut sparsifier from Theorem~\ref{thm:friendly}.
Notably, expander-decompositions are now used \emph{twice} in the state-of-the-art algorithms: once for sparsification and once for reducing the number of queries.

But why is it enough to preserve the friendly cuts?
It is because there is an alternative, more efficient (and simpler) way to compute the unfriendly minimum $s,t$-cuts.
This is stated in the next algorithm, which relies on the recent \pC procedure \cite{LP20,AKT21_stoc} and works even for weighted graphs.
It solves a single-source version of \GHT, which is known to be essentially as hard as \GHT \cite{AKT20,LP21}.

\begin{theorem}[Single-Source Unfriendly Minimum-Cuts]
\label{thm:nonfriendly}
There is an algorithm that, given an undirected graph $G$ on $n$ nodes and $m$ edges with polynomially bounded weights, and a source node $p \in V$,
outputs for every $v \in V(G)\setminus \{ p\}$ a $p,v$-cut $S_v$ such that:
if there is a minimum $p,v$-cut that is unfriendly then the output $S_v$ is a minimum $p,v$-cut.
The running time is $\tilde{O}(m)$ plus the time of $\poly \log{n}$ calls to a \MF algorithm on $O(n)$-node $O(m)$-edge graphs.
\end{theorem}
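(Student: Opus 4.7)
Proof proposal.

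The plan is to combine a short structural analysis of unfriendly minimum cuts with repeated invocations of the \pC procedure of \cite{LP20,AKT21_stoc}.

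\textbf{Structural analysis.} First I would show that if a minimum $p,v$-cut $(S,\bar{S})$, with $p\in S$ and $v\in\bar{S}$, is unfriendly, then the witness vertex must be $p$ or $v$ itself. Indeed, if some $u\in S\setminus\{p\}$ were a witness, then $|E(\{u\},\bar{S})|>0.6\deg(u)$ and $|E(\{u\},S\setminus\{u\})|<0.4\deg(u)$, so moving $u$ to $\bar{S}$ changes the cut by strictly less than $-0.2\deg(u)<0$, contradicting minimality; the case $u\in\bar{S}\setminus\{v\}$ is symmetric. The same argument yields a ``small-side'' bound: every vertex on the witness's side other than the witness itself must contribute at least one cut edge (else moving it strictly decreases the cut), so in the witness-$v$ case $|\bar{S}|\le 0.4\deg(v)+1$, and symmetrically $|S|\le 0.4\deg(p)+1$ in the witness-$p$ case.

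\textbf{Algorithm via isolating cuts.} For each $j\in\{0,1,\ldots,\lceil\log n\rceil\}$, I would perform $O(\log n)$ rounds, each forming a random terminal set $R$ that contains $p$ and includes every other vertex independently with probability $2^{-j}$, and invoking \pC on $R$; this costs $O(\log n)$ max-flow calls per invocation and returns, for each $r\in R$, a minimum cut separating $r$ from $R\setminus\{r\}$. For every $v$ I keep the smallest $p,v$-cut ever found, initialized to the trivial cut $(V\setminus\{v\},\{v\})$. For correctness in the witness-$v$ case, let $k=|\bar{S}_v|\le 0.4\deg(v)+1$. At density $q=2^{-j}\approx 1/k$, with constant probability in a single round both $v\in R$ and $R\cap(\bar{S}_v\setminus\{v\})=\emptyset$; in this event, every isolating cut returned by \pC for $v$ is a $p,v$-cut (since $p\in R\setminus\{v\}$), and $(S_v,\bar{S}_v)$ is itself a feasible isolating cut achieving the minimum $p,v$-cut value, so the \pC output for $v$ equals that value. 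The witness-$p$ case is handled symmetrically via the isolating cut for $p$ and the bound on $|S_v|$.

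\textbf{Main obstacle.} The challenge is to keep the total number of max-flow calls within $\poly\log n$: the per-round success probability at the right density is only $\Theta(1/k)$, and $k$ can be as large as $\Theta(\deg(v))=\Theta(n)$, so a naive union bound over all $v$ would require $n^{1+o(1)}$ max-flow calls. I expect the resolution to require interleaving the \pC calls with an iterative mechanism that removes already-solved vertices from future samples (for instance by contracting them into the appropriate side), so that the remaining unsolved set shrinks geometrically across rounds and the amortized work across all $v$ and all degree-levels telescopes to $\poly\log n$ max-flow calls. Coordinating this amortization with the degree-level sampling above is, in my view, the most delicate part of the proof.
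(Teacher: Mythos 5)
Your structural observation that the witness of an unfriendly minimum $p,v$-cut must be $p$ or $v$ itself is correct and matches the intuition in the paper. However, your ``small-side'' bound is not: you claim that every vertex in $\bar{S}\setminus\{v\}$ must contribute a cut edge ``else moving it strictly decreases the cut,'' but moving a vertex with \emph{zero} cut edges to the other side turns its internal edges into cut edges, \emph{increasing} the cut value; it does not violate minimality. Concretely, if $v$ sends $60\%$ of its degree across the cut and the remaining $40\%$ into a large internal expander with no edges to $S$, then $|\bar{S}|$ can be arbitrarily large while $(S,\bar S)$ remains the unique minimum $p,v$-cut. Fortunately this bound is not essential to your algorithm, since you range $j$ over $O(\log n)$ scales anyway.

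The decisive gap is the one you already flag: at the correct sampling density $q\approx 1/k$, the per-round success probability is only $\Theta(1/k)$, so for a vertex with $k=\Theta(n)$ you would need $\Theta(n)$ rounds, hence $\Theta(n\log n)$ max-flow calls --- and you do not give a mechanism that brings this down to $\poly\log n$. The proposed fix (``remove already-solved vertices from future samples'') is not developed, and it is not clear it telescopes: the unsolved vertices at a given round are not controlled by what the isolating-cut calls happen to have certified. The paper avoids random sampling entirely. It first runs the $(1+\eps)$-approximate single-source min-cut algorithm of Li and Panigrahi (Theorem~\ref{thm:apx}, $\tilde O(m)$ time plus $\poly\log n$ max-flow calls) to obtain estimates $c'(v)$, and then proves the sharper structural fact (Lemmas~\ref{unfriendly1} and \ref{unfriendly2}) that if the minimum $p,v$-cut is unfriendly with witness $v$ (resp.\ $p$), then every other node on $v$'s side (resp.\ $p$'s side) has $\lambda_{p,u}\leq 0.8\,\lambda_{p,v}$. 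This gap of a constant factor survives the $(1+\eps)$-approximation, so thresholding the estimates at $O(\log n)$ geometric scales yields \emph{deterministic} terminal sets $T_i$ in which $v$ (or $p$) is isolated on its side, and a single \pC call per scale suffices. That separation lemma is the missing ingredient in your argument; without some analogue of it, the number of max-flow calls in the sampling approach is not polylogarithmic.
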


This algorithm for unfriendly cuts is given in Section~\ref{sec:unfriendly}.
A technical overview of the issues that arise when plugging the friendly sparsifier and this algorithm into the recent \GHT algorithms is given in Section~\ref{sec:GHalg}, and the actual details for proving Theorem~\ref{main:alg} are in Section~\ref{sec:GHalg}.

Could this approach lead us all the way to the conditionally optimal $m+n^{1.5}$ bound?
At a high-level, the $n^{1.75}$ upper bound follows from the following calculation. Each of the $n/w$ calls to \MF in the recent algorithms \cite{LPS21_focs,AKT21_focs,Zhang21} is performed on a sparsifier with $n \sqrt{w}$ edges, making the time bound $n^2/\sqrt{w}$.
An alternative method (better for small $w$) is to use a $w$-partial tree \cite{BHKP07} on the sparsifier, which has time bound $n w^{1.5}$.
Thus the worst-case of $\min\{n^2/\sqrt{w}, n w^{1.5}\}$ over all possible $w$ is $n^{1.75}$.
So all we need is a friendly $w$-cut sparsifier on $\tilde{O}(n)$ edges, for all $w$, as that would lead to a bound $\min\{n^2/w, n w\}=O(n^{1.5})$.
Is that possible?

\subsection{The Limits of Friendly Sparsification}
\label{subsec:limits}

Unfortunately, it is not hard to see that $n \sqrt{w}$ is also a lower bound for friendly $w$-cut sparsifiers.

\begin{observation}[Optimality of our Bound]
For every $n\geq w\geq 1$, there exists a simple graph on $n$ nodes
such that every friendly $w$-cut sparsifier must have $\Omega(n\sqrt{w})$ edges.
\end{observation}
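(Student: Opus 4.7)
The plan is to take $G$ to be a disjoint union of many small bipartite-clique gadgets and argue that every friendly $w$-cut sparsifier must retain all edges and refuse to contract any pair of nodes. Set $d := \lfloor\sqrt{w}\rfloor$ and let $G$ consist of $N := \lfloor n/(2d)\rfloor$ vertex-disjoint copies of $K_{d,d}$, together with any leftover nodes as isolated vertices; this gives $n$ nodes and $Nd^{2} = \Theta(n\sqrt{w})$ edges whenever $w$ is above a small absolute constant (for smaller $w$ the bound $\Omega(n\sqrt{w}) = \Omega(n)$ is achieved by a separate trivial construction such as an $n$-cycle, whose every edge is a friendly $1$-cut).

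First I would exhibit a rich family $\mathcal{F}$ of friendly $w$-cuts. For each copy $C$ with parts $A\cup B$, pick an integer $k \in [0.4d,\, 0.6d]$ (which exists once $d$ is a small enough constant, taking the interval to contain an integer) and any $A'\subseteq A,\ B'\subseteq B$ with $|A'|=|B'|=k$; take $S := A'\cup B'$ inside $C$ and place every other copy entirely on one chosen side of the cut. The cut value is $2k(d-k) \le d^{2}/2 \le w/2$. Inside $C$, each node has same-side fraction equal to either $k/d$ or $(d-k)/d$, both in $[0.4,0.6]$, so the friendly condition holds; nodes outside $C$ keep every neighbor in their own copy and therefore on their own side, so friendliness is immediate. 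Next I would show that no edge of $G$ can be deleted: for an edge $\{a,b\}$ inside $C$ with $a\in A,b\in B$, choose $A'\ni a$ and $B'\subseteq B\setminus\{b\}$ of size $k$; the resulting cut lies in $\mathcal{F}$ and $\{a,b\}$ crosses it, forcing the edge to survive.

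Finally I would argue that no two distinct nodes can be contracted. For a same-copy pair, a suitable choice of $(A',B')$ places them on opposite sides (three subcases: both in $A$, both in $B$, or one in each; each is feasible because both $k$ and $d-k$ are $\Omega(d)$). For a cross-copy pair $u\in C_i,\ v\in C_j$ with $i\ne j$, take any cut from $\mathcal{F}$ whose $C_i$-part contains $u$ and put $C_j$ on the opposite side. Pairs involving an isolated vertex are separated by the friendly $0$-cut consisting of that vertex alone. Since no edge is deleted and no pair of nodes is merged, the sparsifier contains all $Nd^{2} = \Omega(n\sqrt{w})$ edges of $G$. I expect the only real obstacle to be combinatorial bookkeeping: verifying in every gadget that the integer $k\in[0.4d,0.6d]$ exists, that the cut value is at most $w$ rather than just $O(w)$, and that the auxiliary subsets such as $A'\setminus\{a\}$ and $B'\subseteq B\setminus\{b\}$ are always large enough---all of which reduce to ensuring $d$ is at least a small absolute constant and handling the residual small-$w$ regime by the trivial construction.
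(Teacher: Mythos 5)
Your construction is correct and takes a genuinely different route from the paper's. The paper starts from a clique on $q$ nodes, replaces each node by a disjoint $10\sqrt{q}$-clique, and reattaches each original vertex's $q-1$ clique-edges so that no new node absorbs more than $\sqrt{q}$ of them; the original trivial cuts then become friendly cuts of size $\approx q$ (each new node has $10\sqrt{q}-1$ internal edges but at most $\sqrt{q}$ crossing ones), which forces all $\Omega(q^2)$ original edges to survive. Taking disjoint copies of this blown-up-clique gadget gives the general $n,w$ bound. You instead use $\Theta(n/\sqrt{w})$ disjoint copies of $K_{d,d}$ with $d=\lfloor\sqrt{w}\rfloor$, and the friendly cuts are the roughly balanced bipartitions of a single gadget with all other gadgets shipped wholesale to one side. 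Both constructions implement the same underlying idea --- small dense gadgets in which every near-balanced split is automatically friendly and has value $O(w)$ --- but your family of cuts is parameterized more cleanly, which makes the ``no edge can be deleted, no pair can be merged'' case analysis straightforward. One small bookkeeping point you flagged but should handle explicitly: the interval $[0.4d,0.6d]$ misses an integer not only for $d=1$ but also for $d=3$ (it is $[1.2,1.8]$), and in fact $K_{3,3}$ has no friendly cut of the required form at all since it is $3$-regular and any node with $2$ of its $3$ edges crossing is unfriendly; this corresponds to $w\le 15$, which your $n$-cycle fallback already covers because $\sqrt{w}=O(1)$ there, so the argument is complete once that case split is made explicit.
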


\begin{proof}
For simplicity, we consider only $w=n$ (the proof generalizes easily).
Take a clique on $n$ nodes and replace each node with a $10\sqrt{n}$-clique, connecting the original $n-1$ edges incident at a node, to arbitrary nodes in its $10\sqrt{n}$-clique, but such that no new node gets more than $\sqrt{n}$ original edges. The new graph has $n^{1.5}$ nodes and all of the $\Omega(n^2)$ edges of the original clique must remain in every friendly $w$-cut sparsifier: Each trivial cut of the original clique corresponds to a cut with $n-1$ edges that is friendly in the new graph, because each node has $10 \sqrt{n}-1$ edges to its side but only $\sqrt{n}$ edges across.
\end{proof}

However, the above proof leaves a bit of hope: the lower bound is because of friendly cuts that are not \emph{minimum $s,t$-cuts} for any pair $s,t$. Indeed, all minimum $s,t$-cuts in this construction are trivial with $\leq 11\sqrt{n}$ edges.
So a friendly \emph{minimum $s,t$-cut} sparsifier could contract the entire graph.

\begin{definition}[Friendly Minimum $s,t$-Cut Sparsifier]
\label{def:friendlymin}
We say that $H$ is a \emph{friendly minimum $s,t$-cut sparsifier} for $G$
if for every pair $s,t$ for which all the minimum $s,t$-cuts in $G$ are friendly,
at least one minimum $s,t$-cut is preserved in $H$.
\end{definition}

For the \GHT application, and perhaps others, this is all we need. 
Perhaps surprisingly, it turns out that the desired $\tilde{O}(n)$ bound can indeed be achieved, albeit we do not know how to achieve it in time that is faster than computing a \GHT.
We prove the following theorem in Section~\ref{sec:existensial},
using a structural analysis of the \GHT when the graph is simple. 

\begin{theorem}[Near-Linear Upper Bound]
\label{thm:existensial}
Every simple graph on $n$ nodes has a friendly minimum $s,t$-cut sparsifier with $O(n \log n)$ edges.
Moreover, it can be computed in linear time from a \GHT of the graph.
\end{theorem}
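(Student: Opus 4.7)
\medskip

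\noindent\textbf{Proof proposal.}

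The plan is to construct $H$ from the Gomory-Hu tree $T$ of $G$ by extracting a small sub-family of the tree-induced minimum cuts, leveraging the following structural lemma: if $(S,V\setminus S)$ is a friendly cut of value $w$ in a simple graph then $\min(|S|,|V\setminus S|) \ge \sqrt{2w/3}$. Indeed, the friendliness condition $|E(\{v\},V\setminus S)| \le 0.6\deg(v)$ summed over $v\in S$ gives $w \le 0.6\sum_{v\in S}\deg(v) = 0.6(2|E(G[S])|+w)$, so $|E(G[S])| \ge w/3$, and simplicity of $G$ then forces $\binom{|S|}{2} \ge w/3$ (and symmetrically for the other side).

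Root $T$ at an arbitrary vertex, so that each non-root $v$ corresponds to a minimum cut $C_v = (A_v,V\setminus A_v)$ of value $w_v$, with the $A_v$ forming a laminar family. For every pair $(s,t)$ the canonical minimum cut is $C_{v^\ast}$ where $v^\ast$ is determined by the minimum-weight edge on the $s$-to-$t$ path in $T$, and when the pair is hard (i.e.\ all its minimum cuts are friendly) this canonical cut is in particular friendly; so the job reduces to preserving, for every hard pair, the canonical cut or a substitute friendly minimum cut.

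The construction then applies a heavy-path decomposition of $T$: along each heavy path $P$, the associated cuts $\{C_v\}_{v\in P}$ form a nested chain of laminar sets, and I would use a Cartesian-tree-style recursion on the weights along $P$ to pick $O(\log |P|)$ cuts per heavy path whose union covers, for every pair of vertices on $P$, the minimum-weight edge on their sub-range. The sparsifier $H$ consists of every edge of $G$ that crosses some selected cut, together with contractions compatible with the selected laminar family. Correctness follows because the $s$-to-$t$ path in $T$ crosses $O(\log n)$ heavy paths, and on the heavy path containing $v^\ast$ the Cartesian-tree recursion guarantees that some minimum-weight cut in the relevant sub-range is picked, so a minimum $s,t$-cut is preserved in $H$.

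\emph{Main obstacle.} The hardest step will be the $O(n \log n)$ edge bound. The quadratic friendliness bound $w_v \le \frac{3}{2} s_v^2$ gives a per-cut estimate, but summing na\"ively over all selected cuts can exceed $n\log n$; the argument must exploit that within each heavy path the selected cuts are laminar-chained and therefore share the bulk of their crossing edges, and must amortize via the fact that every vertex of $T$ participates in $O(\log n)$ heavy paths, so each $G$-edge is counted at most $O(\log n)$ times. Handling pairs whose canonical cut in the GH tree is unfriendly but which may still admit other friendly minimum cuts (and vice versa) will require a careful case split to show that a substitute preserved min cut is available whenever one is needed.
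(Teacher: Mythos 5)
Your proposal does not work as stated, and the gap is exactly where you flag it. The central difficulty with the Cartesian-tree selection is that it simply cannot produce $O(\log|P|)$ cuts per heavy path that cover every range minimum: take $T$ to be a Hamiltonian path with all edge weights distinct. Then for each consecutive pair $(v_i,v_{i+1})$ the edge $e_i$ is the minimum of its singleton range, so any correct selection must include essentially every edge of the path. There is no sparse structure here to exploit---you need to preserve nearly all friendly tree-edges, not a logarithmic subset. So the question becomes: what is the total weight of the friendly tree-edges? That is the quantity your ``main obstacle'' paragraph is circling, and your per-cut lemma ($\min(|S|,|V\setminus S|)\ge\sqrt{2w/3}$) is too weak to bound it, because the ``large sides'' of nested cuts along a path can overlap almost entirely, so a per-cut count does not amortize.

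The paper's proof sidesteps the selection idea entirely. It contracts the connected components of $T$ that consist only of unfriendly edges, keeps all friendly tree-cuts, and then bounds $|E(H)|$ by the total weight of friendly tree-edges via a charging argument. The key structural claim is stronger than your lemma: roots $T$, fixes a friendly edge $e_1$ of weight $w$, and looks at the \emph{next $100$ friendly edges below it} on the path; it shows that one of those edges has a \emph{private} descendant set (descendants not shared with deeper friendly edges) of size $\Omega(w)$. This is what lets the charge amortize---the private sets are disjoint, so each node is charged $O(1)$ times per path. Your quadratic bound on $\min(|S|,|V\setminus S|)$ cannot substitute, because it says nothing about how the large sides of consecutive cuts nest. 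Finally, heavy-light decomposition (which you also invoke) gives the extra $O(\log n)$ factor because each node lies on $O(\log n)$ heavy paths. Your worry about ``pairs whose canonical cut is unfriendly but which may admit other friendly minimum cuts'' is moot under the paper's definition: a friendly minimum $s,t$-cut sparsifier only has obligations for pairs all of whose minimum cuts are friendly, so pairs with an unfriendly canonical tree-cut carry no requirement, and contracting the unfriendly components discards exactly and only such pairs.
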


We have thus reached a computational \emph{equivalence} between \GHT and friendly sparsification: faster algorithms exist for one if and only if they exist for the other.

\begin{theorem}[Computational Equivalence]
\label{thm:equiv}
Let $T(n,m)= \Omega(m+n^{1.5})$. 
An  $O(n \log n)$-edge friendly minimum $s,t$-cut sparsifier of a simple graph can be computed in time $T(n,m)\cdot n^{o(1)}$ if and only if a GHT can be computed in time $T(n,m)\cdot n^{o(1)}$.
\end{theorem}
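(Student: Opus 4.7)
The plan is to prove both implications of the equivalence, leveraging Theorem~\ref{thm:existensial} for one direction and the architecture of Theorem~\ref{main:alg} for the other.

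For the forward direction (a fast GHT algorithm yields a fast sparsifier algorithm), I would compute a GHT of the input graph $G$ using the assumed $T(n,m)\cdot n^{o(1)}$-time procedure and then invoke Theorem~\ref{thm:existensial}, which extracts an $O(n \log n)$-edge friendly minimum $s,t$-cut sparsifier from the GHT in linear additional time. Since $T(n,m) = \Omega(m + n^{1.5})$ absorbs any $O(m+n)$ term, the overall runtime remains $T(n,m)\cdot n^{o(1)}$.

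For the backward direction (a fast sparsifier algorithm yields a fast GHT algorithm), I would first invoke the assumed algorithm to produce a friendly minimum $s,t$-cut sparsifier $H$ with $m_H = O(n \log n)$ edges in time $T(n,m) \cdot n^{o(1)}$, and then build the GHT of $G$ in $n^{1.5+o(1)}$ additional time by following the proof of Theorem~\ref{main:alg} but plugging in $H$ in place of the $\tilde{O}(n\sqrt{w})$-edge sparsifier of Theorem~\ref{thm:friendly}. At each level $w$ the recent GHT algorithms of \cite{LPS21_focs,AKT21_focs,Zhang21} require either $(n/w)^{1+o(1)}$ max-flow queries on the sparsifier (cost $n^{2+o(1)}/w$ on $H$, with an almost-linear max-flow routine) or a $w$-partial tree computation on the sparsifier (cost $(nw)^{1+o(1)}$). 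Taking the minimum per level and summing over $O(\log n)$ levels gives $n^{1.5+o(1)}$ total, saving a factor of $n^{1/4}$ over the $n^{1.75+o(1)}$ bound of Theorem~\ref{main:alg}. Unfriendly minimum cuts are fielded separately by Theorem~\ref{thm:nonfriendly}, whose contribution is $\tilde{O}(m)$ per invocation and is amortized within the $T(n,m)\cdot n^{o(1)}$ budget.

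The main obstacle is in the backward direction: one must verify that the weaker friendly minimum $s,t$-cut sparsifier (which only guarantees that \emph{some} min cut is preserved, and only for pairs whose min cuts are all friendly in $G$) is indeed a valid drop-in replacement for the stronger friendly $w$-cut sparsifier used in the machinery of Theorem~\ref{main:alg}. I would argue, exactly as in Section~\ref{sec:GHalg}, that at each level $w$ the algorithm only queries min $s,t$-cuts whose values lie in $[w, 2w]$; for friendly such pairs the relevant cut is preserved in $H$, so max-flow on $H$ returns the correct value, while unfriendly pairs are redirected to the procedure of Theorem~\ref{thm:nonfriendly}. The same observation underlies the correctness of the $w$-partial tree subroutine, since it ultimately relies on the same min-cut queries. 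Apart from this correctness check, the bookkeeping mirrors that of Theorem~\ref{main:alg}, with the smaller sparsifier size being the sole source of the improved $n^{1.5+o(1)}$ bound.
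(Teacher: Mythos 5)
Your proposal mirrors the paper's own two-part argument: the forward direction is exactly the paper's remark at the end of Section~\ref{sec:existensial} (compute the GHT, then apply Theorem~\ref{thm:existensial} to extract the $O(n\log n)$-edge sparsifier in $O(m)$ time), and the backward direction repeats the one-line observation at the end of Section~\ref{sec:GHalg}, namely that substituting an $\tilde{O}(N)$-edge sparsifier for the $\tilde{O}(N\sqrt{w})$-edge one in the level-by-level bound $\sum_w (\text{sparsifier size})\cdot\min\{w,\,N/w\}\cdot N^{o(1)}$ replaces the $N^{1.75}$ max (at $w=\sqrt{N}$) with $N^{1.5}$. The drop-in-replacement subtlety you flag --- that Definition~\ref{def:friendlymin} only guarantees preservation of \emph{one} minimum $s,t$-cut per all-friendly pair, whereas the sparsified-\CAG lemma in Section~\ref{sec:GHalg} is stated for the stronger friendly $w$-cut sparsifier preserving \emph{every} friendly cut --- is a genuine point that the paper also leaves implicit; your proposed resolution (unfriendly cuts are caught by Theorem~\ref{thm:nonfriendly}, and the algorithm only needs values, not any particular witness) is the intended reading, though a fully careful proof would also need to argue that the preserved minimum $s,t$-cut survives the \CAG contractions, e.g.\ by observing that the particular sparsifier used is the one constructed in Theorem~\ref{thm:existensial}, which is derived from a GHT and hence consistent with the tree partitions encountered during the algorithm.
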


In other words, the only remaining question in the context of \GHT algorithms for simple graphs (without breaking the cubic barrier for multigraphs) is whether an $\tilde{O}(n)$-edge friendly minimum $s,t$-cut sparsifier can be computed in $(m+n^{1.5})^{1+o(1)}$ time.

\medskip

Finally, we ask whether the $\tilde{O}(n)$ upper bound can be obtained for (non-simple) multigraphs, and discover that there is an $\Omega(n^2)$ lower bound.

\begin{observation}
\label{obs:multigraphs}
For every $n$, there exists a multigraph on $n$ nodes and $O(n^2)$ edges,
for which every friendly minimum $s,t$-cut sparsifier has $\Omega(n^2)$ edges.
\end{observation}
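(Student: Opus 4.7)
The plan is to exhibit a multigraph $G$ on $n$ nodes whose friendly minimum $s,t$-cut structure already forces $\Omega(n^2)$ edges into every sparsifier. Set $k=\lfloor n/2 \rfloor$ and partition $2k$ of the nodes into $k$ two-element \emph{clusters} $C_i=\set{u_i,v_i}$ for $i=1,\dots,k$. Inside each cluster I place $M:=k$ parallel edges between $u_i$ and $v_i$, and between clusters I place a single edge $\set{u_i,u_j}$ for every pair $i\neq j$. The resulting multigraph has $kM+\binom{k}{2}=\Theta(k^2)=\Theta(n^2)$ edges.

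The central step is to show that for every cross-cluster pair $(u_i,u_j)$ with $i\neq j$, all minimum $(u_i,u_j)$-cuts are friendly and use only cross-cluster edges. Since each within-cluster pair has multiplicity $M=k$, any cut of value strictly less than $k$ must leave every cluster intact, so such a cut corresponds to a partition of the $k$ clusters. At the cluster level the graph is isomorphic to $K_k$, in which an $a$-versus-$(k-a)$ partition has $a(k-a)$ cut edges; minimizing over $1\le a\le k-1$ subject to $i$ and $j$ being separated yields value $k-1$, attained \emph{only} by $S=C_i$ and $S=V\setminus C_j$. Friendliness of both is then direct: for $S=C_i$, the node $u_i$ has $M=k$ same-side edges against $k-1$ cross edges (cross-fraction $(k-1)/(2k-1)<0.6$), $v_i$ has no cross edges at all, every other $u_\ell$ has only one cross edge out of a total degree $\ge 2k-1$, and every other $v_\ell$ has no cross edge; the analysis of $S=V\setminus C_j$ is symmetric.

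Now call a cluster $C_i$ \emph{preserved in $H$} if the cut $(C_i,V\setminus C_i)$ is preserved in the sense of Definition~\ref{def:friendlymin}, i.e.\ all $k-1$ cross edges incident to $u_i$ remain in $H$ and no node of $C_i$ is contracted with any node outside $C_i$. Because every pair $(u_i,u_j)$ has all of its minimum cuts friendly, the sparsifier must preserve one of them, and by the previous step that cut is either $C_i$ or $C_j$. Consequently, for every $i\neq j$ at least one of $C_i,C_j$ is preserved, so the set of unpreserved clusters is an independent set in $K_k$ and hence has size at most one. Therefore at least $k-1$ clusters are preserved, and the union of the cross edges they force keeps all $\binom{k}{2}=\Omega(n^2)$ cross-cluster edges in $H$. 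The main subtlety I anticipate is the calibration of $M$: it must be large enough (a short calculation shows $M\ge 2(k-1)/3$ suffices) so that $u_i$ is friendly in the cut $C_i$, yet the total edge count must still be $O(n^2)$; the choice $M=k$ simultaneously satisfies both requirements.
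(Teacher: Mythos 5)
Your proposal is correct, but it uses a genuinely different construction from the paper's. The paper takes a weighted $n$-cycle with alternating light ($\approx 0.4n$) and heavy ($\approx n$) parallel-edge weights, so that for every even $i$ the unique minimum $(v_i,v_{i+1})$-cut slices two light cycle edges, is friendly, and carries $\Theta(n)$ parallel edges; summing over $\Theta(n)$ such cuts gives $\Omega(n^2)$. You instead use $k \approx n/2$ heavy pairs whose representatives form a $K_k$; the friendly minimum cuts are the cluster-isolating cuts, and a clean independent-set argument in $K_k$ forces all $\binom{k}{2}$ cross edges to survive. Both are valid and comparable in difficulty: the paper's cycle is structurally simpler and hands you the $\Omega(n^2)$ count directly from the edge weights, while your clique-of-pairs makes the ``must preserve one of two candidate cuts'' structure very explicit and makes the friendliness check routine (every node has at most one or at most $k-1$ out of $\ge 2k-1$ edges crossing). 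One small point worth making explicit in your write-up: preserving a cut in the sense of Definition~\ref{def:friendlymin} requires retaining \emph{all} edges of that cut and forbidding any cross-contraction, so ``preserving $C_i$'' really does pin down all $k-1$ cross edges at $u_i$ — and since every cross edge is incident to some $u_\ell$ with $\ell$ in the preserved set (of size $\ge k-1$), all $\binom{k}{2}$ cross edges survive, as you state. Your calibration $M \ge \tfrac{2}{3}(k-1)$ for the friendliness threshold $0.6$ is also correct.
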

\begin{proof}
Take a cycle on $n$ nodes $v_1,\ldots,v_n$ where the weight of each edge (i.e. the number of parallel edges between the pair) is defined in the following alternating manner.
Set $w(v_i,v_{i+1})=\eps n$ if $i$ is even, and set $w(v_i,v_{i+1})= n$ if $i$ is odd, and finally set $w(v_1,v_n)=\eps n -1$. (The parameter $\eps$ can be chosen based on the friendliness parameter, in our case $\eps=0.4$ suffices.)
The (only) minimum $v_i,v_{i+1}$-cut, when $i$ is even, is the \emph{friendly} cut $(\{v_1,\ldots,v_{i}\},\{v_{i+1},\ldots,v_{n}\})$.
Any sparsifier must preserve all the edges of these $n/2$ cuts, whose total number of edges is $\geq n/2 \cdot \eps n= \Omega(n^2)$.
\end{proof}

Therefore, a different notion of sparsification seems to be required for breaking the cubic barrier for weighted graphs. Perhaps \emph{terminal sparsification}, discussed next.

\subsection{Terminal Sparsification}
\label{subsec:terminal}

The techniques of this paper also lead to new \emph{terminal} minimum $s,t$-cut sparsifiers.

\begin{definition}[Terminal Minimum $s,t$-Cut $w$-Sparsifier]
\label{def:terminal}
We say that $H$ is a \emph{terminal minimum $s,t$-cut $w$-sparsifier} of a graph $G$ and terminal set $T$ if it preserves all cuts with $\leq w$ edges that are a minimum $s,t$-cut for some pair of terminals $s,t \in T$ in $G$. 
\end{definition}

An ideal analogue of Nagamochi-Ibaraki sparsification would be a terminal minimum $s,t$-cut sparsifier with $O(|T|w)$ edges.
It is not hard to see that such a bound is existentially possible, even for multigraphs, and that it can be constructed from a \GHT.\footnote{Take the terminal \GHT, which is a cut-equivalent tree on super-nodes where each super-node contains exactly one terminal. Contract each super-node. Sparsify the resulting $|T|$-node graph with Nagamochi-Ibaraki sparsification to get $O(|T|k)$ edges.}
We show that in almost-linear time we can get a bound that is worse by an additive $+n\sqrt{w}$ term; improving on the $O(nw)$ Nagamochi-Ibaraki bound. The proof is in Section~\ref{sec:terminal}.

%

%
\begin{theorem}
\label{thm:terminals}
There is an $m^{1+o(1)}$ time algorithm that, given a simple graph on $n$ nodes, $m$ edges, and terminal set $T$, computes a terminal minimum $s,t$-cut $w$-sparsifier on $(n \sqrt{w} + |T| w )^{1+o(1)}$ edges.
\end{theorem}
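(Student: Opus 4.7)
First I would apply Theorem~\ref{thm:friendly} to $G$ to obtain a friendly $w$-cut sparsifier $H_1$ with $(n\sqrt{w})^{1+o(1)}$ edges in $m^{1+o(1)}$ time. This already preserves every friendly cut of value at most $w$ in $G$, so in particular it preserves a minimum cut for every terminal pair whose minimum $s,t$-cut is friendly. For the remaining pairs---those whose only minimum cuts are unfriendly---I would build a second part $H_2$ by a localized Nagamochi--Ibaraki-style construction around a small set of ``low-degree'' terminals, and output $H := H_1 \cup H_2$.

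The reduction to low-degree terminals comes from the following structural observation. If a minimum $s,t$-cut of value at most $w$ is unfriendly, then its unfriendly vertex must be $s$ or $t$: any other vertex $v$ with strictly more than half of its edges crossing could be flipped to the opposite side of the cut, yielding a strictly smaller $s,t$-cut and contradicting minimality. Hence the unfriendly endpoint has $>0.6\deg(\cdot)$ crossing edges, and the value bound forces its degree to be less than $5w/3$. Setting $T_\ell := \{t \in T : \deg_G(t) < 5w/3\}$, every unfriendly minimum $s,t$-cut between terminals has at least one endpoint in $T_\ell$.

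For each $t \in T_\ell$ I would include in $H_2$ all of the at most $5w/3$ edges incident to $t$---so that $t$ is not contracted and its degree cut is preserved---together with an $O(w)$-edge local structure around $t$ that realises the remaining minimum $s,t$-cuts. Concretely, the structural observation guarantees that on the $t$-side $S$ of any such cut, every non-endpoint vertex retains at least half of its edges inside $S$, so $S$ is a cluster that is locally $t$-connected with $\le w$ boundary edges. A Nagamochi--Ibaraki-style peeling rooted at $t$, truncated once the explored boundary has weight $O(w)$, then captures $S$ using only $O(w)$ edges per terminal. Summing gives $|H_2| = O(|T|\,w)$ and a total running time of $m^{1+o(1)}$, dominated by Theorem~\ref{thm:friendly}, so $H$ has $(n\sqrt{w}+|T|w)^{1+o(1)}$ edges as required.

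The main obstacle I expect is justifying the third paragraph: proving that the truncated local Nagamochi--Ibaraki construction around each $t \in T_\ell$ indeed (i) preserves some minimum $s,t$-cut for every partner $s \in V$, including partners that are ``far'' from $t$, and (ii) charges only $O(w)$ edges per terminal. This will require combining the classical forest-peeling analysis with the above structural observation, and using the simple-graph assumption to keep the size of the explored region bounded by $O(w)$ rather than $O(w^2)$.
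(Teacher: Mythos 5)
Your structural observation in the second paragraph is correct and is exactly the leverage the paper uses: for a minimum $s,t$-cut $S$ between terminals, any vertex $x\notin\{s,t\}$ must have at least half its degree inside its own side (otherwise flipping it improves the cut), so the only possibly ``unfriendly'' vertices are $s$ and $t$ themselves, and the one violating friendliness has degree $<w/0.6$. However, the paper does \emph{not} then build a second graph $H_2$ and take a union; it simply reruns the deterministic friendly-sparsifier construction of Section~\ref{sec:detfriendly} with a modified demand vector in Theorem~\ref{thm:exp-dec-dem}, setting $\dem(v)=\phi^{-1}\sqrt{w}$ on non-terminals and $\dem(v)=3\phi^{-1}w$ on terminals. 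This bumps the outer-edge bound to $B\phi\dem(V)=n^{o(1)}(n\sqrt{w}+|T|w)$, and the correctness proof of Claim~\ref{claim:monochromatic} splits into two cases exactly along your observation: if the offending node $x$ in $L$ is a terminal, $\dem(L)\geq 3\phi^{-1}w$ immediately; if not, then $x\neq s,t$ and minimality gives $|E(\{x\},S)|\geq 0.5\deg_G(x)$, which is even stronger than the $0.4\deg_G(x)$ the original claim used, so the expander argument goes through verbatim. The whole thing is one run of the same machine, not a composition of two sparsifiers.

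Your third paragraph is the real gap, and you were right to flag it. Two separate problems. First, $H:=H_1\cup H_2$ is not well-defined as a sparsifier when $H_1$ is a contracted graph and $H_2$ is an edge subgraph: if the union refines $H_1$'s partition down to the trivial one (since $H_2$ contracts nothing), then every edge that became a self-loop inside a contracted component of $H_1$ reappears, and the size bound collapses. You cannot ``add edges to un-contract''; you must prevent the contractions from happening in the first place, which is precisely what the demand modification inside the expander decomposition accomplishes. Second, even granting some sensible notion of union, the claim that a truncated Nagamochi--Ibaraki peeling rooted at a low-degree terminal $t$ realises, in $O(w)$ edges, \emph{all} the unfriendly minimum cuts with the other terminals, is unsubstantiated. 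The side $S$ containing $t$ can contain $\Theta(n)$ vertices and $\Theta(nw)$ interior edges (its only constraint is the boundary of $\le w$ edges), and different partners $s$ induce different nested sides, so a fixed $O(w)$-edge ``local ball'' around $t$ has no obvious reason to capture these cuts. The correct way to exploit your observation is to feed it into the existing expander-decomposition proof (as the paper does), not to run a separate forest-peeling pass.
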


We remark that it would have been possible to achieve our $(n^{1.75}+m)^{1+o(1)}$ upper bound for \GHT using this terminal sparsifier rather than the friendly sparsifier, although in a less elegant way (in our view).
Therefore, we expect the equivalence of Theorem~\ref{thm:equiv} to be extendible to terminal sparsification with $\tilde{O}(|T|w)$ edges as well. 
But can the $O(nw)$ bound be beaten algorithmically for \emph{multigraphs}?
 
%

\subsection{Related Work}
\label{sec:related}

We consider graph sparsification that preserves (certain) cut values. 
This topic has been extremely influential,
and perhaps the first such result is the work of Gomory and Hu~\cite{GH61},
because a \GHT is just a sparsifier for all minimum $st$-cuts in the graph.
This line of research has generated several other influential notions,
including for cut values up to a threshold \cite{NI92},
for directed rooted connectivity \cite{Gabow95}, 
for minimum cuts between subsets of terminals \cite{HKNR98}, 
for global minimum cut \cite{Karger99},
and for all cuts up to some approximation \cite{BeK15}.

More broadly, graph sparsification covers many other useful graph quantities,
including directed reachability~\cite{AGU72},
shortest-path distance \cite{PS89,KNZ14,Gupta01},
resistance distance (i.e., effective resistance) \cite{DKW15},
potential energy of the Laplacian (called spectral sparsification) \cite{ST11}, 
and multicommodity flows (including routings) \cite{Raecke02,Moitra09}.
There are some connections between these quantities,
e.g., spectral sparsification always preserves all the resistances and also all the cuts.

The literature usually makes a distinction between
edge sparsification (which decreases the number of edges by taking a subgraph,
possibly with reweighted edges),
and vertex sparsification (which decreases the number of vertices
by merging or removing vertices, which in some cases produces a minor).
In the latter context, the input graph usually comes with a set of terminals,
whose properties (distances, cuts, etc.) must be preserved. 

Graph sparsification (for cuts and in general) has many downstream applications.
The original motivation for many of the abovementioned results
is to speed up algorithms.
Other uses include reducing storage and/or communication requirement
in streaming and distributed settings,
or to improve the approximation factor (to depend on the number of terminals).
For a more empirical perspective, which addresses a range of objectives
under the names graph summarization and graph coarsening,
see the survey~\cite{LSDK18}. 

\section{Friendly Sparsification}
\label{sec:friendly}

This section proves Theorem~\ref{thm:friendly},
the main sparsification result of this paper. 
%
The main workhorse of our construction is an efficient procedure
that decomposes a graph into (node-disjoint) expanders,
such that the number of edges between these expanders is small.
We thus start with describing the relevant definitions and known algorithms 
in Section~\ref{sec:exp-decomp}.
We then present our deterministic algorithm, which is simpler, 
in Section~\ref{sec:detfriendly}. 
At a high level, its idea is very simple --
compute an expander decomposition and then contract the nodes of each expander. 
The intuition is that it is safe to contract the nodes of an expander,
because it should not be ``split'' by the low-weight cuts we are interested in.
However, the actual procedure must be refined by ``shaving''
some nodes from each expander before contracting it.
Moreover, we need a stronger version of the expander decomposition,
that can handle demands.

Our randomized algorithm,
which is faster and computes a slightly smaller sparsifier,
is presented in Section~\ref{sec:randalg}.
It is based on the same techniques
but uses a more elementary version of expander decomposition (without demands).
This limitation forces us to work in iterations that sparsify the graph gradually,
but the advantage is that this version admits a (randomized) algorithm
that is faster, replacing $n^{o(1)}$ terms with polylogarithmic factors.

\subsection{Preliminaries: Expander Decomposition}
\label{sec:exp-decomp}

We mostly follow notations and definition from~\cite{SW19}.
Let $G=(V,E)$ be an undirected graph with edge capacities. Define the \emph{volume} of $C\subseteq V$ as
$\vol_G(C) := \sum_{v\in C}\deg_G(v)$,
where the subscripts referring to the graph are omitted if clear from the context.
The \emph{conductance} of a cut $S$ in $G$ is $\Phi_G(S) := \frac{\delta(S)}{\min(\vol_G(S),\vol_G(V\setminus S))}$.
The \emph{expansion} of a graph $G$ is $\Phi_{G} := \min_{S\subset V}\Phi_G(S)$. 
If $G$ is a singleton then $\Phi_{G}:=1$ by convention. 
Let $G[S]$ be the subgraph induced by $S\subset V$, and let $G\{S\}$ denote the induced subgraph $G[S]$ but with an added self-loop $e=(v,v)$ for each edge $e'=\{ v,u \}$ where $v\in S,u\notin S$ (where each self-loop contributes $1$ to the degree of a node), so that every node in $S$ has the
same degree as its degree in $G$.
Observe that for all $S\subset V$, $\Phi_{G[S]}\ge\Phi_{G\{S\}}$, because the self-loops increase the volumes but not the values of cuts.
We say that a graph $G$ is a \emph{$\phi$-expander} if $\Phi_{G}\ge\phi$, and we call a partition $V=V_1 \sqcup\cdots\sqcup V_h$ 
a $\phi$-expander-decomposition if $\min_{i}\Phi_{G[V_{i}]}\ge\phi$.

\begin{theorem}[Theorem $1.2$ in~\cite{SW19}]
\label{thm:exp-dec}
Given a graph $G=(V,E)$ of $m$ edges and a parameter $\phi$, one can compute with high probability a partition $V=V_1 \sqcup\cdots\sqcup V_h$ such that $\min_i \Phi_{G[V_{i}]}\ge\phi$
and $\sum_{i}\delta(V_{i})=O(\phi m\log^{3}m)$.
In fact, the algorithm has a stronger guarantee that $\min_i \Phi_{G\{V_{i}\}}\ge\phi$.
The running time of the algorithm is $O(m\log^{4}m/\phi)$.
\end{theorem}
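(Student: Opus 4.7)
The plan is to prove Theorem~\ref{thm:exp-dec} via a recursive cut-or-certify paradigm. Starting from $G$, repeatedly attempt to find a cut $(S, V\setminus S)$ of conductance below $\phi$; if one exists, recurse on $G\{S\}$ and $G\{V\setminus S\}$ (keeping self-loops for cut edges so that degrees are preserved); otherwise the current piece is already a $\phi$-expander and we stop. The technical heart is a subroutine that, in near-linear time, either finds a reasonably balanced sparse cut or certifies that the remaining piece is a $\phi$-expander up to a $\operatorname{polylog}$ slack.

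For that subroutine, I would use the cut-matching game of Khandekar--Rao--Vazirani. Over $O(\log^2 m)$ rounds, a ``cut player'' proposes bipartitions of the vertex set and a ``matching player'' tries to route a perfect matching across each bipartition using a single-commodity max-flow computation on $G$ with capacities rescaled by $\phi$. If all matchings are routed with low congestion, then a $\phi$-expander embeds into $G$ with congestion $O(\log^2 m)$, which certifies $\Phi_G \geq \Omega(\phi/\log^2 m)$; a failing flow, on the other hand, has a witnessing min-cut of conductance $O(\phi)$ that is, moreover, reasonably balanced. Each round costs one max-flow, so the subroutine runs in $\tilde{O}(m/\phi)$ time.

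The main obstacle will be attaining the stronger guarantee $\Phi_{G\{V_i\}} \geq \phi$ --- expansion of the induced subgraph \emph{with self-loops} for boundary edges. A naive recursive cut may leave in $S$ a small pocket $T$ that is heavily connected to $V\setminus S$, ruining the conductance of $G\{S\}$ even though $G[S]$ looks fine. To handle this I would follow each cut with a trimming step: solve a local-flow instance that saturates the boundary edges as sources and locates, via the reachable residual set, a small $T\subseteq S$ whose removal restores $\Phi_{G\{S\setminus T\}}\geq \phi$. A standard analysis shows $\vol(T)=O(\delta(S)/\phi)$, so trimming only doubles the boundary asymptotically and preserves termination in near-linear time.

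To finish, I would charge $\sum_i\delta(V_i)=O(\phi m\log^3 m)$ by a recursion-tree argument. At every recursive split, the produced cut has at most $O(\phi\log^2 m)\cdot\min(\vol(A),\vol(B))$ edges: a $\phi$ factor from conductance, and a $\log^2 m$ factor from the cut-matching slack. Because each cut is balanced, any edge lies on the smaller side of at most $O(\log m)$ ancestor cuts in the recursion. Summing over all edges yields a total boundary of $O(\phi m \log^3 m)$. The claimed runtime $O(m\log^4 m/\phi)$ follows by summing the $\tilde{O}(m/\phi)$ cost of the cut-matching subroutine along the recursion tree, whose depth is $O(\log m)$ by the same balancedness. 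The most delicate part of making this rigorous will be the interaction between trimming and the balancedness guarantee, since trimming can make the two sides less balanced and thus jeopardize the depth bound; this can be resolved by only iterating trimming until the imbalance is within a constant factor, at the cost of an extra $\log$ absorbed into the stated bounds.
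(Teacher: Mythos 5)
This theorem is cited as a black box from Saranurak--Wang (SW19); the paper under review contains no proof of it, so there is no ``paper's proof'' to compare against. Evaluated on its own terms, your proposal draws on the right circle of ideas---cut-matching game, a trimming/pruning step, and a recursion-tree charging argument---but it misassembles them in a way that leaves a genuine gap.

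The central problem is your reliance on balance. You assert that a failing cut-matching game yields a sparse cut that is ``reasonably balanced,'' and your entire depth bound and charging scheme (``each cut is balanced, so an edge lies on the smaller side of at most $O(\log m)$ ancestor cuts'') depend on this. But the cut-matching game gives a trichotomy, not a dichotomy: it either (i) certifies $\phi$-expansion, (ii) finds a \emph{balanced} sparse cut, or (iii) finds a highly \emph{unbalanced} sparse cut whose smaller side can have arbitrarily small volume. Case~(iii) breaks your recursion-tree accounting: recursing on the large side of an unbalanced cut does not shrink the problem geometrically, so the depth (and hence the boundary-edge sum and runtime) are uncontrolled. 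Your concluding hand-wave about ``iterating trimming until the imbalance is within a constant factor'' does not repair this, because the imbalance originates in the cut-matching output, not in the trimming.

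Relatedly, you've placed trimming in the wrong role. In the Saranurak--Wang scheme, trimming (their ``expander pruning'' primitive) is the mechanism for \emph{handling the unbalanced case}: when the cut-matching game returns an unbalanced cut, it also guarantees that the large side is a ``near-$\phi$-expander,'' and pruning is then applied to carve off a small vertex set so that what remains is a genuine $\phi$-expander under the $G\{\cdot\}$ conductance with self-loops. The algorithm then recurses \emph{only} on the small side together with the pruned set (which has small volume by the pruning bound), which is precisely what controls the recursion depth. In your sketch, trimming is invoked as a generic post-processing step after \emph{every} cut to repair the $G[S]$-vs-$G\{S\}$ discrepancy; that use is both unnecessary in the balanced case (running the cut-matching game on $G\{S\}$ directly already handles the self-loops) and insufficient in the unbalanced case (it does not bound the depth). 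To close the gap you would need to state and use the trichotomous cut-matching primitive, show that the unbalanced branch produces a near-expander, and then bound the recursion by the fact that only the small side plus the pruned set (of volume $O(\delta(S)/\phi)$) is recursed on.
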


For our deterministic upper bound we will need a deterministic version of Theorem~\ref{thm:exp-dec}, which exists, albeit with worse bounds~\cite{CGLNPS20}.
If one is paying the extra bounds anyway, it is possible to introduce additional power to the decomposition by introducing \emph{demands on the nodes} (to be used instead of the degrees when computing the volume) which greatly simplifies the proof that uses them (as a black box). 
Suppose that we are also given a \emph{demand vector} $\dem\in \mathbb{R}_{\geq 0}^{V}$,
the graph $G=(V,E)$ is a \emph{$(\phi,\dem)$-expander} if for all subsets $S\subseteq V$, 
$\Phi_G^{\dem}(S) := \frac{\delta(S)}{\min(\dem(S),\dem(V\setminus S))} \geq \phi$. 
The following theorem statement from \cite[Theorem $III.8$]{LP20} gives a deterministic algorithm. 
It is actually proved in \cite[Corollary 2.5]{LS21},
by extending techniques from~\cite{CGLNPS20}. 

\begin{theorem}[\cite{LS21}]
\label{thm:exp-dec-dem}
Fix $\varepsilon>0$ and any parameter $\phi>0$.
Given an edge-weighted, undirected graph $G=(V,E,w)$
and a demand vector $\dem \in R_{\geq 0}^{V}$,
there is a deterministic algorithm running in time $O(m^{1+\varepsilon})$
that computes a partition $V=V_1 \sqcup\cdots\sqcup V_h$ such that
\begin{enumerate}
\item 
For each $i\in[h]$, define a demand vector $\dem_i\in \mathbb{R}^{V_i}_{\geq 0}$ given by $\dem_i(v)=\dem(v) + w(E({v}, V\setminus V_i))$ for all $v \in V_i$.
Then, the graph $G[V_i]$ is a $(\phi, \dem_i)$-expander.
\item 
The total weight of inter-cluster edges is
$w(E(V_1, \ldots, V_h)) = \sum_i w(E(V_i,V\setminus V_i)) \leq B\cdot \phi \dem(V)$
for $B = (\log n)^{O(1/\varepsilon^4)}$.
\end{enumerate}
\end{theorem}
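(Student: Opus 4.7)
This theorem is imported as a black box from \cite[Corollary 2.5]{LS21}, which extends the deterministic expander decomposition of \cite{CGLNPS20} to the demand-weighted setting, so my plan is essentially to cite and trust it. To sketch why the result is plausible, the high-level approach is the classical recursive sparse-cut paradigm, made both deterministic and demand-aware.

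First, I would maintain a partition $\mathcal{P}$ of $V$, initialized to $\{V\}$, and while some part $C\in\mathcal{P}$ fails to be a $(\phi,\dem_C)$-expander, find a cut of $G[C]$ of demand-conductance below $\phi$ and split $C$ at this cut. To bound the total inter-cluster weight I would charge the weight of each cut to the side of smaller demand volume, so that each unit of demand ``pays'' at most $\phi$ times at each of $O(\log n)$ recursion levels. The demand augmentation $\dem_i(v)=\dem(v)+w(E(\{v\},V\setminus V_i))$ is what makes the recursion internally consistent: when $C$ is split, the boundary weight of each side automatically becomes part of its demand, so the conductance potential telescopes cleanly across levels.

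The main technical obstacle is designing a deterministic $m^{1+\varepsilon}$-time sparse-cut-or-expander oracle. The plan would be a Khandekar--Rao--Vazirani-style cut-matching game, alternating between a cut player that proposes a balanced bipartition from the accumulated embedding and a matching player that attempts to route a demand-scaled multicommodity flow between the two sides; after $O(\log^2 n)$ rounds, either the flows succeed and certify an embedded expander, or one fails and exposes a sparse cut. Each matching round reduces to approximate max-flow on a demand-scaled graph, and this is where the Sherman-type deterministic flow machinery of \cite{CGLNPS20} is essential; demand-awareness throughout is handled by replacing node degrees with demand values in every conductance quantity.

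Finally, I would apply a \emph{trimming} post-processing step, since the oracle typically certifies expansion only on a core $A\subseteq C$ while the ``boundary'' $C\setminus A$ must be shed without blowing up the inter-cluster weight. Trimming peels off vertices of low demand-conductance by repeated short local-flow computations until the surviving core is genuinely a $(\phi,\dem)$-expander. The $(\log n)^{O(1/\varepsilon^4)}$ factor $B$ in clause~(2) emerges from the interaction of trimming with recursion depth, and this careful bookkeeping is the delicate piece I would defer to \cite{LS21} rather than reconstruct by hand.
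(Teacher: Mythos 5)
Your proposal takes essentially the same approach as the paper: Theorem~\ref{thm:exp-dec-dem} is imported as a black box from \cite[Corollary~2.5]{LS21}, which extends \cite{CGLNPS20} to the demand-weighted setting, and the paper likewise just cites it without reproving. Your plausibility sketch (recursive sparse-cut decomposition with demand-conductance, a deterministic cut-matching/flow oracle, and a trimming post-processing step) is consistent with how that cited result is obtained, but is not required here since the paper also treats it as an external tool.
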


\subsection{Deterministic Algorithm}
\label{sec:detfriendly}

\begin{proof}[Proof of Theorem~\ref{thm:friendly} (deterministic algorithm)]
Given a simple graph $G=(V,E)$ and a parameter $w$,
the algorithm first computes an expander decomposition of $G$ into expanders $H_1,\ldots,H_{\ell}$,
using Theorem~\ref{thm:exp-dec-dem} with parameter
$\phi=2^{-\log^{1/2}{n}}=n^{-o(1)}$ 
and demand function $\dem(v)=\phi^{-1}\sqrt{w}$ for all $v \in V$.
By setting the parameter $\epsilon=(\log n)^{-1/9}=o(1)$,
the running time is $m^{1+\epsilon}$
and the outer-edges depend on
$B=(\log n)^{O(1/\epsilon^4)} \leq O(\phi^{-1}/\log n) \leq n^{o(1)}$.
Then each expander $H_i$ is a $(\phi, \dem_i)$-expander
where $\dem_i(v)=\dem(v) + |E({v}, V\setminus H_i)|$ for all $v \in H_i$.
And since the total demand is $\dem(V) = n\cdot \phi^{-1}\sqrt{w}$,
the total number of outer-edges is
\[
  m_0
  := \sum_{i} |E(H_i, V \setminus H_i)|
  \leq B\cdot \phi \dem(V)
  =    B n\sqrt{w}
  \leq n^{1+o(1)} \sqrt{w}. 
\]

Second, the algorithm computes for each expander $H_i$ its \emph{shaved expander $H_i'$},
obtained by removing from $H_i$ (simultaneously) all nodes $v \in H_i$
that satisfy at least one of these two conditions:
\begin{itemize} \compactify
\item it has a low degree $\deg_G(v)< 10\sqrt{w}$; or 
\item more than $10\%$ of its degree goes outside of the expander,
  i.e., $|E( \{v\} , V\setminus H_i )| > 0.1 \deg_G(v)$.
\end{itemize}
Finally, the algorithm contracts every shaved expander $H_i'$,
and return the resulting contracted graph $G'$.

The running time is dominated by $m^{1+o(1)}$ the complexity of the expander decomposition procedure; the other operations take linear time.
The following two claims conclude the proof.

\begin{claim}
\label{claim:monochromatic}
Let $S \subseteq V$ be a friendly cut in $G$ that has weight $\delta(S) \leq w$.
Then $G'$ preserves this cut $S$ (i.e., never contracts two nodes that are on different sides).
\end{claim}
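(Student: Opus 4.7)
The plan is to argue by contradiction: assume some shaved expander $H_i'$ contains nodes from both sides of $S$, and derive an impossibility using the three ingredients the construction has given us --- the expander property of $H_i$ with demands $\dem_i$, the friendliness of the cut $S$, and the shaving conditions (high degree and small boundary into $V\setminus H_i$).

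First I would set $A' := H_i \cap S$ and observe that under our assumption both $A'$ and $H_i\setminus A'$ are non-empty, since $H_i' \subseteq H_i$ contains points of each side. By symmetry (replacing $S$ with $V\setminus S$ if necessary) I may assume $\dem_i(A') \le \dem_i(H_i \setminus A')$. Using that every edge of $E_{H_i}(A', H_i\setminus A')$ crosses $S$ in $G$, and the $(\phi,\dem_i)$-expansion of $H_i$ from Theorem~\ref{thm:exp-dec-dem}, I would write
\[
  w \;\ge\; \delta_G(S)
  \;\ge\; |E(A', H_i\setminus A')|
  \;\ge\; \phi\,\dem_i(A')
  \;\ge\; \phi \cdot |A'|\cdot \phi^{-1}\sqrt{w}
  \;=\; |A'|\sqrt{w},
\]
which yields the key size bound $|A'|\le\sqrt{w}$. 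Here I use $\dem_i(v)\ge \dem(v) = \phi^{-1}\sqrt{w}$.

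Next I would pick any node $v\in H_i'\cap S \subseteq A'$ (such a $v$ exists by our assumption) and lower-bound its number of neighbors inside $A'$. Friendliness gives $|E(\{v\},S)| \ge 0.4\deg_G(v)$, and the shaving conditions give $|E(\{v\},V\setminus H_i)|\le 0.1\deg_G(v)$ together with $\deg_G(v)\ge 10\sqrt{w}$, so
\[
  |E(\{v\},A')|
  \;=\; |E(\{v\},S)| - |E(\{v\},S\setminus H_i)|
  \;\ge\; 0.4\deg_G(v) - 0.1\deg_G(v)
  \;\ge\; 3\sqrt{w}.
\]
Because $G$ is simple, $v$ has at most $|A'|-1$ neighbors inside $A'$, forcing $|A'|>3\sqrt{w}$, which contradicts $|A'|\le\sqrt{w}$ from the previous paragraph. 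Hence no shaved expander straddles the cut $S$, and $G'$ preserves $S$.

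The one subtlety to watch is the WLOG step: if the symmetric case $\dem_i(H_i\setminus A') \le \dem_i(A')$ holds, then we instead pick $v\in H_i' \cap (V\setminus S)$ and repeat the same three-inequality argument with $S$ and $V\setminus S$ swapped (friendliness is symmetric in the two sides, so this is automatic). The main obstacle is really just tuning constants so that the $0.6$ threshold in the definition of friendliness, the $10\%$ shaving threshold, and the degree cutoff $10\sqrt{w}$ combine to give a strict gap between the $3\sqrt{w}$ lower bound on $|E(\{v\},A')|$ and the $\sqrt{w}$ upper bound on $|A'|$; the values chosen in the algorithm are precisely calibrated for this.
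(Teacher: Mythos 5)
Your proof is correct and takes essentially the same approach as the paper: projecting $S$ onto the expander $H_i$, using the $(\phi,\dem_i)$-expansion to cap the size of the small side, and contradicting it via friendliness plus the two shaving conditions applied to an unshaved node. The only cosmetic difference is that you pick the smaller-demand side via a WLOG step and argue about it alone, whereas the paper symmetrically lower-bounds both $\dem(L)$ and $\dem(R)$ and contradicts $\min\{\dem(L),\dem(R)\}\le\phi^{-1}w$; both are valid since friendliness is symmetric in the two sides.
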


\begin{proof}
Assume for contradiction that two nodes $x \in S, y \notin S$
are contracted into the same node in $G'$,
i.e., they belong to the same shaved expander $H_i'$.
Consider the projection of the cut $S$ on the expander $H_i$,
given by $L := H_i \cap S$ and $R := H_i \cap (V \setminus S) = H_i \setminus L$,
which are both non-empty because $x \in L$ and $y \in R$.
Now since $H_i$ is a $(\phi,\dem_i)$-expander, 
we must have
$\frac{|E(L,R)|}{\minn{\dem_i(L),\dem_i(R)} } \geq \phi$.
Clearly, $|E(L,R)| \leq |E(S,V\setminus S)| \leq w$,
and thus
\[
  \minn{\dem(L),\dem(R)}
  \leq \minn{\dem_i(L),\dem_i(R)}
  \leq \frac{|E(L,R)|}{\phi}
  \leq \phi^{-1} w .
\]

We will now show a lower bound on $\dem(L)$.
Since the cut $S$ is friendly,
$|E(\{x\},S)| \geq 0.4 \deg_G(x)$.
Moreover, since $x$ was not shaved,
we know that $\deg_G(x) \geq 10\sqrt{w}$
and that at most $10\%$ of its degree goes outside of the expander,
i.e., $|E(\{x\},V\setminus H_i)| \leq 0.1 \deg_G(x)$.
Combining these three inequalities, 
\[
  |E(\{x\},L)|
  = |E(\{x\},S\cap H_i)|
  \geq (0.4 - 0.1) \deg_G(x)
  \geq 3\sqrt{w}.
\]
Relying on the key property that the graph is simple,
the number of \emph{nodes} in this set $L$ must be $|L| \geq 3\sqrt{w}$.
By the definition of our demand function,
all nodes $v \in V$ have $\dem(v) = \phi^{-1}\sqrt{w}$,
and therefore $\dem(L) = \phi^{-1}\sqrt{w}\cdot |L| \geq 3\phi^{-1} w$.
The argument for $\dem(R)$ is symmetric,
and we arrive at $\minn{\dem(L),\dem(R)} \geq 3\phi^{-1} w$,
in contradiction to the above.
\end{proof}

\begin{claim}
\label{claim:edgebound}
The number of edges in $G'$ is $n^{1+o(1)}\sqrt{w}$.
\end{claim}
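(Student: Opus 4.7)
The plan is to classify every edge of $G'$ into two groups and bound each separately. Since $G'$ is obtained by contracting each shaved expander $H_i'$ (and removing self-loops), an edge $\{u,v\}$ of $G$ survives in $G'$ exactly when its endpoints lie in different super-nodes. This happens in two scenarios: either (i) $u$ and $v$ lie in distinct expanders $H_i$ and $H_j$, i.e., the edge is an outer-edge of the decomposition, or (ii) they lie in the same expander $H_i$ but at least one of them was shaved. Thus the edge count of $G'$ is at most
\[
  m_0 \;+\; \sum_{v\ \text{shaved}} \deg_G(v),
\]
where the second sum double-counts any edge both of whose endpoints are shaved, which only helps the upper bound.

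The first term has already been bounded in the text by $m_0 \leq B\cdot \phi\, \dem(V) = n^{1+o(1)}\sqrt{w}$ directly from Theorem~\ref{thm:exp-dec-dem}. The key step is to bound the total degree of the shaved vertices, and for this I would split them according to which of the two shaving criteria they triggered. Vertices shaved because $\deg_G(v) < 10\sqrt{w}$ contribute at most $10 n\sqrt{w}$ in total, since there are at most $n$ of them. Vertices shaved because $|E(\{v\}, V\setminus H_i)| > 0.1 \deg_G(v)$ satisfy $\deg_G(v) < 10\, |E(\{v\}, V\setminus H_i)|$, so summing this inequality over all such $v$ and recalling that $\sum_{v\in V}|E(\{v\}, V\setminus H_{i(v)})| = 2 m_0$ (each outer-edge contributes at most $2$) yields a contribution of at most $20\, m_0 \leq n^{1+o(1)}\sqrt{w}$.

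Adding the two bounds gives
\[
  |E(G')| \;\leq\; m_0 + 10n\sqrt{w} + 20 m_0 \;=\; n^{1+o(1)}\sqrt{w},
\]
as desired. There is no real obstacle here; the only care needed is in the accounting so as not to lose a polynomial factor. The whole point of the two shaving rules is that they are tailored precisely so that the degree budget spent on shaved vertices is paid for either by the explicit $n\sqrt{w}$ headroom (low-degree case) or by the outer-edge budget that has already been charged (high-external-degree case), making this claim essentially a bookkeeping consequence of the construction.
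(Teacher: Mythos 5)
Your proof is correct and takes essentially the same route as the paper: both classify the surviving edges into outer-edges of the decomposition, edges incident to low-degree shaved nodes, and edges incident to high-external-degree shaved nodes, and bound each contribution by $m_0$, $10n\sqrt{w}$, and $O(m_0)$ respectively. Your opening observation that a surviving edge must either be an outer-edge or have a shaved endpoint is a slightly more explicit framing of the same case split.
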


\begin{proof}
$G'$ has three kinds of edges, all originating from $G$: 
\begin{enumerate}
\item The outer-edges of the expander decomposition.
  Their number is $m_0 \leq B n\sqrt{w} \leq n^{1+o(1)} \sqrt{w}$. 

\item Edges incident to nodes that were shaved due to having low degree.
  Their number can be upper bounded by $n \cdot 10 \sqrt{w} = O(n\sqrt{w})$.

\item Edges incident to nodes that were shaved due to having more than $10\%$ of their degree going outside their expander.
  We upper bound the number of these edges by $20 m_0$ as follows.
  Let $X \subseteq V_{j-1}$ be the set of these shaved nodes,
  and let $d_{out}(v)$ be the number of edges from $v$ to nodes outside its expander. 
  Observe that $m_0 = 1/2 \cdot \sum_{v \in V} d_{out}(v)$,
  and by definition, every $v\in X$ satisfies
  $d_{out}(v) > 0.1 \deg_G(v)$.
  Altogether,
  \[
    \sum_{v \in X} \deg_G(v)
    \leq \sum_{v \in X} 10 \cdot d_{out}(v)
    \leq 10  \sum_{v \in V} d_{out}(v)
    = 20 m_0.
  \]
\end{enumerate}
Altogether, the total number of edges is
$O(m_0 + n\sqrt{w}) \leq n^{1+o(1)} \sqrt{w}$,
proving the claim.
\end{proof} 

This completes the proof of the deterministic algorithm
in Theorem~\ref{thm:friendly}. 
\end{proof}

The more efficient bound with polylogarithmic factors instead of $n^{o(1)}$, using randomized expander-decomposition algorithms, is given next.

\subsection{Randomized Algorithm}\label{sec:randalg}

Here, we prove the improved bound in Theorem~\ref{thm:friendly} that uses a randomized expander-decomposition algorithm.
The arguments are similar to those in the deterministic algorithm section above, but they are applied in a recursive manner.

\begin{proof}[Proof of Theorem~\ref{thm:friendly} (randomized algorithm)]
Given a simple graph $G=(V,E)$ and a parameter $w$,
the algorithm proceeds in iterations, 
where each iteration $j=0,1,2,\ldots$ produces 
a friendly $w_j$-cut sparsifier $G_j$ of $G$ with $m_j\leq K n\sqrt{w_j}$ edges,
for $w_j = 4^{-j}(m/n)^2$ and $K=\log^{O(1)} n$ to be determined later.
The iterations continue as long as $w_j\ge w$,
hence the last iteration $j$ satisfies $w_j < 4w$
and its sparsifier $G_j$ is reported as the final sparsifier. 
Iteration $0$ produces $G_0=G$ itself which is trivially a sparsifier.

Each iteration $j\ge 1$ uses the sparsifier $G_{j-1}=(V_{j-1},E_{j-1})$
produced in the previous iteration, as follows.
The first step (of iteration $j$) uses Theorem~\ref{thm:exp-dec}
with parameter $\phi=0.01/B$,
where $B=O(\log^3 n)$ denotes the factor in its out-edges gurantee.
It is used to decompose the graph $G_{j-1}$ into expanders $H_1,\ldots,H_{\ell}$,
but only after adding $\phi^{-1}\sqrt{w_j}$ self-loops to each node in $G_{j-1}$,
to increase its volume (but not counting them in the degree).

Then each expander $H_i$ is a $\phi$-expander,
and the total number of outer-edges is
\[
  m'_j
  := \sum_{i=1}^\ell |E_{G_{j-1}[V']}(H_i, V' \setminus H_i)|
  \leq B\cdot \phi (m_{j-1} + n \phi^{-1}\sqrt{w_j}). 
\]

The second step (of iteration $j$) is to compute
for each expander $H_i$ its \emph{shaved expander $H_i'$},
obtained by removing from $H_i$ (simultaneously) every node $v \in H_i$
that satisfies at least one of these two conditions:
\begin{itemize} \compactify
\item it has a low degree $\deg_{G_{j-1}}(v) < 10\sqrt{w_j}\cdot\size_G(v)$; or 
\item more than $10\%$ of its degree goes outside of the expander,
  i.e., $|E_{G_{j-1}}( \{v\} , V_{j-1}\setminus H_i )| > 0.1 \deg_{G_{j-1}}(v)$.
\end{itemize}

The third and final step (of iteration $j$) produces the sparsifier $G_j$
by contracting in the sparsifier $G_{j-1}$ every shaved expander $H_i'$,
and removing self-loops (but keeping parallel edges). 
The resulting $G_j$ is a contracted graph of $G$,
because a node in a shaved expander is itself a contraction of nodes from $V$.

The running time of each iteration $j$ is dominated by the complexity of the expander decomposition procedure,
which is $O(m_{j-1}\cdot B \phi^{-1}\log n) = O(m_{j-1} B^2\log n)$;
the other operations take linear time.
The number of iterations is $O(\log \frac{(m/n)^2}{w})=O(\log n)$,
and in fact $\sum_j m_{j-1} = O(m)$,
and therefore the overall running time is $\tO(m)$.

The correctness is proved in the following two claims.

\begin{claim}
Let $S \subseteq V$ be a friendly cut in $G$ that has weight $\delta_G(S) \leq w$.
Then every $G_j$ preserves this cut $S$ (i.e., never contracts two nodes that are on different sides).
\end{claim}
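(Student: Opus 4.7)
I plan to prove this by induction on $j$, with the base case $j=0$ being trivial since $G_0=G$. For the inductive step, assume $G_{j-1}$ preserves $S$, so $S$ descends to a valid cut $S_{j-1}$ in $G_{j-1}$, and since contractions do not alter the number of crossing edges, $|E_{G_{j-1}}(S_{j-1}, V_{j-1}\setminus S_{j-1})| = |E_G(S,V\setminus S)| \leq w \leq w_j$ (the last inequality because iteration $j$ is executed only when $w_j \geq w$). Suppose for contradiction that two super-nodes $x \in S_{j-1}$ and $y \in V_{j-1}\setminus S_{j-1}$ get merged in $G_j$. Then both survived the shaving step of iteration $j$ and belong to the same expander $H_i$, so $L := H_i \cap S_{j-1}$ and $R := H_i \setminus L$ are both nonempty.

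The remainder of the argument parallels Claim~\ref{claim:monochromatic}, but executed inside the contracted graph $G_{j-1}$. Let $\vol'$ denote the volume after the $\phi^{-1}\sqrt{w_j}$ self-loops are added per node. By the expander guarantee from Theorem~\ref{thm:exp-dec}, $|E_{G_{j-1}}(L,R)| \geq \phi \min(\vol'(L),\vol'(R))$, and combined with $|E_{G_{j-1}}(L,R)| \leq w_j$, WLOG $\vol'(L) \leq w_j/\phi$. Since each self-loop contributes $\phi^{-1}\sqrt{w_j}$ to its node's volume, this forces $|L| \leq \sqrt{w_j}$ (and analogously $|R|\leq \sqrt{w_j}$); and because every super-node in $L$ survived the size-sensitive degree-shaving, $\deg_{G_{j-1}}(L)\geq 10\sqrt{w_j}\cdot|L^*|$ with $|L^*|:=\sum_{v\in L}\size_G(v)$ counting the original vertices contracted into $L$. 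Together these yield the upper bound $|L^*|\leq \phi^{-1}\sqrt{w_j}/10$, and symmetrically for $|R^*|$.

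For a matching lower bound on $|L^*|$, I would leverage the outside-edge-shaving condition together with friendliness of $S$ in $G$. The two shaving conditions jointly ensure that each $v\in L$ has at least $0.9\deg_{G_{j-1}}(v)\geq 9\sqrt{w_j}\cdot\size_G(v)$ edges remaining inside $H_i$, going either to $L\setminus\{v\}$ or to $R$. Because $G$ is simple and cross-cluster $G_{j-1}$-edges correspond one-to-one with original $G$-edges, these counts are bounded by $\size_G(v)\cdot|L^*|$ and $\size_G(v)\cdot|R^*|$ respectively; meanwhile friendliness applied to each original vertex in $X_v\subseteq S$ caps the edges from $X_v$ to the $R$-side in terms of the $G$-degrees $\sum_{u\in X_v}\deg_G(u)$. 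Summing over $v\in L$ and combining with the previous bounds, I expect to derive $|L^*|=\Omega(\sqrt{w_j})$, contradicting the upper bound $|L^*|\leq \phi^{-1}\sqrt{w_j}/10$ once $\phi$ is chosen as in the algorithm.

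The main obstacle is that friendliness of $G$ does not transfer cleanly to the contracted graph $G_{j-1}$: internal edges within a super-node $v$ inflate $\sum_{u\in X_v}\deg_G(u)$ beyond $\deg_{G_{j-1}}(v)$, loosening any direct bound on $v$'s cross-cut edges obtained by summing the friendly inequality over $X_v$. The size-weighted shaving threshold $10\sqrt{w_j}\cdot\size_G(v)$ is precisely the device designed to control this slack: it guarantees that the average original-node degree within each surviving super-node is at least $10\sqrt{w_j}$, while simplicity of $G$ bounds internal edges in $X_v$ by $\binom{|X_v|}{2}$, so the internal contribution stays a small fraction of the super-node's total degree. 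Balancing these inequalities while tracking the tightness of the expander guarantee is the technical crux that closes the contradiction.
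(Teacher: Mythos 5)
Your induction setup, the projection to $L,R$, and the expander inequality $\min(\vol'(L),\vol'(R))\leq w_j/\phi$ are all on the right track and match the paper. However, the contradiction you are aiming for does not close, and the way you try to close it relies on an assumption the construction does not grant you.

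First, the quantitative gap. You deduce an upper bound $|L^*|\leq \phi^{-1}\sqrt{w_j}/10$ and then aim to show $|L^*|=\Omega(\sqrt{w_j})$. With the algorithm's choice $\phi=0.01/B$ and $B=\Theta(\log^3 n)$, the upper bound is $\approx 10B\sqrt{w_j}$, which is a polylogarithmic factor \emph{larger} than the lower bound you hope for, so these two statements are perfectly compatible and there is no contradiction. The paper avoids this by never working with $|L^*|$ as the quantity to squeeze: instead it pushes the combinatorial lower bound all the way to a \emph{volume} lower bound on $L$ (and symmetrically $R$). Concretely, friendliness plus ``not shaved'' gives $|E_{G_{j-1}}(\bar x, L)|\geq 3\sqrt{w_j}\cdot\size_G(\bar x)$; simplicity of $G$ turns this into a lower bound on the number of nodes appearing in $L$; and then the $\phi^{-1}\sqrt{w_j}$ self-loops convert that count into $\vol(L)\geq 3\phi^{-1}w_j$. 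The self-loops are precisely the device that reinserts the missing $\phi^{-1}$ factor so that the lower bound $3\phi^{-1}w_j$ collides with the expander upper bound $\phi^{-1}w_j$. Your version drops this conversion and hence loses the $\phi^{-1}$ on the lower-bound side.

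Second, you repeatedly assume that ``every super-node in $L$ survived the size-sensitive degree-shaving'' (and likewise for $R$). That is not guaranteed: $L=H_i\cap S$ is the projection onto the \emph{unshaved} expander $H_i$, and only $\bar x$ and $\bar y$ are known to lie in the shaved expander $H_i'$. Nodes of $L\setminus\{\bar x\}$ may well have been shaved. The paper's argument uses the shaving conditions only for $\bar x$ (and, symmetrically, $\bar y$), which is all that is actually available. A smaller slip in the same vein: from the WLOG $\vol'(L)\leq w_j/\phi$ you cannot conclude $|R|\leq\sqrt{w_j}$ ``analogously''; the WLOG only gives you control over the smaller of the two volumes. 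The fix, as in the paper, is to prove the volume lower bound for \emph{both} $L$ and $R$ (via $\bar x$ and $\bar y$ respectively), so the $\min$ is what gets contradicted.
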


\begin{proof}
We prove this by induction on $j$. 
The case $j=0$ holds trivially, because $G_0=G$.
Consider then $j\geq 1$, and observe that $\delta_G(S) \leq w \le w_j$.
By the induction hypothesis, $S$ is preserved in $G_{j-1}$;
hence, every node in $V_{j-1}$ is the contraction of nodes
either only from $S$ or nodes only from $V\setminus S$.
With a slight abuse of notation, we can thus think of $S$ as a subset of $V_{j-1}$,
and thus also a cut in $G_{j-1}$ with $\delta_G(S)\leq w$ edges. 
%
Assume for contradiction that two nodes $x \in S, y \notin S$
are contracted into the same node in $G_j$.
This must happen because some shaved expander $H_i'$ contracts them together,
more precisely, $x$ is in a contracted node $\bar x\in H_i'$,
and $y$ is in a contracted node $\bar y\in H_i'$.
Consider now the projection of the cut $S$ on the expander $H_i$, given by
\[
  L := H_i \cap S
  \quad\text{ and }\quad
  R := H_i \setminus S = H_i \setminus L,
\]
which are both non-empty because $\bar x \in L$ and $\bar y \in R$.
Now since $H_i$ is a $\phi$-expander, 
we must have
$\frac{|E_{G_{j-1}}(L,R)|}{\minn{\vol(L),\vol(R)} } \geq \phi$.
Clearly, $|E_{G_{j-1}}(L,R)| \leq |\delta_G(S)| \leq w$,
and thus
\[
  \minn{\vol(L),\vol(R)}
  \leq \frac{|E_{G_{j-1}}(L,R)|}{\phi} 
  \leq \phi^{-1} w .
\]

We now show a lower bound on $\vol(L)$,
and a symmetric argument applies to $\vol(R)$.
Since the cut $S$ is friendly,
$|E_G(\{x\},S)| \geq 0.4 \deg_G(x)$.
This inequality in fact holds for every node $x'\in V$ in the same contracted node $\bar x$, and summing all these inequalities we get
\[
  |E_{G_{j-1}}(\{\bar x\},S)|
  = \sum_{x'\in \bar x} |E_G(\{x'\},S)| 
  \geq \sum_{x'\in \bar x} 0.4 \deg_G(x')
  = 0.4 \deg_{G_{j-1}}(\bar x). 
\]
In addition, since $\bar x$ was not shaved, 
we know that at most $10\%$ of its degree goes outside of the expander,
i.e.,
\[
  |E_{G_{j-1}}( \{\bar x\}, V_{j-1}\setminus H_i )| \leq 0.1 \deg_{G_{j-1}}(\bar x),
\]
and that its degree is
\[
  \deg_{G_{j-1}}(\bar x) \geq 10\sqrt{w_j} \cdot\size_G(\bar x).
\]
Combining the last three inequalities and recalling that $L=S\cap H_i$, 
\[
  |E_{G_{j-1}}(\{\bar x\},L)|
  \geq (0.4 - 0.1) \deg_{G_{j-1}}(x)
  \geq 3\sqrt{w_j} \cdot\size_G(\bar x).
\]
Relying on the key property that $G_{j-1}$ is obtained from a simple $G$
(by contractions), we also have
$|E_{G_{j-1}}(\{\bar y\},L)| \leq \size_G(\bar x)\cdot |L| $,
and altogether
$|L| \geq 3\sqrt{w_j}$. 
Moreover, all nodes $v \in L$ have, because of the added self-loops, 
$\vol(v) \geq \phi^{-1}\sqrt{w_j}$, 
and therefore $\vol(L) \geq |L|\cdot \phi^{-1}\sqrt{w_j} \geq 3\phi^{-1} w_j$.
By a symmetric argument for $\vol(R)$,
we arrive at the contradiction that
$\minn{\vol(L),\vol(R)} \geq 3\phi^{-1} w$. 
\end{proof}

\begin{claim}
The number of edges in $G_j$ is at most $K n\sqrt{w_j}$.
\end{claim}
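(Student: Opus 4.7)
The plan is to prove the claim by induction on $j$, mirroring the edge-counting argument from Claim~\ref{claim:edgebound} but carefully tracking the contributions from the recursion. I would first observe that every edge of $G_j$ arises from $G_{j-1}$ and falls into exactly one of three categories: (i) an outer-edge of the expander decomposition of $G_{j-1}$; (ii) an edge incident to a supernode shaved for having low degree; or (iii) an edge incident to a supernode shaved for having more than $10\%$ of its degree leaving its expander. A node that was never shaved is contracted into its shaved expander $H_i'$, so its remaining incident edges in $G_j$ are exactly its outer-edges in the decomposition, already counted in (i); this justifies the trichotomy.

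Next I would bound each category. For (i), the hypothesis gives $m_j' \leq B\phi(m_{j-1} + n\phi^{-1}\sqrt{w_j}) = B\phi\, m_{j-1} + B n\sqrt{w_j}$. For (ii), summing over all low-degree shaved supernodes $\bar v$ and using $\deg_{G_{j-1}}(\bar v) < 10\sqrt{w_j}\cdot \size_G(\bar v)$ together with $\sum_{\bar v} \size_G(\bar v) \leq n$ yields a contribution of at most $10\,n\sqrt{w_j}$. For (iii), if $X$ denotes the set of such shaved supernodes and $d_{\mathrm{out}}(\bar v)$ the number of its outer-edges, then $\deg_{G_{j-1}}(\bar v) < 10\,d_{\mathrm{out}}(\bar v)$, and since $\sum_{\bar v} d_{\mathrm{out}}(\bar v) \le 2m_j'$, the total contribution is at most $20\,m_j'$. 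Adding these,
\[
  m_j \;\leq\; 21\,m_j' + 10\,n\sqrt{w_j} \;\leq\; 21B\phi\,m_{j-1} + 21B\,n\sqrt{w_j} + 10\,n\sqrt{w_j}.
\]

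Now I would apply the inductive hypothesis $m_{j-1}\leq K n\sqrt{w_{j-1}} = 2K n\sqrt{w_j}$ (using $w_{j-1}=4w_j$), obtaining
\[
  m_j \;\leq\; \bigl(42\,B\phi\,K + 21B + 10\bigr)\, n\sqrt{w_j}.
\]
With the chosen $\phi = 0.01/B$, the first term equals $0.42\,K n\sqrt{w_j}$, so it suffices to choose $K$ large enough that $0.42 K + 21B + 10 \leq K$, that is, $K \geq (21B+10)/0.58$. Since $B = O(\log^3 n)$, the choice $K = \Theta(\log^3 n)$ works, and in fact gives the polylogarithmic constant claimed in Theorem~\ref{thm:friendly}. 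The base case $j=0$ holds because $m_0 = m \leq n \cdot (m/n) = n\sqrt{w_0}$, which is $\leq K n\sqrt{w_0}$ for any $K \geq 1$.

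The only subtlety I anticipate is the inductive step's dependence on $\phi$: had we used a weaker $\phi$ (say $\phi = \mathrm{polylog}(n)^{-1}$ but with a worse constant), the coefficient $42 B\phi$ on the inductive term might exceed $1$, and the recurrence would blow up over $\Theta(\log n)$ iterations. This is precisely why $\phi$ is set to $0.01/B$ in the algorithm: it guarantees the geometric-contraction factor $42B\phi < 1$ that closes the induction. The rest is routine arithmetic, and the polylogarithmic overhead in $K$ is absorbed into the $\log^{4}n$ factor of the theorem statement.
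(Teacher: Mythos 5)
Your proof is correct and follows essentially the same route as the paper: the same trichotomy of edge types, the same per-category bounds ($m'_j$, $10n\sqrt{w_j}$, $20m'_j$), the same aggregate recurrence $m_j \le 21B\phi\,m_{j-1} + (21B+10)n\sqrt{w_j}$, and the same observation that $\phi = 0.01/B$ makes the coefficient on the inductive term ($0.42$) small enough to close the recursion with $K = \Theta(B) = \Theta(\log^3 n)$. The small addition you make---explicitly noting that unshaved nodes contribute only their outer-edges, which justifies the exhaustiveness of the trichotomy---is a useful clarification that the paper leaves implicit.
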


\begin{proof}
Each sparsifier $G_j$ has three kinds of edges, all originating from $G_{j-1}$: 
\begin{enumerate}
\item The outer-edges of the expander decomposition.
  Their number is $m'_j \leq B\cdot \phi (m_{j-1} + n \phi^{-1}\sqrt{w_j})$. 

\item Edges incident to nodes that are shaved due to having low degree.
  Their number can be upper bounded by
  $\sum_{v\in V_{j-1}} 10\sqrt{w_j}\cdot\size_G(\bar y)
  \leq 10 n\sqrt{w_j}$.

\item Edges incident to nodes that are shaved since more than $10\%$ of their degree goes outside their expander.
  We upper bound the number of these edges by $20 m'_j$ as follows.
  Let $X \subseteq V_{j-1}$ be the set of these shaved nodes,
  and let $d_{out}(v)$ be the number of edges (in $G_{j-1}$) from $v\in V_{j-1}$ to nodes outside its expander. 
  Observe that $m'_j = 1/2 \cdot \sum_{v \in V_{j-1}} d_{out}(v)$,
  and by definition every $v\in X$ satisfies
  $d_{out}(v) > 0.1 \deg_{G_{j-1}}(v)$.
  Altogether,
  \[
    \sum_{v \in X} \deg_{G_{j-1}}(v)
    \leq \sum_{v \in X} 10 \cdot d_{out}(v)
    \leq 10  \sum_{v \in V} d_{out}(v)
    = 20 m'_j.
  \]
\end{enumerate}
Altogether, the total number of edges in $G_j$ is
\[
  m_j 
  \leq 21 m'_j + 10 n\sqrt{w_j}
  \leq 21 B \phi m_{j-1} + (21 B + 10) n\sqrt{w_j} .
\]
Using the induction hypothesis and plugging our parameters choice
$\phi=0.01/B$ and $K = 100B$, we get
$ B \phi m_{j-1} \leq 0.01 K n\sqrt{w_{j-1}} = 0.02 K n\sqrt{w_{j-1}}$
and $ (21 B + 10) \leq 0.5K$,
and the claim follows. 
\end{proof} 

This completes the proof of the randomized algorithm
in Theorem~\ref{thm:friendly}, 
because the for the final $G_j$ is a friendly $w_j$-cut sparsifier
for $w\le w_j< 4w$.
\end{proof}

\section{An $O(n \log n)$-Size Sparsifier for Friendly Minimum Cuts}
\label{sec:existensial}

In this section we show that in order to preserve friendly \textit{minimum} cuts, the total number of edges in the most succinct sparsifier is always at most $\tO(n)$, proving Theorem~\ref{thm:existensial} and the reduction from sparsification to \GHT in the equivalence of Theorem~\ref{thm:equiv}.
Unlike the upper bounds of the previous section, we do not know of a fast algorithm for computing these more efficient sparsifiers, unless we are given a \GHT of the graph.
Recall that a \emph{friendly minimum $s,t$-cut sparsifier}  (Definition~\ref{def:friendlymin} in Section~\ref{sec:intro}) is only required to preserve a minimum $s,t$-cut for pairs that do not have any unfriendly minimum $s,t$-cut.

\begin{proof}[Proof of Theorem~\ref{thm:existensial}]
Let $T$ be a Gomory-Hu tree of $G$. 
Call an edge of $T$ that correspond to an unfriendly minimum cut an \emph{unfriendly edge}.
Define the sparsifier $H$ to be $G$ after contracting the nodes of each connected component in $T$ comprised only of unfriendly edges. 
The contracted graph $H$ satisfies the requirement for a friendly minimum $s,t$-cut sparsifier because for any pair $u,v$, such that the only minimum $u,v$-cut is friendly, at least one minimum $u,v$-cut (the one in $T$) has survived the contractions.

Next, we show that $|E(H)|\leq O(n\log n)$; observe that $|E(H)|$ is at most the total weight of friendly edges in $T$. Indeed, each edge in $H$ (where parallel edges are counted separately) contributes to the weight of at least one friendly edge in the Gomory-Hu tree.
Root $T$ by an arbitrary node $r$. We first show that the total weight of friendly edges on any path in $T$ from a node to an ancestor of it on the tree is bounded, up to constant factors, by the total number of nodes of the tree that descend from this path (that is, below it in the rooted tree). 
This lets us ``charge'' a path of weight $w$ to $\Omega(w)$ descendants.
Then, we will decompose $T$ into paths (via a light/heavy decomposition) such that a node is a descendant of $O(\log n)$ paths, giving an upper bound of $O(n \log n)$ on the total weight.

Let $P$ be any path in $T$, and let $e_1$ be a friendly edge on this path, and consider the subpath of $P$ that contains the next $100$ friendly edges $e_2,\ldots,e_{101}$ below $e_1$. 
Our goal is to bound $w(e_1)$, up to constant factors, by the total number of nodes that are \emph{private} descendants of this subpath, in a sense that will become clear.
Suppose that there indeed are $100$ friendly edges in $P$ below $e_1$; otherwise, we will handle $e_1$ by a different, simpler argument.
In between each pair of friendly edges $e_i,e_{i+1}$ there could be a subpath of unfriendly edges. 
Denote the endpoints of the friendly edges by $e_1=(v_1',v_1), \ldots, e_{101}=(v_{101}',v_{101})$ and note that $v_{i}$ is equal to $v_{i+1}'$ if and only if there are no unfriendly edges between $e_i$ and $e_{i+1}$.
Define the private set of node $v_i$, denoted $\private(v_i)$, to be the descendants of all nodes on the subpath of $P$ that are between $v_{i}$ and $v_{i+1}'$, except for the descendants of $v_{i+1}$.\footnote{We assume that a node is a descendant of itself.}
This partitions the descendants of our subpath so that each descendant is assigned to one of the nodes $v_i$.
The key claim is the following; it is reminiscent of \cite[Claim 3.3]{AKT21_focs} but uses the friendliness of cuts rather than their $w$-largeness and non-easiness.

\begin{claim}
There exists an $1 \leq i \leq 100$ such that $\private(v_i) \geq w(e_1)/1000$.
\end{claim}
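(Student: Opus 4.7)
The plan is to establish the slightly stronger estimate $|F|:=\sum_{i=1}^{100}|\private(v_i)|=|D_{v_1}\setminus D_{v_{101}}|\geq w(e_1)/10$; the claim then follows by averaging, since some $|\private(v_i)|\geq |F|/100\geq w(e_1)/1000$.

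First, I would derive the per-vertex inequality $|\private(v_i)|\geq 0.4\,w(e_i)-|D_{v_{i+1}}|$ for every $i\in\{1,\ldots,100\}$, using three ingredients. \emph{(i)}~Since the singleton $\{v_i\}$ is itself a $v_i,v_i'$-cut of value $\deg_G(v_i)$ and the minimum such cut equals the tree-edge weight $w(e_i)$, we have $\deg_G(v_i)\geq w(e_i)$. \emph{(ii)}~Friendliness of $e_i$ together with $v_i\in D_{v_i}$ gives $|E(\{v_i\},D_{v_i})|\geq 0.4\deg_G(v_i)\geq 0.4\,w(e_i)$. \emph{(iii)}~Simplicity of $G$ caps $|E(\{v_i\},D_{v_{i+1}})|\leq |D_{v_{i+1}}|$, so subtracting leaves at least $0.4\,w(e_i)-|D_{v_{i+1}}|$ neighbors of $v_i$ inside $\private(v_i)=D_{v_i}\setminus D_{v_{i+1}}$; a second appeal to simplicity converts this edge count into the same lower bound on $|\private(v_i)|$.

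Applied at $i=1$, the inequality already wins the easy case: if $|D_{v_2}|\leq 0.3\,w(e_1)$, then $|\private(v_1)|\geq 0.1\,w(e_1)$, far exceeding the target. The hard part will be the complementary regime where every $|D_{v_i}|$ stays close to $|D_{v_1}|\geq 0.4\,w(e_1)+1$ along the nested chain. My plan here is to iterate the per-vertex inequality and combine it with two structural facts: the strict nesting $D_{v_1}\supsetneq D_{v_2}\supsetneq \cdots\supsetneq D_{v_{101}}$ (which alone forces the distinct vertices $v_1,\ldots,v_{100}$ to contribute $|F|\geq 100$), and the submodularity bound $|E(F,V\setminus F)|\leq w(e_1)+w(e_{101})$ obtained by writing $F=D_{v_1}\cap(V\setminus D_{v_{101}})$. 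The main obstacle is that with friendliness parameter $\alpha=0.4<1/2$, inclusion-exclusion at a single vertex between ``inside $D_{v_i}$'' and ``outside $D_{v_{i+1}}$'' is vacuous; the amortized lower bound on $|F|$ must instead be extracted by aggregating the total edge contributions $\sum_i 0.4\,w(e_i)$ from the distinct high-degree vertices $v_1,\ldots,v_{100}$ and using simplicity to turn that combined edge count into a linear vertex bound on $|F|$.
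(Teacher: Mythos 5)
Your per-vertex inequality and the easy case (where $|D_{v_2}|\le 0.3\,w(e_1)$) are correct, but the plan for the hard case has a genuine gap, precisely where you flag the ``main obstacle.'' Two problems. First, your inequality is stated in terms of $w(e_i)$, and you have no a priori lower bound on $w(e_i)$ relative to $w(e_1)$; without first showing $w(e_i)$ is large for every $i$, the aggregate $\sum_i 0.4\,w(e_i)$ gives nothing. (The paper proves $w(e_i)>0.3\,w(e_1)$ for all $i$, but that is itself extracted from the contradiction hypothesis that $|F|$ is small --- it is not free.) Second, even granting large $w(e_i)$, the $\ge 0.4\,w(e_i)$ edges from $v_i$ into $D_{v_i}$ can all land in $D_{v_{101}}$ rather than in $F$; simplicity only says at most $|F|$ of them land in $F$, so aggregating yields $\sum_i 0.4\,w(e_i)\le 100|F|+|E(\{v_1,\ldots,v_{100}\},D_{v_{101}})|$, and the last term is not controlled by $|F|$. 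Your two ``structural facts'' do not close this: $|F|\ge 100$ bears no relation to $w(e_1)$, and $\delta(F)\le w(e_1)+w(e_{101})$ is an \emph{upper} bound on the boundary of $F$, not a lower bound on $|F|$, and is vacuous once $w(e_{101})$ is large.

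The missing ingredient is the bidirectional argument in the paper's proof: under the contradiction hypothesis, the edges you collect crossing the $v_{101}$-cut force $w(e_{101})$ to be on the order of $10\,w(e_1)$, and then the friendliness of $e_{101}$ is used to look \emph{upward} --- $v_{101}'$ must send $\ge 0.4\,w(e_{101})$ edges to non-descendants, and since only few can terminate inside $F$, the rest cross $e_1$, contradicting $w(e_1)=w$. Your plan never uses the friendliness of $e_{101}$ and has no analogue of this ``look up'' step, so the chain of inequalities cannot terminate. (As a secondary issue, the strengthening $|F|\ge w(e_1)/10$ appears too aggressive; even the paper's argument, recast as a bound on $|F|$, only closes with a larger denominator.)
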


\begin{proof}
Denote $w := w(e_1)$ and suppose for contradiction that $\private(v_i)<w/1000$ for all $i \in [100]$.
Since $e_1$ is friendly, we know that $v_1$ must send $>0.4 \deg(v_i) \geq 0.4w$ edges to its descendants in the tree.
By our assumption, the number of descendants of the nodes between $v_1$ and $v_{101}'$ is $< 100 \cdot w/1000 = w/10$, and since our graph $G$ is simple, only $w/10$ edges can go to these nodes.
Therefore, there are $>0.3w$ edges between $v_1$ and the descendants of $v_{101}$, which implies that $w(e_i) > 0.3w$ for all $i\in[101]$.
The same argument above can be repeated for each $e_i$ where $i \in [100]$ since they are all friendly edges, and we get that for all $i \in [100]$ there are at least $0.3w\cdot 0.4-w/100=0.11w$ edges between the node $v_i$ and the descendants of $v_{101}$.
Consequently, $w(e_{101}) \geq 100 \cdot 0.11 w = 11w$.
Now, looking up instead of down, and using the fact that $e_{11}$ is also friendly, we conclude that there are at least $0.4 \cdot 11 \cdot w=4.4w$ edges between $v_{101}'$ and all nodes above $v_{101}'$ in the tree. 
But since there are only $< w/100$ nodes above $v_{101}'$ that are not also above $v_1$, we conclude that there are at least $4.4w-0.01w>w$ edges between $v_{101}'$ and the nodes above $v_1$, meaning that $w(e_1)$ must be $>w$, a contradiction.
\end{proof}

Therefore, if each edge $e_1$ charges its weight to the private nodes on this subpath (letting each node pay $\leq 1000$ times), the total charge received by any node (over all edges of $P$) is at most $100 \cdot 1000=O(1)$; because each node is in the subpath of at most $100$ friendly edges.

For the corner case where there are only $<100$ friendly edges in $P$ below $e_1=(v_1',v_1)$ we can simply charge $w(e_1)$ to all descendants of $v_1$.
Since $e_1$ is friendly, there are at least $0.4 w(e_1)$ edges between $v_1$ and its descendants, and due to the simplicity of $G$, the number of descendants must also be $\geq 0.4 w(e_1)$; therefore, each node receives a charge of $\leq 1/0.4$.
And since a node may only be charged $<100 $ times in this way, the total additional charge for each node is $O(1)$.

Finally, we use the heavy-light path-decomposition of $T$ (see~\cite{SleatorT83}). In this decomposition of the rooted tree $T$, each node $u_h$ picks the edge $u_h u_l$ neighboring it from below if $|V(T_{u_l})|>|V(T_{u_h})|/2$, and makes it a \textit{heavy} edge (the rest are light edges).
As explained above, the weights on each path of heavy edges can be bounded by the total amount of nodes descending from it. Any leaf-root path can contain at most $\log n$ light edges, and so also at most $\log n$ heavy paths. 
Finally, each light edge is treated as its own path, and so each node can be charged at most $2\log n$ times (number of heavy paths and light edges in the path above it to the root), concluding the proof.
\end{proof}

To obtain the reduction of Theorem~\ref{thm:equiv} observe that the sparsifier $H$ in the proof can be obtained from the \GHT $T$ in $O(m)$ time.

\section{Single-Source Unfriendly Cuts in $\tilde{O}(m)$ Time}
\label{sec:unfriendly}

The main result of this section is an algorithm for computing all minimum $p,v$-cuts for a single source $p$ that are \emph{unfriendly}.
The main tool that we use is the \pC procedure discovered independently by Li and Panigrahi \cite{LP20} (for global minimum cut algorithm in weighted graphs) and by Abboud, Krauthgamer, and Trabelsi \cite{AKT21_stoc} (for the first subcubic \GHT algorithm for simple graphs), and within a short time span has found several interesting applications \cite{LP21,CQ21,MN21,LNPSS21,LPS21_focs,AKT21_focs,Zhang21}.
In particular, it was used by Li and Panigrahi \cite{LP21} along with randomized sampling to solve the $(1+\eps)$-approximate single-source minimum cuts problem using $\tilde{O}(1)$ queries to \MF.

\begin{definition}[Minimum Isolating Cuts]
Consider a weighted, undirected graph $G=(V,E)$ and a subset $R \subseteq V$ ($|R|\geq 2$).
The minimum isolating cuts for $R$ is a collection of sets $\{ S_v: v \in R \}$ such that for each node $v \in R$, the set $S_v$ is the side containing $v$ of a minimum cut separating $\{v \}$ and $R \setminus \{v\}$, i.e. for any set $S$ satisfying $v \in S$ and $S \cap (R \setminus \{v \}) = \emptyset$, we have $\delta (S_v) \leq \delta(S)$. 
\end{definition}

\begin{lemma}[The \pC Procedure (\cite{LP20,AKT21_stoc})]
\label{lem:proc_main}

There is an algorithm that, given an undirected graph $G=(V,E,c)$ on $n$ nodes and $m$ edges of total weight $c(E)$ and a subset $R \subseteq V$, computes the minimum isolating cuts $\{ S_v : v \in R \}$ for $R$ in $G$ using $O(\log |R|)$ calls to \MF on  graphs with $O(n)$ nodes, $O(m)$ edges, and $O(c(E))$ total weight, and takes $\tilde{O}(m)$ deterministic time outside of these calls.
\end{lemma}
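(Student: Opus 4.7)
The plan is a binary-labeling reduction followed by one batched max-flow call. Assign each $v\in R$ a distinct label $\ell(v)\in\{0,1\}^k$ with $k=\lceil\log_2|R|\rceil$. For each bit $i\in[k]$, partition $R=R_0^i\sqcup R_1^i$ by the $i$-th bit, form $G_i$ from $G$ by contracting $R_0^i$ into a single super-source $s_i$ and $R_1^i$ into a single super-sink $t_i$, and use one \MF call on $G_i$ to obtain a minimum $s_i$-$t_i$ cut $(A_i,B_i)$. Each $G_i$ has $O(n)$ nodes, $O(m)$ edges, and total capacity $O(c(E))$. For each $v\in R$ let $X_i^v\subseteq V$ be the side of $(A_i,B_i)$ containing $v$ in $G$, and set $Y_v:=\bigcap_{i\in[k]} X_i^v$. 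Since any $u\in R\setminus\{v\}$ disagrees with $v$ on some bit $i$, the contracted terminals representing $u$ and $v$ land on opposite sides of $(A_i,B_i)$; hence $X_i^u\cap X_i^v=\emptyset$ and the regions $\{Y_v\}_{v\in R}$ are pairwise disjoint.

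The structural step is a submodularity lemma: for every $v\in R$, every minimum $v$-isolating cut $T^*$, and every bit $i$, the set $T^*\cap X_i^v$ is again a minimum $v$-isolating cut. Indeed $T^*\cap X_i^v$ contains $v$ and avoids $R\setminus\{v\}$, so it isolates $v$; and $T^*\cup X_i^v$ contains all of $R_{b_i^v}^i$ (via $X_i^v$) and avoids all of $R_{1-b_i^v}^i$ (via both $T^*$ and $X_i^v$), so it projects to a feasible $s_i$-$t_i$ cut in $G_i$. Submodularity of $\delta(\cdot)$ gives
\[
  \delta(T^*)+\delta(X_i^v)\;\ge\;\delta(T^*\cap X_i^v)+\delta(T^*\cup X_i^v)\;\ge\;\delta(T^*\cap X_i^v)+\delta(X_i^v),
\]
where the second inequality uses minimality of $X_i^v$ in $G_i$; cancelling $\delta(X_i^v)$ yields $\delta(T^*\cap X_i^v)\le\delta(T^*)$, and minimality of $T^*$ forces equality. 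Iterating over all $k$ bits shows that $T^*\cap Y_v$ is itself a minimum $v$-isolating cut, so some minimum $v$-isolating cut lives entirely inside $Y_v$.

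Armed with disjointness of the $Y_v$'s, all the $S_v$ can be recovered by a single additional \MF call on a carefully assembled graph $H$. Build $H$ as the disjoint union of the induced subgraphs $G[Y_v]$ (one per $v\in R$), add a super-source $s$ with an edge $s\to v$ of capacity $\deg_G(v)$ for each $v\in R$, and add a super-sink $t$ together with, for every edge $\{y,z\}\in E(G)$ having $y\in Y_v$ and $z\notin Y_v$, an edge $y\to t$ of capacity $c(\{y,z\})$. Because the $Y_v$'s are pairwise disjoint and each original edge contributes to at most two $Y_v$-boundaries, $H$ has $O(n)$ nodes, $O(m)$ edges, and total capacity $O(c(E))$, so this final call fits inside the $O(\log|R|)$ \MF budget. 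A minimum $s$-$t$ cut in $H$ decomposes across components into a sum of min $v$-to-boundary cuts in each $G[Y_v]$, which by the structural lemma coincide with minimum $v$-isolating cuts in $G$; the set $S_v$ is read off as the side containing $v$ within its component (choosing the max-source-side min cut to break ties cleanly). All remaining work (label assignment, building each $G_i$ and $H$, computing the $X_i^v$ and $Y_v$ by hashing each node with its $k$-bit signature of cut-sides, and reading off the $S_v$) runs in deterministic $\tO(m)$ time.

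I expect the main obstacle to be the submodularity bookkeeping together with verifying that $H$ faithfully represents the per-$v$ problem. In particular, one must check that setting the capacity of $s\to v$ to $\deg_G(v)$ is safe --- the singleton cut $(\{v\},V\setminus\{v\})$ already realises this value, so this edge is never the unique bottleneck undercutting a smaller internal cut --- and that the $Y_v$ can be computed for every $v\in R$ simultaneously in $\tO(m)$ time by bucketing each $w\in V$ according to its $k$-bit signature of cut-sides. Once these pieces are in place, correctness follows from the structural lemma and the whole procedure uses exactly $k+1=O(\log|R|)$ calls to \MF on graphs within the prescribed size and capacity limits.
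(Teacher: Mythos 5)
The paper does not prove this lemma --- it is imported verbatim from \cite{LP20,AKT21_stoc} --- and your reconstruction is exactly the standard argument from those works: the $\lceil\log_2|R|\rceil$ bit-labelled cuts, disjointness of the regions $Y_v$, and the submodularity/uncrossing step showing some minimum $v$-isolating cut lies inside $Y_v$ are all correct as you state them. Your only deviation is cosmetic: the cited works run the final phase as $|R|$ independent max-flows on the disjoint contracted graphs $G$ with $V\setminus Y_v$ collapsed (amortizing to one call since the $Y_v$ are disjoint), whereas you glue them into a single $s$--$t$ flow; that also works, provided you handle the tie you already flagged --- when $\deg_G(v)$ equals the isolating-cut value the edge $s\to v$ may be cut instead, so one should either take the source-maximal minimum cut (which, being the lattice join of all minimum-cut source sides, does contain $v$) or simply give $s\to v$ infinite capacity.
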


\begin{theorem}[Approximate Single-Source Min-Cuts (\cite{LP21})]
\label{thm:apx}
Let $G$ be an undirected graph on $n$ nodes and $m$ edges, with polynomially bounded weights, and let $p \in V$.
There is an algorithm that outputs, for each node $v \in V \setminus \{p\}$, a $(1+\eps)$-approximation of $\MC(p,v)$, and runs in $\tilde{O}(m)$ time plus $\poly\log{n}$ calls to \MF on $O(n)$-nodes, $O(m)$-edge graphs.
\end{theorem}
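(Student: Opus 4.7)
The plan is to combine the \pC procedure from Lemma~\ref{lem:proc_main} with geometric random sampling, following the Li--Panigrahi template. For each scale $i \in \{0, 1, \ldots, \lceil\log n\rceil\}$ and each of $O(\eps^{-2}\log^2 n)$ independent repetitions, I would form a random subset $R \subseteq V$ by always inserting $p$ and inserting every other vertex independently with probability $2^{-i}$, then invoke Lemma~\ref{lem:proc_main} on $(G,R)$ to obtain cuts $\{S_v : v \in R\}$. Since $p \in R \setminus \{v\}$ for every $v \in R\setminus\{p\}$, each returned $S_v$ is automatically a valid $p,v$-cut, so its value is recorded as a candidate upper bound on $\MC(p,v)$. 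The algorithm ultimately outputs, for each $v \in V \setminus \{p\}$, the minimum candidate recorded across all iterations.

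For correctness, the key step is an isolation claim. Fix $v$ and let $(A,B)$ be a (near-)minimum $p,v$-cut with $v \in A$, $p \in B$, and set $a := |A|$. At the scale $i$ with $2^i \in [a, 2a]$, a single repetition has $v \in R$ with probability $\Theta(1/a)$, and conditional on this, the event $R \cap (A \setminus \{v\}) = \emptyset$ has probability $(1-2^{-i})^{a-1} = \Theta(1)$. When both occur, Lemma~\ref{lem:proc_main} returns $S_v$ with $\delta(S_v) \le \delta(A)$ (since $(A, V\setminus A)$ is itself a candidate cut isolating $v$ from $R\setminus\{v\}$) and $\delta(S_v) \ge \MC(p,v)$ (since $S_v$ is a $p,v$-cut), so the recorded value is essentially $\MC(p,v)$. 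A Chernoff bound over the $\tilde O(\eps^{-2})$ repetitions at the correct scale, together with a union bound over the $n$ vertices, upgrades ``constant probability per vertex per repetition'' into ``with high probability for every $v$''.

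For running time, each outer iteration performs a single \pC call, which by Lemma~\ref{lem:proc_main} costs $\tilde O(m)$ deterministic time plus $O(\log n)$ \MF calls on $O(n)$-vertex, $O(m)$-edge graphs. Since the total iteration count is $\poly\log n$ (times $\eps^{-O(1)}$), both the $\tilde O(m)$ preprocessing and the overall number of \MF calls match the claim.

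The part I expect to be the main obstacle is the uniform analysis across the full spectrum of small-side sizes $a \in [1, n]$. At very small $a$ the sample must be dense enough for $v$ to land in $R$ often, while at very large $a$ the bottleneck is the ``no competitor in $A$'' event; moreover $a$ is not known a priori. The subtlety is that sampling at a rate that is off from $1/a$ by a constant factor only wastes a constant factor of repetitions, but that loss must be absorbed into the $(1+\eps)$-slack rather than into an extra $\log n$ factor; and one must argue that \emph{every} vertex (not just a constant fraction) is covered by some successful repetition, which is where the $\eps^{-O(1)}$ repetition overhead enters. I would adapt the LP21 analysis, together with a charging argument over near-minimum cuts $(A,B)$ at each scale, to obtain the stated $(1+\eps)$-approximation uniformly in $\poly\log n$ \MF calls.
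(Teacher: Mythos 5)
First, a point of reference: the paper does not prove this theorem; it is imported as a black box from Li and Panigrahi \cite{LP21}, so the comparison is against that paper's argument rather than any in-house proof.

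Your plan has a genuine gap in the probability analysis, and it is exactly the obstacle you flag at the end. The algorithm only records a candidate for a vertex $v$ in the iterations where $v \in R$. At the scale $2^i \in [a,2a]$, the per-iteration probability that both $v \in R$ and $R \cap (A \setminus\{v\}) = \emptyset$ is $\Theta(1/a)$, not constant -- the $\Theta(1)$ conditional probability multiplies the $\Theta(1/a)$ probability of sampling $v$ at all. When the minimum $p,v$-cut is balanced (both sides of size $\Theta(n)$), this is $\Theta(1/n)$, so after $\poly\log n$ repetitions the probability that $v$ ever lands in the correct sample is $o(1)$. No Chernoff or union bound can rescue this, and the $(1+\eps)$-slack does not help: the problem is not that the value recorded is slightly off, it is that no value is ever recorded at all for such $v$. (Isolating the $p$-side instead does not save you either, since you still require $v \in R$ to have any candidate, and that event remains $\Theta(1/\min(a,b))$.)

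The missing idea in \cite{LP21} is that a single successful iteration must produce information about \emph{many} vertices, not just the one sampled vertex. Concretely, when some $u$ is isolated and $S_u$ is returned with $\delta(S_u)$ small, one records $\delta(S_u)$ as a candidate for \emph{every} $w \in S_u$ (not just $u$), because $S_u$ separates all of its members from $p$. The correctness argument then has to show that for each fixed $v$, with constant probability at the right scale, some $u$ is isolated whose returned cut both has value $\le (1+\eps)\MC(p,v)$ and \emph{contains} $v$; this requires a structural/charging analysis over the Gomory-Hu tree rooted at $p$ (nested cuts along the $v$-to-$p$ tree path, and a potential function controlling how far $S_u$ can shrink away from $A$). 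That part is the real content of the Li-Panigrahi proof and is entirely absent from your proposal; without it, the scheme as written simply fails on balanced cuts.
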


The following two similar lemmas about unfriendly cuts are the keys for utilizing the \pC procedure (with the help of an approximation algorithm) to compute the exact minimum cuts.
The idea is that if $v$'s minimum $p,v$-cut is unfriendly, e.g. due to $v$ having too many edges to $p$'s side (observe that no other node $x$ in $v$'s side could have more than half of its edges to $p$'s or else the minimum $p,v$-cut is better off by putting $x$ on the other side), then all other nodes $u$ in the cut must have a minimum $p,u$-cut whose value is smaller than $v$'s cut by a constant factor. 
This happens because moving $v$ to the other side gives a $p,u$-cut of much smaller value.
Therefore, a $(1+\eps)$-approximation to the maximum-flow values to $p$ allows us to distinguish between $v$ and its cut members $u$, and thus to only pick $v$ as terminal out of the nodes in the cut when using the \pC procedure, which isolates $v$.

\begin{lemma}
\label{unfriendly1}
If the minimum $p,v$-cut $S, v \in S$ satisfies $|E(\{v\}, V \setminus S)| > 0.6 \cdot \deg(v)$, then for all $v' \in S \setminus \{s\}$ $\lambda_{p,v'} \leq 0.8 \cdot \lambda_{p,v}$.
\end{lemma}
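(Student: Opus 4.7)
The plan is to show that removing $v$ from the cut $S$ yields a strictly better $p,v'$-cut whenever $v$ sends most of its edges across $S$. The cut $S' := S \setminus \{v\}$ still separates $p$ from any $v' \in S \setminus \{v\}$, since $p \notin S$, so $\lambda_{p,v'} \leq \delta(S')$. Thus the whole argument reduces to bounding $\delta(S')$ in terms of $\delta(S) = \lambda_{p,v}$ and $\deg(v)$.

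First I would write the change in the cut value when $v$ is moved out. Partitioning the edges at $v$ according to whether they cross $S$ or stay inside, we have $\deg(v) = |E(\{v\}, V\setminus S)| + |E(\{v\}, S\setminus\{v\})|$, and a direct count gives
\[
\delta(S') = \delta(S) - |E(\{v\}, V\setminus S)| + |E(\{v\}, S\setminus\{v\})|
          = \delta(S) + \deg(v) - 2|E(\{v\}, V\setminus S)|.
\]
Using the hypothesis $|E(\{v\}, V\setminus S)| > 0.6\deg(v)$ immediately gives $\delta(S') < \delta(S) - 0.2\deg(v)$.

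Next I would convert the $\deg(v)$ term into a multiple of $\lambda_{p,v}$. Since $\{v\}$ is itself a valid $p,v$-cut (assuming $v \neq p$, which holds because $v$ is on $p$'s opposite side of $S$), we have $\deg(v) \geq \lambda_{p,v} = \delta(S)$. Substituting yields
\[
\lambda_{p,v'} \leq \delta(S') < \delta(S) - 0.2\lambda_{p,v} = 0.8\lambda_{p,v},
\]
which is the desired bound.

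There is no real obstacle here; the only thing to be careful about is that $S'$ is genuinely a feasible $p,v'$-cut, i.e.\ that $v' \in S'$ and $p \notin S'$, both of which are immediate from $v' \in S \setminus \{v\}$ and $p \notin S$. The non-trivial content is simply the observation that trivial cut $\{v\}$ upper bounds $\lambda_{p,v}$ by $\deg(v)$, which is what lets the unfriendliness of $v$ translate into a multiplicative (and not just additive) gap between $\lambda_{p,v'}$ and $\lambda_{p,v}$.
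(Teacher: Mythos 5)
Your proof is correct and takes essentially the same route as the paper's: form $S' = S\setminus\{v\}$, bound $\delta(S')$ by exploiting that $v$ sends $>0.6\deg(v)$ edges across $S$, and then use $\deg(v)\ge\lambda_{p,v}$ to turn the additive $-0.2\deg(v)$ gain into the multiplicative factor $0.8$. The only cosmetic difference is that you package the bookkeeping into the single identity $\delta(S')=\delta(S)+\deg(v)-2|E(\{v\},V\setminus S)|$, where the paper bounds the two contributing terms separately, but the content is identical.
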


\begin{proof}
Our goal is to show that the cut $S \setminus \{v\}$ has value $\leq 0.8 \lambda_{p,v}$, which establishes the upper bound on $\lambda_{p,u}$ for all $u \in S \setminus \{v\}$.
First, notice that $\lambda_{p,v} = \delta(S) \leq \deg(v)$ because the degree is always an upper bound on the minimum cut.
Since $\delta(S)= |E(\{v\}, V \setminus S)| + |E(S \setminus \{v \}, V \setminus S)|$ and $|E(\{v\}, V \setminus S)| > 0.6\cdot \deg(v)$ we know that $ |E(S \setminus \{v \}, V \setminus S)| < \lambda_{p,v} - 0.4\cdot \deg(v)$.
Moreover, $| E(S \setminus \{v \}, \{v\}) | < 0.4 \cdot \deg(v)$ because $>0.6 \deg(v)$ of $v$'s edges leave $S$.
Thus, $\delta( S \setminus \{v\}) = | E(S \setminus \{v \}, \{v\}) | + |E(S \setminus \{v \}, V \setminus S)| <  \lambda_{p,v} - 0.6\cdot \deg(v) + 0.4 \cdot \deg(v) < \lambda_{p,v} - 0.2 \cdot  \deg(v) \leq 0.8 \cdot \lambda_{p,v}$.

\end{proof}

\begin{lemma}
\label{unfriendly2}
If the minimum $p,v$-cut $S, v\in S$ satisfies $|E(\{p\}, S)| > 0.6 \cdot \deg(p)$, then for all $v' \in (V \setminus S) \setminus \{p\}$ $\lambda_{p,v'} \leq 0.8 \cdot \lambda_{p,v}$.
\end{lemma}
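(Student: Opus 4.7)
The plan is to mirror the argument of Lemma~\ref{unfriendly1}, but with the roles of the two sides swapped: whereas there the ``offending'' node $v$ sat on the same side as the target vertices $v'$, here the offender is $p$ itself, which sits on the opposite side from the $v'$'s. The natural witness cut is therefore obtained by moving $p$ across, i.e., considering the cut $(S \cup \{p\},\, V \setminus S \setminus \{p\})$. This cut separates $p$ from every $v' \in V \setminus S \setminus \{p\}$, so its value upper-bounds $\lambda_{p,v'}$ for all such $v'$ simultaneously, which is exactly what we want.

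First, I would decompose $\delta(S)$ according to the contribution of $p$, writing
\[
  \delta(S) = |E(\{p\}, S)| + |E(S,\, V \setminus S \setminus \{p\})|,
\]
which yields $|E(S,\, V \setminus S \setminus \{p\})| < \delta(S) - 0.6\deg(p)$ by hypothesis. Next, I would bound the remaining edges leaving $p$ on the other side: $|E(\{p\},\, V \setminus S \setminus \{p\})| = \deg(p) - |E(\{p\}, S)| < 0.4 \deg(p)$. Adding these two contributions gives
\[
  \delta(S \cup \{p\})
  = |E(S,\, V \setminus S \setminus \{p\})| + |E(\{p\},\, V \setminus S \setminus \{p\})|
  < \delta(S) - 0.2\deg(p).
\]

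Finally, I would close the argument using the standard upper bound $\lambda_{p,v} = \delta(S) \leq \deg(p)$ (since the trivial singleton cut $\{p\}$ separates $p$ from $v$), which gives $0.2\deg(p) \geq 0.2 \lambda_{p,v}$ and hence $\delta(S \cup \{p\}) < 0.8\, \lambda_{p,v}$. Because $(S \cup \{p\},\, V \setminus S \setminus \{p\})$ is a valid $p, v'$-cut for every $v' \in (V \setminus S) \setminus \{p\}$, the desired inequality $\lambda_{p,v'} \leq 0.8\, \lambda_{p,v}$ follows.

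There is no real obstacle here; the proof is an exact mirror of Lemma~\ref{unfriendly1} with $v$ replaced by $p$ and the direction of ``moving one node across the cut'' reversed. The only thing to double-check is that the degree bound $\delta(S) \leq \deg(p)$ is applied to $p$ (not $v$), which is legitimate because $\{p\}$ is itself a $p,v$-separating cut.
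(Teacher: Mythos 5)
Your proposal is correct and matches the paper's own argument: the paper also considers the cut $(V \setminus S) \setminus \{p\}$ (the complement of your $S \cup \{p\}$), decomposes its cut edges into those incident to $p$ versus those incident to $S$, and uses $\lambda_{p,v} \leq \deg(p)$ to conclude $\delta((V\setminus S)\setminus\{p\}) < 0.8\,\lambda_{p,v}$. The two write-ups differ only in which side of the witness cut is named explicitly.
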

\begin{proof}
The goal is to show that the cut $(V \setminus S) \setminus \{p \}$ has value $\leq 0.8 \lambda_{p,v}$; the proof is similar to the one above.
First, $\lambda_{p,v} \leq \deg(p)$.
Then, since $\delta(S)= |E(\{p\}, S)| + |E((V \setminus S) \setminus \{p \}, S)|$ and $|E(\{p\}, S)|>  0.6 \cdot \deg(p)$ we know that $|E((V \setminus S) \setminus \{p \}, S)| < \lambda_{p,v} - 0.6\cdot \deg(p)$.
Moreover, $ |E( \{ p \} , (V \setminus S) \setminus \{p \}) <0.4 \cdot \deg(p)$, implying that 
$\delta( (V \setminus S) \setminus \{p \}  ) = 
|E( \{ p \} , (V \setminus S) \setminus \{p \})  + |E((V \setminus S) \setminus \{p \}, S)|
 < 0.4 \cdot \deg(p) +  \lambda_{p,v} - 0.6\cdot \deg(p)  < \lambda_{p,v} - 0.2 \cdot  \deg(p) \leq 0.8 \cdot \lambda_{p,v}$.
\end{proof}

%

We are ready to prove the main result of this section.
 
\begin{proof}[Proof of Theorem~\ref{thm:nonfriendly}]
The algorithm is as follows. Let $\eps=\delta=0.01$.
\begin{enumerate}
\item Use the algorithm of Theorem~\ref{thm:apx} to get an estimate $c'(v)$ for all nodes $v \in V\setminus \{p\}$ such that $\lambda_{p,v} \leq c'(v) \leq (1+\eps) \cdot \lambda_{p,v}$.
\item For each $i\in \{0,\ldots, \log_{(1+\delta)}n \}$ compute the set $T_i = \{ v \in V \mid c'(v) \geq (1+\delta)^{i} \} \cup \{ p\}$ of nodes whose estimate is at least $(1+\delta)^{i}$.
\item For each of the $O(\log{n})$ sets $T_i$ make a call to the \pC procedure of Lemma~\ref{lem:proc_main} on $G$ where the set of terminals is $T_i$. For each $v \in T_i$ let $S_v$ be the returned isolating cut of $v$.
Update the estimate $c'(v)$ to be the minimum between $c'(v)$, the value of $S_v$, and the value of $S_p$ (the returned isolating cut for $p$).
\item Return the estimates of all nodes.
\end{enumerate}

The running time of the algorithm is the time of the approximate single-source algorithm plus the time for $O(\log{n})$ calls to the \pC procedure.

For the correctness, let $v$ be any node $v\in V$ such that the minimum $v,p$-cut $C_v$ is unfriendly and it has value $w:=\lambda_{p,v}$. We will show that $c'(v)=w$ by the end of the algorithm. Let $i$ be such that $(1+\delta)^i \leq w \leq (1+\delta)^{i+1}$. There are two cases:
\begin{itemize}

\item $C_v$ is unfriendly because $|E( \{v \}, V\setminus C_v )| > 0.6 \cdot \deg(v)$: 
In this case, Lemma~\ref{unfriendly1} implies that for all $u \in C_v \setminus \{v\}$ we have $\lambda_{p,u} <0.8 \cdot w$ and therefore $c'(u)< 0.8 \cdot (1+\eps) \cdot w = 0.9 w < w/(1+\delta) \leq (1+\delta)^i$, implying that $u \notin T_i$.
Thus, $T_i \cap C_v = \{ v \}$ and the minimum isolating cut for $v$ with respect to $T_i$ has value at most $\delta(C_v)=w$, which means that the \pC procedure will return a minimum $p,v$-cut (the cut $S_v$).

\item $C_v$ is unfriendly because $|E( \{p \}, C_v )| > 0.6 \cdot \deg(p)$:
 In this case, Lemma~\ref{unfriendly2} implies that for all $u \in (V \setminus C_v) \setminus \{p\}$ we have $\lambda_{p,u} <0.8 \cdot w$ and therefore $c'(u)<(1+\delta)^i$ and $u \notin T_i$.
 Thus, $(V \setminus C_v) \cap T_i = \{ p \}$ and thus the minimum isolating cut for $p$ with respect to $T_i$ has value at most $\delta(C_v)=w$, which means that the \pC procedure will return a minimum $p,v$-cut (the cut $S_p$).
 \end{itemize}

\end{proof}

\section{The Gomory-Hu Tree Algorithm}
\label{sec:GHalg}

In this section we explain how to embed our new friendly cut sparsifiers (and the complementary algorithm for unfriendly cuts) into the recent algorithms for \GHT.
We first present an overview of our methods, following by the technical details required to prove Theorem~\ref{main:alg} and the reduction from \GHT to friendly minimum cut sparsifiers in Theorem~\ref{thm:equiv}.


\subsection{Technical Overview.}

It is likely that each of the recent $n^{2+o(1)}$ algorithms for \GHT \cite{LPS21_focs,AKT21_focs,Zhang21} can be sped up by using our sparsifier instead of Nagamochi-Ibaraki. 
However, as we attempt to explain next, due to the subtle combination of several ingredients in each of these algorithms, it has to be done in just the right way.
The order in which the different ingredients is invoked is different in each of the three algorithms; we have found the presentation of Abboud, Krauthgamer, and Trabelsi \cite{AKT21_focs} to be the easiest to modify.
Let us now explain the issues that arise in this process.

All of these algorithms follow the paradigm of the Gomory-Hu algorithm where there is an intermediate tree $T$ that gets iteratively refined until it becomes the full \GHT.
A refinement step splits one of the super-nodes $V_i$ of $T$ into two or more super-nodes, using minimum $s,t$-cuts that are computed in the \emph{auxiliary graph} $G_i$ of the super-node $V_i$.
The most expensive (and interesting) part of all three algorithms is to compute single-source minimum cuts from a pivot $p \in V_i$ to the other nodes in $V_i$.
Importantly, even though there are $\Omega(n)$ refinement steps throughout the algorithm, the total size of all auxiliary can be upper bounded by $O(m \log n)$.
Therefore, when refining a super-node on $n_i$ nodes and $m_i$ edges, it is important to only spend time proportional to $n_i$ and $m_i$, not $n$ and $m$.

Our friendly sparsification result of Theorem~\ref{thm:friendly} and the algorithm of Theorem~\ref{thm:nonfriendly} are easy to adapt when solving the single-source problem in $(m+n^{1.75})^{1+o(1)}$ time (assuming linear-time \MF).
However, the sparsification result only works for \emph{simple} graphs, and although our input graph is simple, its auxiliary graphs are not.
This is the source of all troubles.
We need to be able to solve the single-source problem in $G_i$ while only paying  $(m_i+n_i^{1.75})^{1+o(1)}$ time, not $(m+n^{1.75})^{1+o(1)}$, but this seems impossible because we can only sparsify the auxiliary graph directly down to $n \sqrt{w}$ edges, not $n_i \sqrt{w}$.
This was not an issue with Nagamochi-Ibaraki sparsification because it can sparsify the auxiliary graph down to $O(n_i w)$ edges.

One approach for resolving this issue (that we were unable to make work) is to first compute a friendly $w$-cut sparsifier $H_w$, for all $w \in \{2^i\}_{i=0}^{\log n}$, then compute the \GHT of each $H_w$, and then use them to answer queries efficiently when computing the \GHT of $G$.
However this approach is not efficient enough because it is unclear how to compute the \GHT of $H_w$ faster than using the $(m \sqrt{n})^{1+o(1)}$ bound for \GHT \cite{LPS21_focs} which gives $n\sqrt{w} \cdot \sqrt{n} = \Omega(n^2)$.
We need to be able to sparsify each auxiliary graph inside the \GHT algorithm.

The approach that does work for us is to sparsify each auxiliary graph \emph{using the same original sparsifier of the input graph}.
We first compute a friendly $w$-cut sparsifier $H_w$ of the simple graph $G$, for all $w \in \{2^i\}_{i=0}^{\log n}$, and then whenever we have an auxiliary graph $G_i$ we compute the ``projection'' of each sparsifier on the auxiliary graph.
The resulting \emph{sparsified auxiliary graphs} (defined in Section~\ref{sec:GHprelims}) may have $m_{w,i} > n_i \sqrt{w}$ edges each, but we can prove that the total number of edges in all of these sparsified auxiliary graphs is upper bounded (up to log factors) by the number of edges in the sparsifier $\sum_i m_{w,i} = n \sqrt{w}$, for each $w$.

%

\subsection{Preliminaries}
\label{sec:GHprelims}
Since the result of this section is obtained by modifying specific theorems in \cite{AKT21_focs}, we will follow their notation and framework.
The reader is referred to \cite[Section 2.1]{AKT21_focs} for the necessary preliminaries on the Gomory-Hu tree algorithmic paradigm, intermediate trees, partition trees, GH-equivalent partition trees, auxiliary graphs, and $k$-partial trees.
On top of these definitions, we will use the following notion of sparsified capacitated auxiliary graph that will be used in our modified analysis (similar to previous used notions~\cite[Section 3.2]{AKT20_b} and~\cite[Section 3]{LPS21_focs}).

\paragraph{Capacitated Auxiliary Graphs}
Recall that an auxiliary graph (in the Gomory-Hu algorithm) of a super-node $V_i$ in a \GHEPT $T$ of a graph is obtained by contracting all connected components in $T \setminus V_i$. 
 A \CAG is constructed the same way as an auxiliary graph, but with parallel edges merged into a single edge that is weighted by their total capacity.

\paragraph{Sparsified \CAGs}
Suppose that $T$ is a \GHEPT of a simple graph $G$ and let $V'$ be one of its super-nodes and $G'$ be the corresponding auxiliary graph.
Let $H$ be a sparsifier of $G$ (that may contract subsets of the nodes).
The sparsified \CAG $H'$ of $G'$ is obtained from $H$ by contracting each connected component of $T \setminus \{V' \}$, i.e. in the same way that $G'$ was obtained from $G$.
Observe that if two nodes $x,y$ of $G$ were contracted in the sparsifier $H$ even though they belong to two different connected components $X$ and $Y$ of $T \setminus \{V' \}$, then they will cause the two components $X$ and $Y$ to be contracted together in the sparsified \CAG $H'$.
The next lemma shows that this (rather dramatic) consistency enforcement makes sense.

\begin{lemma}
Let $H$ be a friendly $w$-sparsifier of $G$ and let $G'$ be an auxiliary graph of $G$ obtained from super-node $V'$ in a \GHEPT $T$ of $G$.
Then the sparsified \CAG $H'$ preserves all friendly minimum $s,t$-cuts of value up to $w$ in $G'$ for any pair $s,t \in V'$.
\end{lemma}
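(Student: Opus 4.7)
The plan is to lift $C$ to a cut $\tilde{C}$ in $G$, verify that $\tilde{C}$ is friendly and of weight at most $w$ in $G$, invoke the friendly-sparsifier guarantee for $H$, and then push the conclusion back down to $H'$. Define $\tilde{C} = (\tilde{A}, \tilde{B})$ in $G$ by expanding each contracted super-node of $G'$ back to its constituent vertices; since the contraction $G \to G'$ bundles exactly the vertices in a common connected component of $T \setminus V'$, every such component lies entirely on one side of $\tilde{C}$. Counting edges with multiplicity gives $\delta_G(\tilde{C}) = \delta_{G'}(C) \leq w$, and because $T$ is a \GHEPT of $G$, minimum $s,t$-cut values are preserved under the auxiliary-graph contraction; thus $\tilde{C}$ is in fact a minimum $s,t$-cut in $G$.

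The crux is to show that $\tilde{C}$ is friendly in $G$. For any vertex $v \in V'$, this is immediate: $v$ is also a single vertex in $G'$ whose multi-counted neighbors match those in $G$, so its crossing fraction is identical in the two graphs, and the friendliness of $C$ transfers directly to $\tilde{C}$. For a vertex $v$ inside a contracted component $C_u$ (in particular $v \notin \{s,t\}$), I argue by contradiction: if $|E_G(\{v\}, \tilde{B})| > 0.6 \deg_G(v)$, then flipping $v$ to the other side of $\tilde{C}$ yields an $s,t$-cut of value
\[
  \delta_G(\tilde{C}) - 2|E_G(\{v\}, \tilde{B})| + \deg_G(v)
  \;<\; \delta_G(\tilde{C}),
\]
contradicting the minimality of $\tilde{C}$ in $G$.

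Once $\tilde{C}$ is known to be friendly with $\delta_G(\tilde{C}) \leq w$, the hypothesis that $H$ is a friendly $w$-cut sparsifier of $G$ gives that no edge of $\tilde{C}$ is deleted in $H$ and no two vertices on opposite sides of $\tilde{C}$ are identified in $H$. The extra contractions used to build $H'$ from $H$, namely merging each connected component of $T \setminus V'$ into a single vertex, respect $(\tilde{A}, \tilde{B})$ by construction, so the combined contraction defining $H'$ still respects $(\tilde{A}, \tilde{B})$. The edges of $C$ in $G'$, which correspond under the lift to edges of $\tilde{C}$, therefore survive and remain non-loops in $H'$, establishing that $H'$ preserves $C$.

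The main hurdle anticipated in this plan is the friendliness lifting for vertices inside a contracted component, since friendliness is a per-vertex condition and does not automatically descend from the single node $u$ in $G'$ to the many vertices comprising $C_u$ in $G$. The key observation that unlocks the argument is that the \GHEPT property promotes $\tilde{C}$ from a bare lift to a genuine minimum $s,t$-cut in $G$, after which the local-flip argument above closes the gap for every non-terminal vertex, while terminals and other vertices of $V'$ inherit friendliness directly from $C$.
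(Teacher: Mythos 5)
Your proof is correct and follows the same high-level plan as the paper: lift the cut $C$ from $G'$ to a cut $\tilde{C}$ in $G$, invoke the friendly $w$-sparsifier guarantee for $H$, and push the conclusion back through the contractions that produce $H'$. The one place where you go beyond what the paper writes is the friendliness lift, and this is a genuine, non-trivial detail. The paper's proof simply says ``suppose that it is friendly'' and immediately applies the sparsifier guarantee for $H$ (which requires friendliness \emph{in $G$}), even though the lemma hypothesis speaks of friendliness of a cut \emph{in $G'$}; for vertices inside a contracted super-node of $G'$, friendliness of $C$ in $G'$ is a condition on aggregated degrees and says nothing directly about per-vertex friendliness in $G$. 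You close this gap by noting that, thanks to the GHEPT property, $\tilde{C}$ is in fact a minimum $s,t$-cut in $G$, so every non-terminal vertex satisfies the (stronger) half-friendliness by the standard node-exchange argument, while $s,t\in V'$ inherit friendliness directly since their incident edge sets are unchanged under the contraction $G\to G'$. So the proposal is not a different route, but a more carefully spelled-out version of the paper's argument, and the extra care is warranted.
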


\begin{proof}
Suppose that $(S,T)$ is a minimum $s,t$-cut in $G'$ (and therefore also in $G$) of value $\leq w$, and moreover suppose that it is friendly.
The friendly $w$-sparsifier $H$ must preserve $(S,T)$ and therefore only contracts nodes from within $S$ and from within $T$.
This implies that none of the contractions performed on $H$ in order to obtain the sparsified auxiliary graph $H'$ may contract any node from $S$ with any node from $T$.
Therefore, $(S,T)$ is also preserved in $H'$.
\end{proof}

Next, we bound the total running time in the recursive algorithm. In order to be more general, we phrase it for a \PT. Clearly, any \GHEPT is also a \PT, and so we are left with proving the following lemma.

\begin{lemma}
\label{lem:total_edges_sparsifiers}
Let $H$ be a sparsifier of $G$ and let $T$ be any \GHEPT of $G$ with super-nodes $V_1,\ldots,V_k$ and let $H_i$ be the sparsified \CAG corresponding to $V_i$.
Then, $\sum_{i=1}^k |E(H_i)| = O(|E(H)|)$ and $\sum_{i=1}^k |V(H_i)| = O(|V(G)|)$.\footnote{Just like Lemma $3.12$~\cite{AKT20_b} and Lemma $4.1$~\cite{AKT21_stoc} that are used in the proof, this lemma holds even if $T$ is any \PT that is not necessarily a \GHEPT.}
\end{lemma}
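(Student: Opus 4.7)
The plan is to follow the structure of Lemma~3.12 in~\cite{AKT20_b} and Lemma~4.1 in~\cite{AKT21_stoc}, which prove these exact bounds in the special case $H=G$, and generalize the argument to an arbitrary contracted-graph sparsifier $H$ of $G$. Throughout, I would exploit the fact (noted in the footnote) that $T$ need not be a \GHEPT, so the argument is purely combinatorial on the pair $(H,T)$.

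For the vertex bound, first I would classify each vertex of $H_i$ as either \emph{local} (an $H$-node whose underlying set of $G$-nodes is contained in $V_i$) or \emph{aggregate} (obtained by contracting one or more connected components of $T\setminus V_i$, possibly merged together by an $H$-contraction that straddles several components). Summing local vertices over $i$ yields at most $|V(H)|\le |V(G)|$, since each $H$-node is local for at most one $V_i$. For aggregate vertices, the count in $H_i$ is at most the number of connected components of $T\setminus V_i$, which equals $\deg_T(V_i)$; any straddling-induced merging only reduces this. Since $\sum_i \deg_T(V_i) = 2(k-1) \le 2|V(G)|$, the vertex bound follows.

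For the edge bound, each \CAG edge $\bar e$ of $H_i$ admits a representative edge of $H$ (pick any $H$-edge whose endpoints map to the two endpoints of $\bar e$ in $H_i$). I would partition the \CAG edges by type---both endpoints local (internal), one local and one aggregate (boundary), or both aggregate (external)---and charge each to its representative. Internal \CAG edges in $H_i$ correspond to edges of $H$ with both endpoints in $V_i$, contributing at most $\sum_i |E(H[V_i])| \le |E(H)|$ in total. For boundary and external edges I would use the same amortized charging as in the cited lemmas: a fixed $H$-edge $e$ between $H$-nodes mapped to super-nodes $V_a$ and $V_b$ can serve as representative in only a bounded number of $H_i$'s after the parallel-edge collapse inside each \CAG is accounted for, so the total charge to $e$ is $O(1)$.

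The main obstacle I expect is handling $H$-nodes that straddle multiple super-nodes of $T$, which is possible because $H$'s contractions are not required to respect the partition induced by $T$. Such straddling causes additional merging in $H_i$ beyond the straight contraction of $T\setminus V_i$ and could in principle produce \CAG vertices or edges whose behavior deviates from the $H=G$ case. I would resolve this by noting that forced merging can only \emph{identify} vertices of $H_i$, hence only reduce both $|V(H_i)|$ and $|E(H_i)|$; any \CAG edge incident to a merged vertex can still be assigned a representative $H$-edge touching one of the merged components, so the charging scheme survives without modification. This same robustness is why the argument, like those in the cited lemmas, works for any partition tree and not just a \GHEPT.
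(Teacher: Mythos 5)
Your vertex-bound argument is fine and is essentially what the paper does (the paper bounds local vertices of $H_i$ by $|V_i|$ whereas you bound them by counting $H$-nodes, but both sum to $O(|V(G)|)$, and the $\sum_i \deg_T(V_i)=2(k-1)$ count for aggregates is identical).

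The edge-bound argument, however, has a genuine gap, and it is not where you anticipated. You correctly identify straddling $H$-nodes as the structural complication and correctly observe that the forced merging they induce can only \emph{reduce} $|E(H_i)|$. But your claim that ``a fixed $H$-edge $e$ $\ldots$ can serve as representative in only a bounded number of $H_i$'s $\ldots$ so the total charge to $e$ is $O(1)$'' is false even in the consistent case $H=G$. Take $G$ to be the $n$-cycle $v_1v_2,\,v_2v_3,\ldots,v_nv_1$ and $T$ to be the path $V_1-\cdots-V_n$ with singleton super-nodes (this is a valid \GHT of the cycle, hence a valid \GHEPT). For every interior $i$, the \CAG $H_i$ has an aggregate--aggregate edge between the two components $L_i=\{v_1,\ldots,v_{i-1}\}$ and $R_i=\{v_{i+1},\ldots,v_n\}$, and the \emph{only} $G$-edge crossing between $L_i$ and $R_i$ is $v_1v_n$. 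So $v_1v_n$ is the unique representative of $\Theta(n)$ distinct \CAG edges; no choice of representatives avoids charging it $\Theta(n)$ times. The lemma still holds here ($\sum_i|E(H_i)| = 3n-O(1) = O(|E(G)|)$), but the bookkeeping is not ``$O(1)$ charges per $H$-edge.'' Whatever global argument Lemma~3.12 of~\cite{AKT20_b} uses, it is not the local per-edge bound you wrote down, so rebuilding the argument on that premise would not go through.

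The paper sidesteps exactly this by \emph{not} re-deriving Lemma~3.12 at all. It takes your ``merging only reduces edges'' intuition and turns it into a clean reduction: split every straddling $H$-node $u$ into one copy $u'$ per super-node it intersects, reattaching each edge of $u$ to an arbitrary copy, obtaining a graph $H'$ that is consistent with $T$ and satisfies $|E(H')|=|E(H)|$. Lemma~3.12 is then invoked as a black box on $H'$ to get $\sum_i |E(H'_i)| = O(|E(H')|)$, and finally one shows $|E(H_i)|\le|E(H'_i)|$ via an injection of edges: if an edge $uw$ survives (is not a self-loop) in $H_i$, then its image $u'w'$ cannot be a self-loop in $H'_i$, for otherwise $u'$ and $w'$ lie in the same component of $T\setminus V_i$, which would force $u$ and $w$ to be merged in $H_i$, a contradiction. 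To repair your proposal, you should make this auxiliary construction of $H'$ explicit and invoke Lemma~3.12 on it rather than attempting to re-prove the charging from scratch; the ``merging only reduces'' observation then does exactly the work you wanted it to, but as a comparison of $H_i$ against $H'_i$, not as a license to reuse a charging scheme whose per-edge $O(1)$ bound does not actually hold.
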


To prove the edges part of Lemma~\ref{lem:total_edges_sparsifiers}, one might first think to apply Lemma $3.12$~\cite{AKT20_b} that is used to bound the total sizes of all \CAGs of a \GHEPT $T$ for an input graph $G$. However, the sparsified \CAGs in our context might not be consistent with the tree, leading to contractions that make this direction problematic. Our strategy in overcoming this issue is to construct a new sparsifier $H'$ from $H$ with precisely the same number of edges, such that the \CAG and the sparsified \CAG of each super-node $V_i\in T$ for $H'$ are the same. This way, we could directly apply Lemma $3.12$.

\begin{proof}[Proof of Lemma~\ref{lem:total_edges_sparsifiers}]
First, in order to bound the total number of nodes in all \CAGs $\sum_{i=1}^{k}V(H_i)$, we sum in a way similar to Lemma $4.1$ in~\cite{AKT21_stoc}. Observe that also here, each sparsified \CAG $H_i$ has at most $|V_i|+\deg_T(V_i)$ nodes, which sums up to $O(|V(G)|)$, as claimed.

Second, for the edge bound, construct a graph $H'$ from $H$ by partitioning each node $u$ of $H$ into the parts intersecting each super-node of $T$, connecting each edge previously connected to $u$ to an arbitrary new part of $u$. Observe that $H'$ has the same number of edges as $H'$ (but maybe a higher number of nodes), and that no node of $H'$ intersects with more than one super-nodes of $T$. By Lemma $3.12$ in~\cite{AKT20_b}, the total number of edges in all \CAGs satisfies $\sum_{i=1}^{k} |E(H'_i)|\leq |E(H')|$ which equals $|E(H)|$.

It remains to bound the total number of edges in each sparsified \CAG $H_i$ of $H$ by its corresponding number in the \CAG $H'_i$ of $H'$. This is correct because each edge in $H_i$ can be identified with an edge in $H'_i$, as follows. Let $e=uw$ be an edge in $H$ that contributes to $H_i$.
Let $u',w'$ be the new parts of $u,w$, respectively, such that the edge $uw$ in $H$ became $u'w'$ in $H'$. The only way the edge $u'w'\in H'$ would not contribute one to the total number of edges in $H'_i$ is if both $u'$ and $w'$ belong to the same subtree in $T$ neighboring $V_i$. However, this would imply that $u,w$ were merged together in $H_i$, in contradiction to the edge $u,w$ showing in $H_i$.
\end{proof}

\subsection{The Single-Source Algorithm}
This is the main algorithmic result that uses all the ingredients in order to solve the single-source problem in an auxiliary graph.
It augments and improves Theorem 4.2 in \cite{AKT21_focs} using the new sparsifiers and by using an additional procedure for the unfriendly cuts.

\begin{theorem}
\label{thm:SSGH}
Suppose there is an algorithm solving \MF on undirected graphs with $m$ edges of polynomially bounded weight in $m^{1+o(1)}$ time.
Given a simple graph $G=(V,E)$ on $N=|V|$ nodes with a designated pivot $p \in V$,
an auxiliary graph $G'$ on the graph nodes $V' \subseteq V(G)$ with $n=|V(G')|, m=|E(G')|$,
a  $2w$-friendly-sparsified \CAG $H_w$ of $G'$ with $n_w$ nodes and $m_w$ edges for each $w \in \{2^i \}_{i=0}^{\log{N}}$,
and a perturbed version $\tilde{G}$ of $G'$ with unique minimum cuts,
one can compute the minimum $(p,v)$-cut in $\tilde{G}$ for all nodes $v \in V'$
in total time
$$m^{1+o(1)} + \sum_{w \in \{2^i \}_{i=0}^{\log{N}}} \min\{ \tilde{O}(m_w w), \frac{N^{1+o(1)}}{w} \cdot m_w \}.$$
Using the existing \MF algorithms, the time bound is 
$$\left( m+n^{1.5}+ \sum_{w \in \{2^i \}_{i=0}^{\log{N}}} \min\left\{ m_w w, \frac{N}{w} \cdot (m_w+n_w^{1.5}) \right\} \right)^{1+o(1)}.$$

\end{theorem}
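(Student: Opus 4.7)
The plan is to decompose the single-source problem on $\tilde G$ into two complementary sub-problems: handling every \emph{unfriendly} minimum $p,v$-cut in one shot via Theorem~\ref{thm:nonfriendly}, and handling every \emph{friendly} minimum $p,v$-cut weight-class by weight-class, using the corresponding sparsified \CAG $H_w$ in place of the Nagamochi--Ibaraki sparsifier that drives Theorem~4.2 of~\cite{AKT21_focs}. The final output for each $v$ is simply the minimum-value cut seen across both phases.

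First I would invoke Theorem~\ref{thm:nonfriendly} on $\tilde G$ to obtain, for each $v\in V'\setminus\{p\}$, a $p,v$-cut $S^{\mathrm{un}}_v$ whose value equals the minimum $p,v$-cut in $\tilde G$ whenever some such minimum cut is unfriendly. Since $\tilde G$ has $n$ nodes, $m$ edges, and polynomially bounded capacities, this step costs $m^{1+o(1)}$ time under the assumed almost-linear time \MF algorithm, and accounts for the leading $m^{1+o(1)}$ term in the stated bound. Under current \MF algorithms the cost becomes $(m+n^{1.5})^{1+o(1)}$, matching the first two summands in the unconditional version of the theorem.

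Next, for each scale $w\in\{2^i\}_{i=0}^{\log N}$, I would run on $H_w$ the two alternative procedures from~\cite{AKT21_focs} and retain whichever is cheaper: option~(a) computes a $w$-partial \GHT of $H_w$ via~\cite{BHKP07} in time $\tO(m_w w)$; option~(b) combines the \expanders{} framework with \pC{} on $H_w$, at cost dominated by $(N/w)^{1+o(1)}$ calls to \MF{} on an $m_w$-edge graph, giving $\tfrac{N^{1+o(1)}}{w}\cdot m_w$ (respectively $\tfrac{N}{w}\cdot(m_w+n_w^{1.5})^{1+o(1)}$ with the present \MF). Correctness uses the defining property of a friendly $2w$-cut sparsifier, lifted to auxiliary graphs by the preservation lemma for sparsified \CAGs{} in Section~\ref{sec:GHprelims}: any minimum $p,v$-cut in $\tilde G$ that is friendly and of value in $[w,2w]$ survives intact in $H_w$, so that both (a) and (b) correctly certify it; the perturbation ensures this cut is unique and thus not confused with other minimum cuts of the same value.

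The main obstacle I anticipate is the consistency between the two phases and the fact that $G'$ is a weighted multigraph for which no direct friendly-sparsification algorithm is known. One must argue that every minimum $p,v$-cut in $\tilde G$ falls into exactly one of the two phases' guarantees — friendly ones are preserved by the projected sparsifier $H_w$ at the correct scale $w$, while unfriendly ones are caught by the Theorem~\ref{thm:nonfriendly} oracle on $\tilde G$ itself — and moreover that the accounting in the running time is tight, using Lemma~\ref{lem:total_edges_sparsifiers} to bound the construction cost of all $H_w$'s by the sizes of the underlying simple-graph sparsifiers rather than by $m$ times the number of auxiliary graphs. This is precisely why we project the simple-graph sparsifier of $G$ down to $H_w$ through the tree contractions, rather than attempting to sparsify $G'$ directly.
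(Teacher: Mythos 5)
Your proposal is correct and follows essentially the same route as the paper's proof: initialize with the unfriendly-cut oracle of Theorem~\ref{thm:nonfriendly}, then for each weight scale $w$ run on $H_w$ whichever is cheaper of the $w$-partial-tree construction and the \expanders/\pC pipeline, relying on the sparsified-\CAG preservation lemma to justify replacing Nagamochi--Ibaraki with the friendly sparsifier. The only detail you gloss over, which the paper handles explicitly via its analogue of Lemma~2.7 of \cite{AKT21_focs}, is that the sparsifier must inherit the \emph{same} perturbation $\epsilon(e)$ as $\tilde G$ so that the unique minimum cut in $\tilde G$ remains unique and identifiable in the perturbed $H_w$, rather than merely observing that $\tilde G$ itself has unique minimum cuts.
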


\begin{proof}

The algorithm that was used to prove Theorem 4.2 in \cite{AKT21_focs} suffices, if we make the following three changes.

\begin{itemize}
\item In the algorithm of \cite{AKT21_focs} there is an estimate $c'(v)$ for each node $v \in V$ that is always an upper bound on the minimum $p,v$-cut value $\lambda_{p,v}$ and each estimate is witnessed by a $p,v$-cut of the same value. Throughout the algorithm, when cuts are found, the estimates decrease until, in the end, they are guaranteed to equal the minimal values.
In \cite{AKT21_focs} the estimates were initialized using the trivial cuts $(\{v\}, V\setminus \{v \})$. 

In this paper, instead, we use the algorithm of Theorem~\ref{thm:nonfriendly} to compute the minimum $p,v$-cuts that are unfriendly for all $v\in V$ and use them as the initial estimates $c'(v)$ and witness cuts $C_v$.
After this initialization, the algorithm will only use \MF calls on friendly sparsifiers but not on $G$ and therefore it will only be guaranteed to find minimum friendly cuts. 
But since whenever we find a cut we only use it if it is better than the estimate, and the optimal cuts that are unfriendly are at hand from the beginning, the rest of the algorithm can operate as if the sparsifiers preserved all cuts of value $\leq 2w$ (as was actually the case when the Nagamochi-Ibaraki sparsifier $G_w$ was used in \cite{AKT21_focs}).
The additional running time for this stage is $m^{1+o(1)}$ assuming an almost-linear time \MF algorithm, and is $\tilde{O}(m+n^{1.5})$ with existing algorithms.

\item The second change is that we use the $2w$-friendly-sparsified auxiliary graph $H_w$ instead of the Nagamochi-Ibaraki sparsifier $G_w$, in every stage $w$ for all $w \in \{2^i \}_{i=0}^{\log{N}}$. 
As discussed above, this does not affect the correctness. 
One subtlety is that the algorithm is also working with a perturbed version $\tilde{G}$ of $G$ (for rather technical reasons\footnote{The purpose was to enforce unique minimum cuts. An alternative approach is to work with minimal (also known as latest) cuts from the source $p$.}) and so, just like in Lemma 2.7 in \cite{AKT21_focs} the sparsifier can have the same perturbation as $G$ by simply adding the same amount $\epsilon(e)$ to each edge $e$ in $H$ that was added to it in $G$.
As a result, the running time bound for each query or each call to the \pC procedure improves from $\tilde{O}(nw)$ to $\tilde{O}(m_w)$, and the upper bound on the time of stage $w$ improves from $N^{1+o(1)}/w \cdot T(n,nw)$ to $N^{1+o(1)}/w \cdot T(n_w,m_w)$, where $T(n,m)$ is \MF time.

\item Finally, in \cite{AKT21_focs} the single-source algorithm assumes that the auxiliary graph has connectivity $\geq \sqrt{N}$ and therefore it can avoid any stage with $w < \sqrt{N}$. This was useful because the existing \MF algorithms have running time $\tilde{O}(m+n^{1.5})$ rather than $m^{1+o(1)}$ and so when they had $m=O(nw)$ from Nagamochi-Ibaraki and $w>\sqrt{n}$ they could assume that the extra $n^{1.5}$ is negligible.
And the way they enforce that the theorem is always used (inside their full \GHT algorithm) with auxiliary graphs with connectivity at least $\sqrt{N}$ is by having a preliminary stage where they compute a $\sqrt{N}$-partial tree \cite{BHKP07}.

Here, we do not assume a lower bound on the connectivity of the auxiliary graph and therefore have to handle stages where $w$ is very small. We handle them by computing a $w$-partial tree \cite{BHKP07} on the sparsifier $H_w$, giving an upper bound of $\tilde{O}(m_w w)$ on the time of stage $w$ which could be better than the $N/w \cdot T(n_w,m_w)$ bound, where $T(n,m)$ is \MF time.

\end{itemize}

To summarize, the correctness of the algorithm is preserved after making these changes, and the time complexity improves. The time for stage $w$ becomes $\min\{ \tilde{O}(m_w w), N^{1+o(1)} T(n_w,m_w)/w \}$ and the time bounds in the theorem are obtained by adding the $m^{1+o(1)}$ cost of the expander decompositions (that are done in exactly the same way as in \cite{AKT21_focs}, i.e. on $G'$) and the $\tilde{O}(T(n,m))$ cost for the unfriendly single-source cuts algorithm.

\end{proof}

\subsection{The Full Gomory-Hu Tree Algorithm} 
We are now ready to prove the main theorem of this paper, Theorem~\ref{main:alg} (as well as the reduction from \GHT to sparsification in Theorem~\ref{thm:equiv}).

The algorithm uses a randomized pivot strategy to compute the \GHT recursively using calls to the single-source algorithm of Theorem~\ref{thm:SSGH}.
It is very similar to the algorithm in Section 4.2 in \cite{AKT21_focs} with the following two differences:

\begin{itemize}
\item The step (Step 1) that computes a $\sqrt{N}$-partial tree is removed, since our single-source algorithm no longer needs the lower bound on the connectivity of an auxiliary graph.
\item Instead, the first step of the algorithm (the new Step 1) computes a $2w$-friendly-sparsifier $H_w$ for each $w \in \{2^i \}_{i=0}^{\log{N}}$ using Theorem~\ref{thm:friendly}.
\item When using the single-source algorithm (in Step 2(c)) we use the algorithm of the new Theorem~\ref{thm:SSGH} rather than Theorem 4.2 of \cite{AKT21_focs}. And thus in Step 2(b) when computing the auxiliary graph $G_i$ and its perturbed version, the algorithm also prepares the $2w$-friendly-sparsified auxiliary graph $H_{w,i}$ of $G_i$ for each $w \in \{2^i \}_{i=0}^{\log{N}}$.
\end{itemize}

The proof of correctness remains unchanged. The running time analysis is a bit more subtle and it uses Lemma~\ref{lem:total_edges_sparsifiers}. 
We still have a recursive algorithm with $O(\log N)$ levels and the goal is to bound the total time of a single level.
Suppose that the \GHEPT that corresponds to a level has $k$ super-nodes $V_1,\ldots,V_k$.
Denote the number of nodes and edges in the auxiliary graph $G_i$ of super-node $V_i$ by $n_i$ and $m_i$, and denote the number of nodes and edges in the $2w$-friendly-sparsified auxiliary graph $H_{w,i}$ of $G_i$ by $n_{w,i}$ and $m_{w,i}$.
Then, by Theorem~\ref{thm:SSGH}, and assuming an almost-linear time \MF algorithm, the total time for the level is:
$$
\sum_{i=1}^k \left( m_i^{1+o(1)} + \sum_{w \in \{2^i \}_{i=0}^{\log{N}}} \min\{ \tilde{O}(m_{w,i} w), \frac{N^{1+o(1)}}{w} \cdot m_{w,i} \} \right).
$$
By Lemma $3.12$ in \cite{AKT20_b} we have that $\sum_{i=1}^k m_i=O(M)$ and by Lemma~\ref{lem:total_edges_sparsifiers} we have that $\sum_{i=1}^k m_{w,i}=O(|E(H_w)|)=\tilde{O}(N\sqrt{w})$. Therefore, the total time of a level can be upper bounded by:
$$
M^{1+o(1)} + \sum_{i=1}^k \sum_{w \in \{2^i \}_{i=0}^{\log{N}}} m_{w,i} \cdot \min\{ w, \frac{N}{w} \} \cdot N^{o(1)}
$$
$$
\leq M^{1+o(1)} + \sum_{w \in \{2^i \}_{i=0}^{\log{N}}} N\sqrt{w} \cdot \min\{ w, \frac{N}{w} \} \cdot N^{o(1)}.
$$
Since $\min\{w^{1.5},N/w^{0.5}\} \leq N^{0.75}$ for all $w \leq N$, with $w=\sqrt{N}$ achieving the maximum, we obtain the $M^{1+o(1)}+N^{1.75+o(1)}$ upper bound.
To prove the reduction from \GHT to fast sparsification in Theorem~\ref{thm:equiv} observe that if the edge bound on the sparsifiers was $\tilde{O}(N)$ instead of $\tilde{O}(N \sqrt{w})$ then the above expression would result in an $M^{1+o(1)}+N^{1.5+o(1)}$ upper bound.
Finally, using the existing \MF algorithms the upper bound on a level becomes:
$$
\sum_{i=1}^k \left( \left( m_i+n_i^{1.5}+ \sum_{w \in \{2^i \}_{i=0}^{\log{N}}} \min\left\{ m_{w,i} w, \ \frac{N}{w} \cdot (m_{w,i}+n_{w,i}^{1.5}) \right\} \right)^{1+o(1)} \right).
$$
By Lemma $4.1$ in \cite{AKT21_stoc} we have that $\sum_{i=1}^k {n_i}=O(N)$ and by Lemma~\ref{lem:total_edges_sparsifiers} we have that $\sum_{i=1}^k {n_{w,i}}=O(N)$, thus we can upper bound the above by: 
$$
  \left( M+N^{1.5}+ \sum_{w \in \{2^i \}_{i=0}^{\log{N}}} \sum_{i=1}^k \min\left\{ m_{w,i} w, \ \frac{N}{w} \cdot (m_{w,i}+n_{w,i}^{1.5}) \right\} \right)^{1+o(1)}
$$
$$
\leq  \left( M+N^{1.5}+ \sum_{w \in \{2^i \}_{i=0}^{\log{N}}} \min\left\{ (\sum_{i=1}^k m_{w,i}) w, \ \frac{N}{w} \cdot \sum_{i=1}^k (m_{w,i}+n_{w,i}^{1.5}) \right\} \right)^{1+o(1)},
$$
because $\sum_{i} \min\{ x_i,y_i\} \leq \min \{ \sum_i {x_i}, \sum_i y_i\}$, which now gives:
$$
\leq  \left( M+N^{1.5}+ \sum_{w \in \{2^i \}_{i=0}^{\log{N}}} \min\left\{ N \sqrt{w} \cdot w, \ \frac{N}{w} \cdot (N \sqrt{w}+N^{1.5}) \right\} \right)^{1+o(1)},
$$
and since $\min\{ N w^{1.5}, N^2/w^{0.5}+N^{2.5}/w \} \leq N^{1.9}$ for all $w\leq N$, with $w=N^{3/5}$ achieving the maximum, the upper bound becomes $M^{1+o(1)}+N^{1.9+o(1)}$.

\section{Terminal Sparsification}
\label{sec:terminal}

In this section we use the proof of Theorem~\ref{thm:friendly} in Section~\ref{sec:detfriendly}, our deterministic algorithm for friendly sparsification using an expander decomposition with node-demands, in order to obtain \emph{terminal sparsifiers} (Definition~\ref{def:terminal} in Section~\ref{subsec:terminal}).
Our goal is to prove Theorem~\ref{thm:terminals}.

%

\begin{proof}[Proof of Theorem~\ref{thm:terminals}]
Let us only explain the differences from the proof of Theorem~\ref{thm:friendly} in Section~\ref{sec:detfriendly}.
The demand function is defined a bit differently:
we still set $\dem(v)=\phi^{-1}\sqrt{w}$ for all non-terminals $v \in V\setminus T$, but the demand of terminals is larger, we set $\dem(v)=3\phi^{-1}w$ for all $v \in T$.
This changes the total demand to $\dem(V) = n\cdot \phi^{-1}\sqrt{w} + 3|T|\cdot \phi^{-1}w$, and the total number of outer-edges to
\[
  m_0
  := \sum_{i} |E(H_i, V \setminus H_i)|
  \leq B\cdot \phi \dem(V)
  =    B n\sqrt{w} + 3B |T| {w}
  \leq n^{o(1)}\cdot (n \sqrt{w} + |T|w). 
\]

This causes the upper bound of Claim~\ref{claim:edgebound} on the number of edges in the sparsifier to change to $n^{o(1)}\cdot (n \sqrt{w} + |T|w)$, simply by plugging in the new value of $m_0$.

The only interesting change is in the correctness proof of Claim~\ref{claim:monochromatic}, which becomes the following.

\begin{claim}
Let $S \subseteq V$ be a minimum $s,t$-cut in $G$ where $s,t \in T$ that has weight $\delta(S) \leq w$.
Then $G'$ preserves this cut $S$ (i.e., never contracts two nodes that are on different sides).
\end{claim}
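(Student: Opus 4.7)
The plan is to mirror the contradiction argument of Claim~\ref{claim:monochromatic}, but replace the friendliness hypothesis on $S$ with two substitute mechanisms: the raised demand $3\phi^{-1}w$ on terminals, and the minimality of $S$ for non-terminals. Concretely, I would assume for contradiction that two nodes $x\in L:=H_i\cap S$ and $y\in R:=H_i\setminus L$ both survive shaving in the same expander $H_i'$. Since $H_i$ is a $(\phi,\dem_i)$-expander and $|E(L,R)|\le \delta(S)\le w$, the expansion guarantee still gives
\[
\min(\dem_i(L),\dem_i(R))\;\le\;|E(L,R)|/\phi\;\le\;w/\phi.
\]
Because $\dem_i\ge \dem$, it suffices to show $\min(\dem(L),\dem(R))>w/\phi$ to reach a contradiction.

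To lower bound $\dem(L)$ I would split on whether $x$ is a terminal. If $x\in T$, then $x$ alone contributes $\dem(x)=3\phi^{-1}w$, so $\dem(L)\ge 3w/\phi$ immediately. If $x\notin T$, then in particular $x\neq s$ and $x\neq t$, so $S\setminus\{x\}$ is still a valid $s,t$-cut; minimality of $S$ yields $\delta(S\setminus\{x\})\ge\delta(S)$, which rearranges to $|E(\{x\},S)|\ge|E(\{x\},V\setminus S)|$, i.e., $x$ is $1/2$-friendly with respect to $S$. Combining this with the fact that $x$ survived shaving ($\deg_G(x)\ge 10\sqrt{w}$ and $|E(\{x\},V\setminus H_i)|\le 0.1\deg_G(x)$), I get
\[
|E(\{x\},L)|\;\ge\;(0.5-0.1)\deg_G(x)\;\ge\;4\sqrt{w}.
\]
Using simplicity of $G$, distinct edges incident to $x$ reach distinct nodes of $L$, so $|L|\ge 4\sqrt{w}$. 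Since every node has demand at least $\phi^{-1}\sqrt{w}$, this gives $\dem(L)\ge 4\phi^{-1}w$. A symmetric argument applied to $y$ yields $\dem(R)\ge 3\phi^{-1}w$, and together $\min(\dem(L),\dem(R))\ge 3w/\phi>w/\phi$, the desired contradiction.

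The main conceptual point, and the only real departure from the proof of Claim~\ref{claim:monochromatic}, is the observation that the minimality of an $s,t$-cut forces \emph{every} non-terminal vertex to be $1/2$-friendly toward its own side, which is actually stronger than the $0.4$-friendliness previously exploited; the boosted demand $3\phi^{-1}w$ on terminals is then needed precisely to cover $s$ and $t$ themselves, the two vertices whose flipping would destroy the cut and for which the minimality argument fails. I expect the only subtlety will be to verify that these two mechanisms jointly handle every combination of terminal/non-terminal status for the pair $(x,y)$; all other steps are essentially bookkeeping modifications of the corresponding steps in Section~\ref{sec:detfriendly}.
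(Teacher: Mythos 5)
Your proposal is correct and follows essentially the same route as the paper: a case split on whether the contracted node is a terminal, using the boosted demand $3\phi^{-1}w$ for terminals and using minimality of the $s,t$-cut to derive $1/2$-friendliness for non-terminals, then feeding this into the expansion lower bound exactly as in Claim~\ref{claim:monochromatic}. The only (cosmetic) difference is that you track the sharper $(0.5-0.1)$ constant, whereas the paper weakens $0.5$ to $0.4$ in order to literally re-use the earlier claim verbatim.
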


The proof is similar to that of Claim~\ref{claim:monochromatic}, but the argument for lower bounding $\dem(L)$ is different (and does not rely on the friendliness of $S$ anymore).
There are two cases: either $x \in T$ or $x \notin T$.
If $x \in T$ we are in an easy case because $\dem(L)\geq \dem(x)=3 \phi^{-1}w$.
If $x \notin T$ then, since $S$ is a minimum $s,t$-cut for some pair of nodes where $x\neq s, x \neq t$ then $|E(\{x\},S)| \geq 0.5 \deg_G(x) \geq 0.4 \deg_G(x)$ or else moving $x$ to the other side gives a better $s,t$-cut.  From this, the proof can be carried on in the same way.
\end{proof}

\section{Conclusions}

In this paper we have put forward the notion of \emph{friendly cut sparsification} and proved that all friendly cuts with $\leq w$ edges in a simple graph can be preserved with a contracted graph on only $\tilde{O}(n \sqrt{w})$ edges.
This edge bound is tight up to log factors, and moreover, the sparsifier can be computed algorithmically in near-linear time.
Plugging this sparsification result instead of Nagamochi-Ibaraki's $O(nw)$-edge sparsifiers into the recent algorithms for \GHT (along with a new subroutine for single-source \emph{unfriendly} minimum cuts) leads to the new state-of-the-art bound of $\min\{ m+n^{1.75}, m \sqrt{n} \}\cdot n^{o(1)} $ for computing a \GHT of a simple graph.

Furthermore, we have shown that an $O(n \log n)$ edge bound is possible (for all $w$) if we only want to preserve friendly \emph{minimum} $s,t$-cuts in a simple graph, and that such a sparsifier can be computed in $(m+n^{1.5})^{1+o(1)}$ time \emph{if and only if} a \GHT can be computed in that time.
Since $\Omega(m+n^{1.5})$ is also a conditional lower bound for \GHT in simple graphs, unless the cubic barrier for multigraphs can be broken, this shows that better sparsification is the way towards tight bounds.
In a sense, the only remaining question for \GHT in simple graphs is that of computing better friendly sparsifiers.

It is plausible that the ``dynamic pivot'' derandomization technique of Abboud, Krauthgamer, and Trabelsi \cite{AKT21_focs} can also be combined with our sparsifiers, leading to a deterministic $(m+n^{1.75})^{1+o(1)}$ upper bound, assuming deterministic $m^{1+o(1)}$-time \MF.
And it is also plausible that the technique of Zhang \cite{Zhang21} for avoiding the $n^{o(1)}$ factors from using expander decomposition with node-demands could be incorporated to give a randomized $\tilde{O}(m+n^{1.75})$ algorithm, assuming $\tilde{O}(m)$-time \MF.

A side corollary of our results is a subcubic $n^{2.5+o(1)}$ algorihtm for single-source minimum-cuts in simple gaphs (and therefore also \GHT), that is considerably simpler than existing methods \cite{AKT21_stoc,AKT21_focs,LPS21_focs,Zhang21}:
execute the classical Gomory-Hu $n\cdot \MF(m)$ algorithm on the friendly $n$-cut sparsifier of Theorem~\ref{thm:friendly} with $m=\tilde{O}(n^{1.5})$ and use the algorithm of Theorem~\ref{thm:nonfriendly} for the unfriendly cuts.

{ \ifprocs
  \else
  \small
  \fi
\bibliographystyle{alphaurlinit}
\bibliography{robi}

\newcommand{\etalchar}[1]{$^{#1}$}
\begin{thebibliography}{vdBLL{\etalchar{+}}21}

\bibitem[AGU72]{AGU72}
A.~V. Aho, M.~R. Garey, and J.~D. Ullman.
\newblock The transitive reduction of a directed graph.
\newblock {\em SIAM Journal on Computing}, 1(2):131--137, 1972.
\newblock \href {http://dx.doi.org/10.1137/0201008}
  {\path{doi:10.1137/0201008}}.

\bibitem[AKT20a]{AKT20_b}
A.~Abboud, R.~Krauthgamer, and O.~Trabelsi.
\newblock Cut-equivalent trees are optimal for min-cut queries.
\newblock In {\em 61st {IEEE} Annual Symposium on Foundations of Computer
  Science}, FOCS'20, pages 105--118, 2020.
\newblock \href {http://dx.doi.org/10.1109/FOCS46700.2020.00019}
  {\path{doi:10.1109/FOCS46700.2020.00019}}.

\bibitem[AKT20b]{AKT20}
A.~Abboud, R.~Krauthgamer, and O.~Trabelsi.
\newblock New algorithms and lower bounds for all-pairs max-flow in undirected
  graphs.
\newblock In {\em Proceedings of the Thirty-First Annual ACM-SIAM Symposium on
  Discrete Algorithms}, SODA’20, page 48–61, 2020.
\newblock \href {http://dx.doi.org/10.1137/1.9781611975994.4}
  {\path{doi:10.1137/1.9781611975994.4}}.

\bibitem[AKT21a]{AKT21_focs}
A.~Abboud, R.~Krauthgamer, and O.~Trabelsi.
\newblock {APMF} $<$ {APSP}? {Gomory-Hu} tree for unweighted graphs in
  almost-quadratic time.
\newblock {\em Accepted to FOCS'21}, 2021.

\bibitem[AKT21b]{AKT21_stoc}
A.~Abboud, R.~Krauthgamer, and O.~Trabelsi.
\newblock Subcubic algorithms for {G}omory--{H}u tree in unweighted graphs.
\newblock In {\em Proceedings of the 53rd Annual ACM SIGACT Symposium on Theory
  of Computing}, pages 1725--1737, 2021.
\newblock \href {http://dx.doi.org/10.1145/3406325.3451073}
  {\path{doi:10.1145/3406325.3451073}}.

\bibitem[AW14]{AbboudW14}
A.~Abboud and V.~V. Williams.
\newblock Popular conjectures imply strong lower bounds for dynamic problems.
\newblock In {\em 55th {IEEE} Annual Symposium on Foundations of Computer
  Science}, FOCS'14, pages 434--443, 2014.
\newblock \href {http://dx.doi.org/10.1109/FOCS.2014.53}
  {\path{doi:10.1109/FOCS.2014.53}}.

\bibitem[BHKP07]{BHKP07}
A.~Bhalgat, R.~Hariharan, T.~Kavitha, and D.~Panigrahi.
\newblock An {$\tO(mn)$} {G}omory-{H}u tree construction algorithm for
  unweighted graphs.
\newblock In {\em 39th Annual ACM Symposium on Theory of Computing}, STOC'07,
  pages 605--614, 2007.
\newblock \href {http://dx.doi.org/10.1145/1250790.1250879}
  {\path{doi:10.1145/1250790.1250879}}.

\bibitem[BK15]{BeK15}
A.~A. Bencz{\'{u}}r and D.~R. Karger.
\newblock Randomized approximation schemes for cuts and flows in capacitated
  graphs.
\newblock {\em {SIAM} J. Comput.}, 44(2):290--319, 2015.
\newblock \href {http://dx.doi.org/10.1137/070705970}
  {\path{doi:10.1137/070705970}}.

\bibitem[BL16]{BanLinial16}
A.~Ban and N.~Linial.
\newblock Internal partitions of regular graphs.
\newblock {\em Journal of Graph Theory}, 83(1):5--18, 2016.

\bibitem[BPWZ14]{BPWZ14}
A.~Bj{\"{o}}rklund, R.~Pagh, V.~V. Williams, and U.~Zwick.
\newblock Listing triangles.
\newblock In {\em 41st International Colloquium on Automata, Languages, and
  Programming, {ICALP} 2014}, volume 8572 of {\em Lecture Notes in Computer
  Science}, pages 223--234. Springer, 2014.
\newblock \href {http://dx.doi.org/10.1007/978-3-662-43948-7\_19}
  {\path{doi:10.1007/978-3-662-43948-7\_19}}.

\bibitem[BTV06]{BTV06}
C.~Bazgan, Z.~Tuza, and D.~Vanderpooten.
\newblock The satisfactory partition problem.
\newblock {\em Discrete Applied Mathematics}, 154(8):1236--1245, 2006.

\bibitem[CGL{\etalchar{+}}20]{CGLNPS20}
J.~Chuzhoy, Y.~Gao, J.~Li, D.~Nanongkai, R.~Peng, and T.~Saranurak.
\newblock A deterministic algorithm for balanced cut with applications to
  dynamic connectivity, flows, and beyond.
\newblock In {\em IEEE 61st Annual Symposium on Foundations of Computer Science
  (FOCS)}, pages 1158--1167, 2020.
\newblock \href {http://dx.doi.org/10.1109/FOCS46700.2020.00111}
  {\path{doi:10.1109/FOCS46700.2020.00111}}.

\bibitem[CQ21]{CQ21}
C.~Chekuri and K.~Quanrud.
\newblock Isolating cuts,(bi-)submodularity, and faster algorithms for
  connectivity.
\newblock In {\em 48th International Colloquium on Automata, Languages, and
  Programming (ICALP 2021)}. Schloss Dagstuhl-Leibniz-Zentrum f{\"u}r
  Informatik, 2021.

\bibitem[DG19]{DudekG19}
B.~Dudek and P.~Gawrychowski.
\newblock Computing quartet distance is equivalent to counting 4-cycles.
\newblock In {\em Proceedings of the 51st Annual {ACM} {SIGACT} Symposium on
  Theory of Computing}, STOC'19, pages 733--743. {ACM}, 2019.
\newblock \href {http://dx.doi.org/10.1145/3313276.3316390}
  {\path{doi:10.1145/3313276.3316390}}.

\bibitem[DKW15]{DKW15}
M.~Dinitz, R.~Krauthgamer, and T.~Wagner.
\newblock Towards resistance sparsifiers.
\newblock In {\em Approximation, Randomization, and Combinatorial Optimization.
  Algorithms and Techniques (APPROX/RANDOM 2015)}, volume~40 of {\em Leibniz
  International Proceedings in Informatics (LIPIcs)}, pages 738--755. Schloss
  Dagstuhl--Leibniz-Zentrum fuer Informatik, 2015.
\newblock \href {http://dx.doi.org/10.4230/LIPIcs.APPROX-RANDOM.2015.738}
  {\path{doi:10.4230/LIPIcs.APPROX-RANDOM.2015.738}}.

\bibitem[FKN{\etalchar{+}}21]{Ferber21}
A.~Ferber, M.~Kwan, B.~Narayanan, A.~Sah, and M.~Sawhney.
\newblock Friendly bisections of random graphs.
\newblock {\em arXiv preprint arXiv:2105.13337}, 2021.

\bibitem[Gab95]{Gabow95}
H.~N. Gabow.
\newblock A matroid approach to finding edge connectivity and packing
  arborescences.
\newblock {\em Journal of Computer and System Sciences}, 50(2):259--273, 1995.

\bibitem[GH61]{GH61}
R.~E. Gomory and T.~C. Hu.
\newblock Multi-terminal network flows.
\newblock {\em Journal of the Society for Industrial and Applied Mathematics},
  9:551--570, 1961.
\newblock Available from: \url{http://www.jstor.org/stable/2098881}.

\bibitem[GK00]{GerberKobler00}
M.~U. Gerber and D.~Kobler.
\newblock Algorithmic approach to the satisfactory graph partitioning problem.
\newblock {\em European Journal of Operational Research}, 125(2):283--291,
  2000.

\bibitem[GMT20]{GMT20}
A.~Gaikwad, S.~Maity, and S.~K. Tripathi.
\newblock The satisfactory partition problem.
\newblock {\em arXiv preprint arXiv:2007.14339}, 2020.

\bibitem[Gup01]{Gupta01}
A.~Gupta.
\newblock Steiner points in tree metrics don't (really) help.
\newblock In {\em Proceedings of the 12th Annual Symposium on Discrete
  Algorithms}, pages 220--227, 2001.
\newblock Available from:
  \url{http://dl.acm.org/citation.cfm?id=365411.365448}.

\bibitem[HKNR98]{HKNR98}
T.~Hagerup, J.~Katajainen, N.~Nishimura, and P.~Ragde.
\newblock Characterizing multiterminal flow networks and computing flows in
  networks of small treewidth.
\newblock {\em J. Comput. Syst. Sci.}, 57:366--375, 1998.
\newblock \href {http://dx.doi.org/10.1006/jcss.1998.1592}
  {\path{doi:10.1006/jcss.1998.1592}}.

\bibitem[JRR95]{junger1995traveling}
M.~J{\"u}nger, G.~Reinelt, and G.~Rinaldi.
\newblock The traveling salesman problem.
\newblock {\em Handbooks in Operations Research and Management Science},
  7:225--330, 1995.

\bibitem[Kar99]{Karger99}
D.~R. Karger.
\newblock Random sampling in cut, flow, and network design problems.
\newblock {\em Mathematics of Operations Research}, 24(2):383--413, 1999.
\newblock \href {http://dx.doi.org/10.1287/moor.24.2.383}
  {\path{doi:10.1287/moor.24.2.383}}.

\bibitem[KKR12]{KKR12}
K.-i. Kawarabayashi, Y.~Kobayashi, and B.~Reed.
\newblock The disjoint paths problem in quadratic time.
\newblock {\em Journal of Combinatorial Theory, Series B}, 102(2):424--435,
  2012.

\bibitem[KNZ14]{KNZ14}
R.~Krauthgamer, H.~Nguyen, and T.~Zondiner.
\newblock Preserving terminal distances using minors.
\newblock {\em SIAM Journal on Discrete Mathematics}, 28(1):127--141, 2014.
\newblock \href {http://dx.doi.org/10.1137/120888843}
  {\path{doi:10.1137/120888843}}.

\bibitem[KPP16]{KPP16}
T.~Kopelowitz, S.~Pettie, and E.~Porat.
\newblock Higher lower bounds from the {3SUM} conjecture.
\newblock In {\em Proceedings of the twenty-seventh annual ACM-SIAM symposium
  on Discrete algorithms}, pages 1272--1287. SIAM, 2016.

\bibitem[KT19]{KThorup19}
K.~Kawarabayashi and M.~Thorup.
\newblock Deterministic edge connectivity in near-linear time.
\newblock {\em J. {ACM}}, 66(1):4:1--4:50, 2019.
\newblock \href {http://dx.doi.org/10.1145/3274663}
  {\path{doi:10.1145/3274663}}.

\bibitem[LNP{\etalchar{+}}21]{LNPSS21}
J.~Li, D.~Nanongkai, D.~Panigrahi, T.~Saranurak, and S.~Yingchareonthawornchai.
\newblock Vertex connectivity in poly-logarithmic max-flows.
\newblock In {\em Proceedings of the 53rd Annual ACM SIGACT Symposium on Theory
  of Computing}, pages 317--329, 2021.

\bibitem[LP20]{LP20}
J.~Li and D.~Panigrahi.
\newblock Deterministic min-cut in poly-logarithmic max-flows.
\newblock In {\em 61st {IEEE} Annual Symposium on Foundations of Computer
  Science}, FOCS'20, pages 85--92, 2020.
\newblock \href {http://dx.doi.org/10.1109/FOCS46700.2020.00017}
  {\path{doi:10.1109/FOCS46700.2020.00017}}.

\bibitem[LP21]{LP21}
J.~Li and D.~Panigrahi.
\newblock Approximate {G}omory-{H}u tree is faster than $n-1$ max-flows.
\newblock In {\em 53rd Annual {ACM} {SIGACT} Symposium on Theory of Computing},
  STOC'21, pages 1738--1748. {ACM}, 2021.
\newblock \href {http://dx.doi.org/10.1145/3406325.3451112}
  {\path{doi:10.1145/3406325.3451112}}.

\bibitem[LPS21]{LPS21_focs}
J.~Li, D.~Panigrahi, and T.~Saranurak.
\newblock A nearly optimal all-pairs min-cuts algorithm in simple graphs.
\newblock {\em Accepted to FOCS'21}, 2021.

\bibitem[LS21]{LS21}
J.~Li and T.~Saranurak.
\newblock Deterministic weighted expander decomposition in almost-linear time.
\newblock {\em CoRR}, abs/2106.01567, 2021.
\newblock \href {http://arxiv.org/abs/2106.01567} {\path{arXiv:2106.01567}}.

\bibitem[LSDK18]{LSDK18}
Y.~Liu, T.~Safavi, A.~Dighe, and D.~Koutra.
\newblock Graph summarization methods and applications: A survey.
\newblock {\em ACM Comput. Surv.}, 51(3), 2018.
\newblock \href {http://dx.doi.org/10.1145/3186727}
  {\path{doi:10.1145/3186727}}.

\bibitem[MN21]{MN21}
S.~Mukhopadhyay and D.~Nanongkai.
\newblock A note on isolating cut lemma for submodular function minimization.
\newblock {\em arXiv preprint arXiv:2103.15724}, 2021.

\bibitem[Moi09]{Moitra09}
A.~Moitra.
\newblock Approximation algorithms for multicommodity-type problems with
  guarantees independent of the graph size.
\newblock In {\em Proceedings of the 50th Annual IEEE Symposium on Foundations
  of Computer Science}, FOCS'09, page 3–12. IEEE Computer Society, 2009.
\newblock \href {http://dx.doi.org/10.1109/FOCS.2009.28}
  {\path{doi:10.1109/FOCS.2009.28}}.

\bibitem[Mor00]{Morris00}
S.~Morris.
\newblock Contagion.
\newblock {\em The Review of Economic Studies}, 67(1):57--78, 2000.

\bibitem[NI92]{NI92}
H.~Nagamochi and T.~Ibaraki.
\newblock Linear time algorithms for finding $k$-edge connected and $k$-node
  connected spanning subgraphs.
\newblock {\em Algorithmica}, 7:583--596, 1992.
\newblock \href {http://dx.doi.org/10.1007/BF01758778}
  {\path{doi:10.1007/BF01758778}}.

\bibitem[Pat10]{Pat10}
M.~Patrascu.
\newblock Towards polynomial lower bounds for dynamic problems.
\newblock In {\em Proceedings of the forty-second ACM symposium on Theory of
  computing}, pages 603--610, 2010.

\bibitem[PS89]{PS89}
D.~Peleg and A.~A. Sch{\"a}ffer.
\newblock Graph spanners.
\newblock {\em J. Graph Theory}, 13(1):99--116, 1989.
\newblock \href {http://dx.doi.org/10.1002/jgt.3190130114}
  {\path{doi:10.1002/jgt.3190130114}}.

\bibitem[R\"02]{Raecke02}
H.~R\"{a}cke.
\newblock Minimizing congestion in general networks.
\newblock In {\em 43rd Symposium on Foundations of Computer Science}, pages
  43--52. IEEE Computer Society, 2002.
\newblock \href {http://dx.doi.org/10.1109/SFCS.2002.1181881}
  {\path{doi:10.1109/SFCS.2002.1181881}}.

\bibitem[ST83]{SleatorT83}
D.~D. Sleator and R.~E. Tarjan.
\newblock A data structure for dynamic trees.
\newblock {\em J. Comput. Syst. Sci.}, 26(3):362--391, 1983.
\newblock \href {http://dx.doi.org/10.1016/0022-0000(83)90006-5}
  {\path{doi:10.1016/0022-0000(83)90006-5}}.

\bibitem[ST11]{ST11}
D.~A. Spielman and S.-H. Teng.
\newblock Spectral sparsification of graphs.
\newblock {\em SIAM J. Comput.}, 40(4):981--1025, 2011.
\newblock \href {http://dx.doi.org/10.1137/08074489X}
  {\path{doi:10.1137/08074489X}}.

\bibitem[SW97]{SW97}
M.~Stoer and F.~Wagner.
\newblock A simple min-cut algorithm.
\newblock {\em Journal of the ACM (JACM)}, 44(4):585--591, 1997.

\bibitem[SW19]{SW19}
T.~Saranurak and D.~Wang.
\newblock Expander decomposition and pruning: Faster, stronger, and simpler.
\newblock In {\em Proceedings of the 30th Annual {ACM-SIAM} Symposium on
  Discrete Algorithms}, SODA'19, pages 2616--2635, 2019.
\newblock \href {http://dx.doi.org/10.1137/1.9781611975482.162}
  {\path{doi:10.1137/1.9781611975482.162}}.

\bibitem[vdBLL{\etalchar{+}}21]{linearflow21}
J.~van~den Brand, Y.~T. Lee, Y.~P. Liu, T.~Saranurak, A.~Sidford, Z.~Song, and
  D.~Wang.
\newblock Minimum cost flows, {MDPs}, and $\ell_1$-regression in nearly linear
  time for dense instances.
\newblock In {\em 53rd Annual ACM SIGACT Symposium on Theory of Computing},
  STOC'21, page 859–869. ACM, 2021.
\newblock \href {http://dx.doi.org/10.1145/3406325.3451108}
  {\path{doi:10.1145/3406325.3451108}}.

\bibitem[YZ97]{YZ97}
R.~Yuster and U.~Zwick.
\newblock Finding even cycles even faster.
\newblock {\em SIAM Journal on Discrete Mathematics}, 10(2):209--222, 1997.

\bibitem[Zha21]{Zhang21}
T.~Zhang.
\newblock Faster cut-equivalent trees in simple graphs.
\newblock {\em arXiv preprint arXiv:2106.03305}, 2021.

\end{thebibliography}
}

\end{document}